\documentclass[12pt]{article}
\usepackage[utf8]{inputenc}
\usepackage[letterpaper]{geometry}
\usepackage{color}
\usepackage{float}
\usepackage{mathrsfs}
\usepackage{url}
\usepackage{bm}
\usepackage{hepnames}
\usepackage{amsmath}
\usepackage{amssymb}
\usepackage{graphicx}
\usepackage{setspace}
\usepackage{esint}
\usepackage[authoryear,longnamesfirst]{natbib}
\usepackage[bookmarks=true,bookmarksnumbered=false,bookmarksopen=true,bookmarksopenlevel=1,
 breaklinks=true,pdfborder={0 0 1},backref=false,colorlinks=true]
 {hyperref}
\hypersetup{
 citecolor=blue,urlcolor=blue,linkcolor=blue}

\makeatletter

\providecommand{\tabularnewline}{\\}

\usepackage{color}
\usepackage{url}
\usepackage{bm}
\usepackage{hepnames}

\providecommand{\tabularnewline}{\\}

\usepackage{amsfonts}\usepackage{mathrsfs}\usepackage{bm}\usepackage{dsfont}\usepackage{tabularx}
\usepackage{lipsum}\usepackage{ltablex}\usepackage{array}\usepackage{makecell}\usepackage{caption}\usepackage{multirow}\usepackage{wasysym}\usepackage{soul}\usepackage{rotating}
\usepackage[hang,flushmargin]{footmisc}
\usepackage{placeins}
\usepackage{algorithm}
\usepackage{algpseudocodex}
\normalbaselines
\providecommand{\U}[1]{\protect\rule{.1in}{.1in}}
\newtheorem{theorem}{Theorem}[section]
\newtheorem{assumption}{Assumption}[section]
\newtheorem{corollary}[theorem]{Corollary}\newtheorem{lemma}[theorem]{Lemma}\newtheorem{proposition}[theorem]{Proposition}\newtheorem{remark}{Remark}[section]
\newenvironment{proof}[1][Proof]{\noindent\if\relax\detokenize{#1}\relax\else\textbf{#1.} \fi}{\ \rule{0.5em}{0.5em}}

\numberwithin{equation}{section}

\newcommand{\Keywords}[1]{\par\noindent {\small{\em Keywords\/}: #1}}
\newcommand{\Ee}{\mathbb{E}}
\newcommand{\ind}{\mathds{1}}
\newcommand{\R}{\mathbb{R}}
\newcommand{\dCov}{\operatorname{dCov}}
\renewcommand{\Pp}{\Pr}
\newcommand{\Exp}{\operatorname{Exp}}

\makeatother

\begin{document}
\title{\textbf{Testing Tail-Index Homogeneity via Exceedance--Covariate
Independence}}
\author{Thomas T. Yang\thanks{Email: \protect\protect\protect\protect\protect\protect\protect\protect\protect\protect\protect\protect\protect\protect\protect\protect\protect\protect\protect\protect\protect\protect\protect\protect\protect\protect\protect\protect\protect\protect\protect\protect\protect\protect\protect\protect\protect\protect\protect\protect\protect\href{mailto:tao.yang@anu.edu.au}{tao.yang@anu.edu.au}.
Replication Package: \protect\protect\protect\protect\protect\protect\protect\protect\protect\protect\protect\protect\protect\protect\href{http://github.com/youngtigers/tail_index_test}{github.com/youngtigers/tail\_index\_test}}\\
 Australian National University}
\date{\today}

\maketitle
 
\begin{abstract}
We propose a test of whether the tail index of a heavy-tailed outcome
is homogeneous across observed covariates. The key observation is
that, under tail-index homogeneity, normalized log exceedances are
asymptotically independent of the covariates even when tail frequencies
vary with them. This leads to a simple-to-implement permutation test
based on distance covariance, without estimating conditional tail
indices or specifying the form of tail-index heterogeneity. For the
global and locally balanced versions, we establish validity for order-statistic
thresholds and give conditions for consistency and local power. Local
balancing selects extremes within two covariate groups and can improve
power when tail selection concentrates the covariates. Simulations
illustrate these gains. In French motor insurance data, upper conditional
quantiles vary with risk characteristics, but the tail-index tests
do not reject a common tail index.

\vspace{4mm}
 \medskip{}
 \Keywords{Tail index, Extreme value theory, Distance covariance,
Independence test} 
\end{abstract}
\newpage{} \setstretch{1.7}

\section{Introduction}

\label{sec:introduction}

The tail index governs tail-probability decay, the existence of higher-order
moments, and extreme-quantile extrapolation \citep{deHaanFerreira2006}.
Classical estimators such as \citet{Hill1975} treat it as a scalar
feature of a homogeneous population. In finance, environmental studies,
and insurance, however, tail thickness may depend on observed characteristics.
We ask whether covariates change the frequency or scale of extremes,
or also their tail-decay rate. Our application to French motor third-party
liability claims illustrates this distinction: BonusMalus is positively
associated with upper conditional claim quantiles, and vehicle power
is significant at the 0.95 quantile, yet the tests do not reject a
common tail index. Covariate effects on large claims therefore need
not imply differences in tail thickness.

Tail-index homogeneity is also an important assumption behind standard
methods for extreme conditional quantiles. The heavy-tailed specialization
of the extremal quantile-regression theory of \citet{Chernozhukov2005}
and the extrapolation framework of \citet{WangLiHe2012} impose a
common tail index. If the tail index varies, these extrapolations
are misspecified, and the error grows with the extrapolation range.
For example, imposing a tail exponent that is too large for a high-risk
group can severely understate its extreme losses and the capital needed
to cover them. A test of homogeneity is therefore a natural diagnostic
before applying them.

To formalize the tail behavior, consider the conditional regular-variation
model 
\[
\bar{F}(y\mid x):=\Pr(Y>y\mid X=x)=\kappa(x)y^{-\alpha(x)}L(y;x),\qquad y\ \text{large enough},
\]
where $\alpha(x)>0$, $\kappa(x)>0$, $L(\cdot;x)$ is slowly varying
in its first argument, and normalized so that $L(y;x)\to1$ as $y\to\infty$.
Without normalization, $\tilde{L}(y;x)=\kappa(x)L(y;x)$ is a general
slow varying function. We test 
\[
H_{0}:\ \alpha(x)=\alpha_{0}\quad\text{for almost every }x
\]
against general covariate dependence of $\alpha(x)$. The null allows
the leading tail-frequency term $\kappa(x)$ and second-order tail
behavior to vary with $x$.

The core idea is that, under $H_{0}$, the log normalized exceedance
$\log(Y/w)$ is asymptotically independent of $X$ conditional on
$Y>w$. This gives a way to test homogeneity without estimating tail
indices.

To see why, for a high threshold $w$, define $Z:=Y/w$. Then, for
$r\geq1$, 
\[
\Pr(Z>r\mid X=x,Y>w)=r^{-\alpha(x)}\frac{L(rw;x)}{L(w;x)}.
\]
The factor $\kappa(x)$ cancels from this ratio. Under a uniform regular-variation
condition, the slowly varying ratio tends to one uniformly in $x$
for each fixed $r$ as $w\to\infty$. Hence, under $H_{0}$, 
\[
\Pr(Z>r\mid X=x,Y>w)\longrightarrow r^{-\alpha_{0}},\qquad r\geq1,
\]
which no longer depends on $x$. The same independence restriction
holds after taking logarithms.

We test this independence using distance covariance \citep{SzekelyRizzoBakirov2007},
which detects nonlinear, multivariate dependence, and calibrate by
permutation. The global test is consistent when its dependence signal
is large relative to sampling noise. Under fixed alternatives, however,
tail selection favors smaller tail indices \citep{Yang2026} and can
weaken this signal. Local balancing can protect power by selecting
extremes within two covariate groups before pooling. We also give
a sample-split adaptive version: an independent training subsample
learns whether an observation that already exceeds a high threshold
also exceeds a higher one, and a simple second-order probit score
uses this signal to form two approximately equal groups in the inference
subsample.

Existing tests of conditional tail-index homogeneity usually start
with a model for the conditional tail or extreme quantiles. They then
test a restriction within that model. For example, \citet{WangTsai2009}
model the Pareto index parametrically, \citet{WangLi2013} develop
a power-transformed quantile-regression framework, \citet{KinsvaterFried2017}
consider regression and rank-based procedures for conditional tail
behavior, and \citet{MomokiYoshida2024} study tests in structured
varying-coefficient tail-index models. A related literature examines
homogeneity across other settings. \citet{EinmahlDeHaanZhou2016}
allow the frequency of extremes to vary under a common extreme-value
index, \citet{deHaanZhou2021} test variation over a deterministic
design, \citet{EinmahlYangZhou2021} study constancy across angular
regions in multivariate regular variation, and \citet{ChenZhou2026}
study equality of extreme-value indices across many dimensions. These
approaches are useful when the form of the tail-index difference is
itself of interest. The radial-angular factorization used by \citet{WanDavis2019}
is mathematically related to our conditional-exceedance characterization,
but their objective is threshold selection.

Our contribution is to use conditional-exceedance independence as
an omnibus specification device without estimating tail indices. The
main theoretical difficulty is that independence holds only asymptotically
under the null. We account for finite-threshold dependence, establish
uniform validity that justifies order-statistic thresholds, and give
consistency and $K_{n}^{-1/2}$ local-power results for the global
and locally balanced tests. A short conditioning argument gives null
validity for a two-group partition learned on an independent training
subsample. The fixed-alternative limit also carries over when the
learned groups stabilize. This treatment connects to the approximate-randomization
literature, including \citet{ChungRomano2013}, \citet{CanayRomanoShaikh2017}
and \citet{CanayKamat2018}.

Simulations show competitive performance and gains from balancing
under strong tail concentration. In the insurance application, the
non-rejections by our test and the Wang--Li and Kinsvater--Fried
procedures motivate considering common-tail-index quantile methods,
although they do not establish homogeneity.

The remainder of the paper is organized as follows. Section~\ref{sec:model}
presents the model and the tail-independence characterization. Section~\ref{sec:test}
constructs the distance-covariance test. Section~\ref{sec:null-validity}
establishes permutation validity, and Section~\ref{sec:power} studies
power. Section~\ref{sec:simulations} reports the Monte Carlo experiments.
Section~\ref{sec:application} presents the insurance application,
and the final section concludes. Additional discussion and all technical
proofs are collected in the appendices.

\section{Model and Tail-Independence Characterization}

\label{sec:model}

Let $\{(Y_{i},X_{i}):1\leq i\leq n\}$ be independent copies of $(Y,X)$,
where $Y>0$. For the continuous-covariate power results below, $X\in\R^{d}$
has density $f_{X}$ with support $\mathcal{X}$. The null permutation
arguments also apply to discrete covariates, with sums replacing covariate
integrals. Suppose 
\begin{equation}
\bar{F}(y\mid x)=\kappa(x)y^{-\alpha(x)}L(y;x),\qquad y\geq y_{0},\label{eq:tail-model}
\end{equation}
with $\kappa(x)>0$, $\alpha(x)$ bounded away from zero and infinity
on the relevant support, and $L(\cdot;x)$ positive, slowly varying
in its first argument, and normalized so that $L(y;x)\to1$ for each
fixed $x$. We test 
\begin{align*}
H_{0} & :\ \alpha(x)=\alpha_{0}\quad P_{X}\text{-almost surely},\\
H_{1} & :\ \alpha(x)\ \text{is not }P_{X}\text{-almost surely constant}.
\end{align*}
The null allows $\kappa(x)$ to vary freely subject to regularity
conditions and permits covariate-dependent second-order behavior through
$L(y;x)$. For Pareto-type tails, the usual extreme-value index is
$\gamma(x)=1/\alpha(x)$, so homogeneity of $\alpha(x)$ and $\gamma(x)$
is equivalent.

We first develop the theory for a deterministic threshold $w_{n}\to\infty$
and define 
\[
p_{n}:=\Pp(Y>w_{n}),\qquad K_{n}:=np_{n},
\]
where $p_{n}\to0$ and $K_{n}\to\infty$. For this fixed-threshold
construction, $K_{n}$ is the expected number of exceedances. The
realized number is 
\[
\mathcal{I}_{n}:=\{i:Y_{i}>w_{n}\},\qquad N_{n}:=|\mathcal{I}_{n}|.
\]
For the order-statistic construction below, $K_{n}$ instead denotes
the chosen number of retained observations. With a fixed threshold,
the selected observations are iid conditional on $N_{n}$. This helps
in the main null calculation. Section~\ref{sec:null-validity} then
proves permutation validity for a range of increasing thresholds and
for the order-statistic threshold used in applications. For $i\in\mathcal{I}_{n}$,
define 
\[
Z_{i,n}:=Y_{i}/w_{n},\qquad Z_{i,n}^{*}:=\log Z_{i,n}.
\]
We use the log exceedance rather than the raw normalized exceedance.
The monotone transformation leaves the independence restriction unchanged,
but it replaces a limiting Pareto variable, whose first moment is
infinite when $\alpha_{0}\leq1$, by an $\Exp(\alpha_{0})$ variable
with moments of all orders.

At finite thresholds, departures from the limiting Pareto form can
create covariate dependence even under the null. The following uniform
Hall expansion, \citet{Hall1982}, uses $D(x)$ and $\beta(x)$ to
describe the magnitude and decay rate of these departures. Together
with the threshold-rate conditions below, it controls this bias so
that it does not cause asymptotic over-rejection.

\begin{assumption}[Uniform Hall expansion]\label{ass:hall} There
exist a constant $\underline{\beta}>0$, a measurable function $\beta(x)\geq\underline{\beta}$,
a bounded measurable function $D(x)$, and a remainder $r(y,x)$ such
that 
\[
L(y;x)=1+D(x)y^{-\beta(x)}+r(y,x),\qquad y\geq y_{0},
\]
with $\sup_{x\in\mathcal{X}}|D(x)|<\infty$. The remainder is differentiable
in $y$ for sufficiently large $y$ and satisfies 
\begin{equation}
\sup_{x\in\mathcal{X}}\sup_{u\geq y}u^{\beta(x)}\left\{ |r(u,x)|+u|\partial_{u}r(u,x)|\right\} \longrightarrow0.\label{eq:hall-remainder}
\end{equation}
\end{assumption}

Write 
\[
a_{n}:=w_{n}^{-\underline{\beta}},\qquad\lambda_{n}(x):=w_{n}^{-[\beta(x)-\underline{\beta}]}.
\]
Then $w_{n}^{-\beta(x)}=a_{n}\lambda_{n}(x)$. Because $L(y;x)$ is
normalized to converge to one, $\kappa(x)$ is the leading tail-frequency
coefficient and remains unrestricted across $x$. The lower bound
$\underline{\beta}$ provides a common envelope for the second-order
bias. No common value of $\beta(x)$ is imposed.

Under $H_{0}$, for each fixed $z\geq0$, 
\[
\Pp(Z_{i,n}^{*}>z\mid X_{i}=x,Y_{i}>w_{n})=e^{-\alpha_{0}z}\frac{L(w_{n}e^{z};x)}{L(w_{n};x)}.
\]
Applying Assumption~\ref{ass:hall} gives, uniformly in $x$, 
\begin{align}
 & \Pp(Z_{i,n}^{*}>z\mid X_{i}=x,Y_{i}>w_{n})\nonumber \\
 & \quad=e^{-\alpha_{0}z}\left[1+a_{n}\lambda_{n}(x)D(x)\{e^{-\beta(x)z}-1\}+o(a_{n})\right].\label{eq:hall-exceedance}
\end{align}
The uniform remainder follows from \eqref{eq:hall-remainder} and
$w_{n}^{-\beta(x)}\leq w_{n}^{-\underline{\beta}}=a_{n}$. Hence the
conditional distribution of $Z_{i,n}^{*}$ approaches an $\operatorname{Exp}(\alpha_{0})$
distribution that does not depend on $x$. Thus $Z_{i,n}^{*}$ and
$X_{i}$ are asymptotically independent within the selected tail sample.
We therefore use the independence restriction as a direct way to test
$H_{0}$ without estimating $\alpha(x)$.

\section{Test Construction}

\label{sec:test}

The procedure uses only the exceedance observations. Throughout this
section Euclidean distance is used on the support of $X$. We begin
with the deterministic threshold $w_{n}$. Because the original sample
is iid, conditional on $N_{n}$ the selected pairs $(Z_{j}^{*,e},X_{j}^{e})$,
$j=1,\ldots,N_{n}$, are iid from the joint density 
\[
f_{n}(x,z):=f_{X,n}(x)f_{n}(z\mid x),\qquad f_{X,n}(x):=\frac{f_{X}(x)\Pr(Y>w_{n}\mid X=x)}{\Pr(Y>w_{n})},
\]
where $f_{n}(z\mid x)$ is the density of $\log(Y/w_{n})$ conditional
on $X=x,Y>w_{n}$. Under $H_{0}$, \eqref{eq:hall-exceedance} makes
$f_{n}(z\mid x)$ approach the same exponential density for every
$x$. We exploit this implication directly with an independence test.
We first define the distance-covariance statistic and permutation
calibration, then describe the global and locally balanced ways of
selecting the exceedances. Sections~\ref{sec:null-validity} and~\ref{sec:power}
establish their null validity and power properties.

\subsection{Distance covariance statistic}

Let $D^{Z}$ and $D^{X}$ denote the pairwise distance matrices in
the selected tail sample: 
\[
D_{j\ell}^{Z}:=\left|Z_{j}^{*,e}-Z_{\ell}^{*,e}\right|,\qquad D_{j\ell}^{X}:=\left\Vert X_{j}^{e}-X_{\ell}^{e}\right\Vert ,\qquad j,\ell=1,\ldots,N_{n}.
\]
Let $H_{N}:=I_{N_{n}}-N_{n}^{-1}\bm{1}\bm{1}'$ and define the double-centered
distance matrices 
\[
A:=H_{N}D^{Z}H_{N},\qquad B:=H_{N}D^{X}H_{N}.
\]
The sample squared distance covariance is 
\begin{equation}
\widehat{\Delta}_{n}:=\widehat{\dCov}_{n}^{\,2}(Z^{*},X)=\frac{1}{N_{n}^{2}}\sum_{j=1}^{N_{n}}\sum_{\ell=1}^{N_{n}}A_{j\ell}B_{j\ell}=\frac{1}{N_{n}^{2}}\operatorname{tr}(AB).\label{eq:dcov}
\end{equation}
For the population analysis, let $\phi_{n}(s,t)$ be the joint characteristic
function of $(X,Z^{*})$ under $f_{n}$ and let $\phi_{X,n}(s)$ and
$\phi_{Z,n}(t)$ be its marginals. Define $c_{q}:=\pi^{(1+q)/2}/\Gamma((1+q)/2)$.
The population squared distance covariance is 
\begin{equation}
\Delta_{n}:=\frac{1}{c_{d}c_{1}}\int_{\R^{d}}\int_{\R}\frac{|\phi_{n}(s,t)-\phi_{X,n}(s)\phi_{Z,n}(t)|^{2}}{\|s\|^{d+1}|t|^{2}}\,dt\,ds.\label{eq:population-dcov}
\end{equation}
Under the usual finite-first-moment conditions, population distance
covariance is zero if and only if $Z^{*}$ and $X$ are independent
\citep{SzekelyRizzoBakirov2007}. Large values of \eqref{eq:dcov}
therefore provide evidence against tail-index homogeneity.

\subsection{Permutation calibration}

We calibrate the test by permuting the observed log exceedances across
the covariates in the selected tail sample. Let $\Pi_{N_{n}}$ denote
the set of permutations of $\{1,\ldots,N_{n}\}$. For $\pi\in\Pi_{N_{n}}$,
recompute \eqref{eq:dcov} from the pairs $(Z_{\pi(j)}^{*,e},X_{j}^{e})$,
and denote the resulting statistic by $\widehat{\Delta}_{n}^{\pi}$.
For the theoretical statements below, assume the permutation statistics
are almost surely distinct. This only avoids boundary randomization.
Let $c_{n,1-\eta}^{\pi}$ be the conditional $(1-\eta)$ quantile
of $\{\widehat{\Delta}_{n}^{\pi}:\pi\in\Pi_{N_{n}}\}$. The permutation
test rejects when 
\begin{equation}
\widehat{\Delta}_{n}>c_{n,1-\eta}^{\pi}.\label{eq:permutation-rejection}
\end{equation}

Under exact independence, permuting the log exceedances leaves the
joint null distribution invariant while preserving both marginal samples.
Section~\ref{sec:null-validity} justifies this calibration under
the approximate independence implied by the Hall expansion.

\subsection{Global order-statistic threshold}

For the global version, write $Y_{1:n}\leq\cdots\leq Y_{n:n}$ for
the order statistics. Given $K_{n}$ tail sample observations, set
\[
\widehat{w}_{n}:=Y_{n-K_{n}:n},
\]
and retain the $K_{n}$ observations above this threshold. The test
uses their log exceedances $Z_{i,n}^{*}:=\log(Y_{i}/\widehat{w}_{n})$
in \eqref{eq:dcov}.

For any two observations selected by the global threshold,

\begin{equation}
\left|\log\left(\frac{Y_{i}}{\widehat{w}_{n}}\right)-\log\left(\frac{Y_{j}}{\widehat{w}_{n}}\right)\right|=|\log Y_{i}-\log Y_{j}|.\label{eq:orderstat-distance-cancel}
\end{equation}
Thus, after selecting the upper $K_{n}$ observations, the numerical
value of the global threshold does not enter the pairwise distances.

\subsection{Locally balanced tail selection}

Under tail-index heterogeneity, a common high threshold increasingly
selects observations from covariate regions with smaller tail indices.
The resulting loss of covariate variation can weaken the global test,
see Section \ref{sec:power-design} for details. The purpose of local
balancing is therefore simple: split the sample into two covariate
groups that, under the alternative, tend to have different tail indices,
and retain upper observations within each group. Ideally, one group
contains observations with relatively larger $\alpha(x)$ and the
other observations with relatively smaller $\alpha(x)$. The goal
is not to estimate $\alpha(x)$ itself, but to preserve a contrast
in tail decay across the retained sample.

Let $\mathcal{X}_{n}:=\sigma(X_{1},\ldots,X_{n})$ and let $\mathcal{I}_{n,1}$
and $\mathcal{I}_{n,2}$ be an $\mathcal{X}_{n}$-measurable partition
of the observation indices. Write $N_{n,j}:=|\mathcal{I}_{n,j}|$
for $j=1,2$. Allocate the total number of retained observations $K_{n}$
approximately equally across the two groups, with $K_{n,1}+K_{n,2}=K_{n}$.
Within group $j$, write the corresponding outcomes in increasing
order as 
\[
Y_{j,1:N_{n,j}}\leq\cdots\leq Y_{j,N_{n,j}:N_{n,j}},
\]
and define the local threshold 
\[
\widehat{w}_{n,j}:=Y_{j,N_{n,j}-K_{n,j}:N_{n,j}}.
\]
Thus exactly the largest $K_{n,j}$ observations in group $j$ are
retained. For a retained observation $i$ in group $j$, define 
\begin{equation}
Z_{i,n}^{\mathrm{loc}}:=\log\left(\frac{Y_{i}}{\widehat{w}_{n,j}}\right).\label{eq:local-window-log-exceedance}
\end{equation}
The retained pairs from the two groups are pooled, and the dCov statistic
is calibrated by globally permuting the pooled $Z_{i,n}^{\mathrm{loc}}$
values across the retained covariates. Section~\ref{sec:adaptive-localization}
explains how we construct the two groups without estimating conditional
tail indices.

\subsection{Learning the two groups}

\label{sec:adaptive-localization}

We use an independent training subsample to learn a score that ranks
covariates by relative tail persistence. Randomly split the observations
into training and inference subsamples, $\mathcal{S}_{n}^{\mathrm{tr}}$
and $\mathcal{S}_{n}^{\mathrm{inf}}$, with $|\mathcal{S}_{n}^{\mathrm{tr}}|/n\to\rho\in(0,1)$.
Let $m_{n}:=|\mathcal{S}_{n}^{\mathrm{tr}}|$ and choose integers
$M_{1n}>M_{2n}$. From the training outcomes, define two order-statistic
thresholds 
\[
\widehat{w}_{1n}:=Y_{m_{n}-M_{1n}:m_{n}}^{\mathrm{tr}},\qquad\widehat{w}_{2n}:=Y_{m_{n}-M_{2n}:m_{n}}^{\mathrm{tr}},\qquad\widehat{w}_{1n}<\widehat{w}_{2n}.
\]
Among the $M_{1n}$ training observations above $\widehat{w}_{1n}$,
set 
\[
D_{i}:=\ind\{Y_{i}>\widehat{w}_{2n}\}.
\]
For deterministic $w_{1}<w_{2}$, 
\[
\Pr(Y>w_{2}\mid Y>w_{1},X=x)=\left(\frac{w_{2}}{w_{1}}\right)^{-\alpha(x)}\frac{L(w_{2};x)}{L(w_{1};x)}.
\]
The leading tail-frequency factor $\kappa(x)$ therefore cancels.
Since $w_{2}/w_{1}>1$, a larger conditional persistence probability
corresponds, up to the slowly varying term, to a smaller tail index
and hence a heavier tail.

We fit a probit model for $D_{i}$ using a generic second-order feature
vector $q(X_{i})$. For $x=(x_{1},\ldots,x_{d})'$, $q(x)$ contains
all linear terms, all squares, and all pairwise interactions, 
\[
q(x)=\bigl(x_{1},\ldots,x_{d},x_{1}^{2},\ldots,x_{d}^{2},\{x_{r}x_{s}:r<s\}\bigr)^{\prime}.
\]
Writing $\Phi$ for the standard normal distribution function, the
working model is 
\[
\Pr(D_{i}=1\mid X_{i}=x,\ Y_{i}>\widehat{w}_{1n})\approx\Phi\{\beta_{0}+q(x)'\beta\}.
\]
This probit is only a ranking device. It is not a model or estimator
of $\alpha(x)$.

Apply the fitted score $\widehat{p}_{n}(X_{i})$ to the inference
subsample and split observations at its median. The lower-score half
is intended to contain relatively larger $\alpha(x)$ values, while
the upper-score half is intended to contain relatively smaller $\alpha(x)$
values. We then apply the locally balanced construction above, allocating
the retained tail observations approximately equally across the two
groups. Conditional on the training subsample, the fitted score is
fixed and the two groups are measurable functions of the inference
covariates. Hence correct specification of the probit model is not
required for size validity. Section~\ref{sec:null-validity} gives
the conditioning argument.

\subsection{Implementation and tuning}

\label{sec:implementation-tuning}

Our default adaptive implementation uses a training fraction $\rho=1/5$.
Within the training sample, we set 
\[
M_{1n}=\lfloor0.30m_{n}\rfloor,\qquad M_{2n}=\lfloor0.25M_{1n}\rfloor,
\]
so the two training thresholds correspond to approximately $30\%$
and $7.5\%$ of the training sample. The larger separation between
the two training thresholds provides a clearer tail-persistence signal
while retaining enough observations for the probit fit.

The probit uses the generic second-order features described above.
Thus, for a scalar covariate the regressors are $(X,X^{2})$, while
for two covariates they are 
\[
X_{1},\quad X_{2},\quad X_{1}^{2},\quad X_{2}^{2},\quad X_{1}X_{2}.
\]
The inference sample is split at the median fitted score, producing
two approximately equal-sized groups, and $K_{n}$ is allocated as
evenly as possible across them. Ties in fitted scores are broken deterministically
using the covariates and then the original observation index. Outcome
ties in the local orderings are broken by the original observation
index. Apart from the usual choice of $K_{n}$, these tuning choices
are kept fixed in our implementation. The global test remains a useful
benchmark because it uses the same dCov statistic without local balancing.

\section{Null Validity}

\label{sec:null-validity}

\subsection{Hall drift and equivalence to independence}

The Hall expansion describes the dependence that remains at a finite
threshold under $H_{0}$. We impose the following condition and show
the result in the proposition followed. 

\begin{assumption}[Tail-design moments]\label{ass:dcov-moments}
For some $\delta>0$, 
\[
\sup_{n}\Ee\left[\|X\|^{1+\delta}\mid Y>w_{n}\right]<\infty.
\]
\end{assumption}

\begin{proposition}[Second-order distance-covariance drift] \label{prop:dcov-drift}
Suppose $H_{0}$, Assumptions~\ref{ass:hall} and \ref{ass:dcov-moments}
hold. Then 
\begin{equation}
\begin{aligned}\Delta_{n} & =O(a_{n}^{2})=O(w_{n}^{-2\underline{\beta}}),\\
\dCov(X,Z^{*}\mid Y>w_{n}) & =O(a_{n})=O(w_{n}^{-\underline{\beta}}).
\end{aligned}
\label{eq:dcov-equivalence}
\end{equation}
\end{proposition}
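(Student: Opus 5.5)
The plan is to bound the integrand of \eqref{eq:population-dcov} pointwise by $O(a_{n})$ times a weight-integrable envelope. The second claim then follows by taking square roots, since $\dCov=\Delta_{n}^{1/2}$.

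\textbf{Step 1: reduce the integrand to a conditional characteristic-function deviation.} Write $\psi_{n}(t\mid x):=\Ee[e^{itZ^{*}}\mid X=x,Y>w_{n}]$ and let $\psi_{0}(t):=\alpha_{0}/(\alpha_{0}-it)$ be the $\Exp(\alpha_{0})$ characteristic function. Then
\[
\phi_{n}(s,t)-\phi_{X,n}(s)\phi_{Z,n}(t)=\int e^{is'x}\{\psi_{n}(t\mid x)-\phi_{Z,n}(t)\}f_{X,n}(x)\,dx .
\]
Since $\phi_{Z,n}(t)=\int\psi_{n}(t\mid x)f_{X,n}(x)dx$, we may subtract $\psi_{0}(t)$ inside and write
\[
\psi_{n}(t\mid x)-\phi_{Z,n}(t)=g_{n}(t,x)-\int g_{n}(t,u)f_{X,n}(u)du,\qquad g_{n}(t,x):=\psi_{n}(t\mid x)-\psi_{0}(t).
\]
Hence the integrand equals $\bigl|\int(e^{is'x}-\phi_{X,n}(s))g_{n}(t,x)f_{X,n}(x)dx\bigr|^{2}$. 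This factored form lets us bound the $s$ and $t$ dependence separately.

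\textbf{Step 2: bound $g_{n}$ through the survival functions.} Integrating by parts,
\[
g_{n}(t,x)=it\int_{0}^{\infty}e^{itz}\{S_{n}(z\mid x)-e^{-\alpha_{0}z}\}dz,
\]
where $S_{n}(z\mid x)$ is the conditional survival function in \eqref{eq:hall-exceedance}. The expansion \eqref{eq:hall-exceedance}, with $|e^{-\beta(x)z}-1|\le1$, $\lambda_{n}\le1$, $\sup|D|<\infty$, and a remainder that is $o(a_{n})$ uniformly in $x$ and $z$, gives $|S_{n}(z\mid x)-e^{-\alpha_{0}z}|\le Ca_{n}e^{-\alpha_{0}z}$. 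Checking this uniformity in $z$, and not only for fixed $z$, is the delicate point. It should follow from \eqref{eq:hall-remainder}, because the remainder ratio $r(w_{n}e^{z},x)/L(w_{n};x)$ is controlled by $\sup_{u\ge w_{n}}u^{\beta(x)}|r(u,x)|\cdot w_{n}^{-\beta(x)}$ uniformly over $z\ge0$.

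Two bounds on $g_{n}$ then follow:
\[
|g_{n}(t,x)|\le C a_{n}|t|,\qquad |g_{n}(t,x)|\le 2 .
\]
For large $|t|$ we also need $|g_{n}(t,x)|\le Ca_{n}$, because the weight $|t|^{-2}$ only integrates at infinity against a bounded deviation. To get it, integrate by parts once more, using the derivative part of \eqref{eq:hall-remainder}. The density difference $f_{n}(z\mid x)-\alpha_{0}e^{-\alpha_{0}z}$ is then $O(a_{n})e^{-cz}$ in $L^{1}(dz)$ uniformly in $x$, which gives $|g_{n}(t,x)|\le Ca_{n}$ for all $t$. Combining, $|g_{n}(t,x)|\le Ca_{n}\min(1,|t|)$.

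\textbf{Step 3: integrate.} For the $s$ part, Cauchy--Schwarz and the standard bound $|e^{is'x}-\phi_{X,n}(s)|\le\min\{2,\|s\|\,(\|x\|+\Ee[\|X\|\mid Y>w_{n}])\}$ give an integrand at most
\[
Ca_{n}^{2}\min(1,t^{2})\,\Bigl(\int\min\{2,\|s\|(\|x\|+m_{n})\}f_{X,n}(x)dx\Bigr)^{2}.
\]
Let $h(s)$ denote the squared bracket. The weight $t^{-2}\min(1,t^{2})$ is integrable over $\R$. In $s$, $h(s)\le C\|s\|^{2}$ near zero by the bounded first moment, and $h(s)\le4$ at infinity. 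More carefully, one uses the $1+\delta$ moment to get $\min(1,\|s\|^{1+\delta})$-type control uniformly in $n$ by Assumption~\ref{ass:dcov-moments}. In either case $\int h(s)\|s\|^{-d-1}ds<\infty$ uniformly in $n$.

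Hence $\Delta_{n}\le Ca_{n}^{2}$. Since $a_{n}=w_{n}^{-\underline{\beta}}$, this gives both displays of \eqref{eq:dcov-equivalence}. The main obstacle is Step 2: establishing the $t$-uniform bound $|g_{n}|\le Ca_{n}$, which requires the derivative control of the remainder in Assumption~\ref{ass:hall}, together with the uniformity in $z$ of the Hall expansion.
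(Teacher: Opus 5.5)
Your proof is correct and follows the paper's skeleton. You write $\phi_n-\phi_{X,n}\phi_{Z,n}$ as a $P_{X,n}$-covariance between $e^{is'X}$ and the deviation of the conditional characteristic function of $Z^{*}$ from $\alpha_0/(\alpha_0-it)$. The $t$-weight is then controlled by the uniform Hall expansion and the $s$-weight by the tail-design moment.

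You differ in how you treat that deviation:
\begin{itemize}
\item \textbf{The paper} extracts the explicit leading term $a_n\lambda_n(x)D(x)b_{\beta(x)}(t)$. It derives the sharper expansion $\Delta_n=a_n^2\mathcal{K}_n+o(a_n^2)$ and bounds $\sup_n\mathcal{K}_n$ through $|b_b(t)|^2/|t|^2\le(\alpha_0^2+t^2)^{-1}$. The remainder is only asserted to be $o(a_n)$ ``in the weighted $L^2$ sense.''
\item \textbf{You} bound the entire deviation by the envelope $Ca_n\min\{1,|t|\}$, uniformly in $x$.
\begin{itemize}
\item The $|t|$ part comes from the survival-function bound. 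It is uniform in $z\ge0$ because $u^{-\beta(x)}\le w_n^{-\beta(x)}\le a_n$ for $u\ge w_n$.
\item The constant part comes from the $L^1(dz)$ density bound, which uses the derivative condition in \eqref{eq:hall-remainder}. Differentiating $D(x)y^{-\beta(x)}$ produces a factor $\beta(x)$ that is not assumed bounded. It is harmless in $L^1(dz)$, since $\beta\int_0^\infty e^{-(\alpha_0+\beta)z}\,dz\le1$.
\end{itemize}
\end{itemize}

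What each route buys: yours is more self-contained and makes explicit the remainder control the paper leaves implicit, but it yields only an upper bound. The paper's route identifies the leading constant $\mathcal{K}_n$, which you would need to show the drift is of exact order $a_n^2$.

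Two small points:
\begin{itemize}
\item The Step~3 inequality is the triangle inequality with $\sup_x|g_n|$, not Cauchy--Schwarz.
\item The $1+\delta$ aside is unnecessary. Since $h(s)\le\min\{4,4m_n^2\|s\|^2\}$ integrates against $\|s\|^{-d-1}$ to a constant multiple of $m_n$, a uniform first moment suffices.
\end{itemize}
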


Since $K_{n}\widehat{\Delta}_{n}$ is the natural scaling under independence
\citep{SzekelyRizzoBakirov2007}, the Hall drift is negligible if

\begin{equation}
\sqrt{K_{n}}a_{n}=\sqrt{K_{n}}w_{n}^{-\underline{\beta}}\longrightarrow0.\label{eq:bias-rate}
\end{equation}
Condition \eqref{eq:bias-rate} uses only the lower bound $\underline{\beta}$,
not the unknown function $\beta(x)$. It may therefore be stronger
than needed. Similar second-order conditions appear in extreme-value
theory. For example, \citet{Hoga2017} imposes a condition of the
form $\sqrt{k}A(n/k)\to0$, while \citet{deHaanZhou2021} use its
uniform analogue for their local/global extreme-value-index theory.

\subsection{Uniform and order-statistic threshold validity}

Under exact independence, permuting the log exceedances leaves the
joint null distribution invariant. We first establish uniform size
validity for deterministic thresholds and for thresholds chosen using
the outcomes. Write $\bar{F}(y):=\Pp(Y>y)$.

\begin{proposition}[Permutation validity for global thresholds]
\label{prop:permutation} Assume $H_{0}$, Assumption~\ref{ass:hall},
and $0<\Ee[\kappa(X)]<\infty$. Let $\mathcal{W}_{n}:=[\underline{w}_{n},\overline{w}_{n}]$
be a deterministic threshold range such that 
\begin{equation}
\underline{w}_{n}\longrightarrow\infty,\qquad n\Pp(Y>\overline{w}_{n})\longrightarrow\infty,\label{eq:uniform-threshold-range}
\end{equation}
and 
\begin{equation}
\sqrt{n\Pp(Y>\underline{w}_{n})}\,\underline{w}_{n}^{-\underline{\beta}}\longrightarrow0.\label{eq:uniform-threshold-rate}
\end{equation}
For each deterministic $w\in\mathcal{W}_{n}$, let $\phi_{n,w}$ denote
the permutation test in \eqref{eq:permutation-rejection} constructed
from the observations satisfying $Y_{i}>w$. Then 
\begin{equation}
\sup_{w\in\mathcal{W}_{n}}\left|\Pp_{H_{0}}\{\phi_{n,w}=1\}-\eta\right|\longrightarrow0.\label{eq:uniform-permutation-size}
\end{equation}

If $Y$ has a continuous upper tail, the same holds for any threshold
$\widetilde{w}_{n}$ measurable with respect to $(Y_{1},\ldots,Y_{n})$
whenever 
\begin{equation}
\Pp\{\widetilde{w}_{n}\in\mathcal{W}_{n}\}\longrightarrow1.\label{eq:ymeasurable-threshold}
\end{equation}
In particular, let $K_{n}\to\infty$, $K_{n}/n\to0$, and choose $u_{n}$
so that $\bar{F}(u_{n})=K_{n}/n$. If $\sqrt{K_{n}}\,u_{n}^{-\underline{\beta}}\to0$,
the test at $\widehat{w}_{n}=Y_{n-K_{n}:n}$ has asymptotic level
$\eta$. \end{proposition}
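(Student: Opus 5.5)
The argument compares the selected tail sample at each threshold with an exactly independent coupled sample. For that sample the permutation test has exact size $\eta$. It then remains to show that the coupling error is uniformly negligible on the scale $N_n^{-1}$ at which the dCov statistic lives.

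\textbf{Step 1 (coupling).} Fix $w\in\mathcal{W}_n$ and condition on $N=N_{n,w}$. The selected pairs $(X_j^e,Z_j^{*,e})$ are iid from $f_{X,w}(x)f_w(z\mid x)$. Write $G_w(z\mid x)$ for the conditional survival function, so that by \eqref{eq:hall-exceedance} $G_w(z\mid x)=e^{-\alpha_0 z}[1+O(w^{-\underline\beta})]$ uniformly in $(x,z)$. Construct $\tilde Z_j$ by the quantile transform: put $U_j:=G_w(Z_j^{*,e}\mid X_j^e)$, which is uniform and independent of $X_j^e$, and set $\tilde Z_j:=-\alpha_0^{-1}\log U_j$. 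Then $\tilde Z_j\sim\Exp(\alpha_0)$ independently of $X_j^e$, and $|Z_j^{*,e}-\tilde Z_j|\le C w^{-\underline\beta}(1+\tilde Z_j)$. This bound comes from inverting the uniform Hall expansion, using \eqref{eq:hall-remainder} for the remainder and its derivative, and boundedness of $D$. Let $\tilde\phi_{n,w}$ be the permutation test built from $(\tilde Z_j,X_j^e)$. By exchangeability and the assumed distinctness of permutation statistics, $\Pp(\tilde\phi_{n,w}=1\mid N)=\eta$ up to a discreteness term of order $1/N!$.

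\textbf{Step 2 (perturbation of statistic and critical value).} The double-centred dCov is Lipschitz in the $Z$-distances. Hence $|\widehat\Delta^{\pi}-\tilde\Delta^{\pi}|\le C N^{-2}\sum_{j,\ell}|B_{j\ell}|\,(|\epsilon_j|+|\epsilon_\ell|)$ for every permutation $\pi$ simultaneously, where $\epsilon_j:=Z_j^{*,e}-\tilde Z_j$. This is bounded by $C w^{-\underline\beta}\cdot O_p(1)$, using a first moment of $\|X\|$ in the tail; the needed tail moment comes from $\Ee\kappa(X)<\infty$ and the representation of $f_{X,w}$. The $O_p(1)$ can be improved, because under independence $\tilde\Delta$ and its permutation quantile are of exact order $N^{-1}$ and nondegenerate. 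A cross-term argument bounds the difference more sharply, as $O_p(N^{-1/2}w^{-\underline\beta}+w^{-2\underline\beta})$, mirroring Proposition~\ref{prop:dcov-drift}. After multiplying by $N$, the error is $O_p(\sqrt N w^{-\underline\beta})$, which tends to $0$ uniformly provided $\sup_{w}\sqrt{n\bar F(w)}\,w^{-\underline\beta}\to0$. Since $n\bar F(w)w^{-2\underline\beta}$ is decreasing in $w$, the supremum over $\mathcal{W}_n$ is attained at $\underline w_n$, which is \eqref{eq:uniform-threshold-rate}.

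\textbf{Step 3 (anti-concentration).} The rescaled permutation distribution $N\tilde\Delta^\pi$ converges to the continuous limit law of a weighted $\chi^2$ sum, uniformly in $w$ because $N\ge n\bar F(\overline w_n)\to\infty$. Perturbations of order $o_p(1)$ in both the statistic and the critical value therefore change the rejection probability by $o(1)$ uniformly. This gives \eqref{eq:uniform-permutation-size}. I expect this step to be the main obstacle: the uniformity over $w$ and the anti-concentration of the permutation quantile need the tail distribution of $X$ to be nondegenerate uniformly in $w$.

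\textbf{Step 4 (data-dependent thresholds).} Suppose $\widetilde w_n$ is $Y$-measurable. Conditional on $\widetilde w_n=w$ and on the set of selected $Y$-values, the covariates of the exceedances have the conditional law given $Y_i$. Under continuity, the argument of Steps 1 to 3 applies to this conditional design. Rather than a pathwise bound, I would bound $|\Pp(\phi_{n,\widetilde w_n}=1)-\eta|$ by $\sup_{w\in\mathcal W_n}$ of a quantity that is $o(1)$, plus $\Pp(\widetilde w_n\notin\mathcal W_n)$. This works on a fine grid of $\mathcal W_n$, and between grid points the selected set is unchanged.

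For the order statistic, take $\underline w_n=u_{n}(1-\varepsilon_n)$ and $\overline w_n=u_n(1+\varepsilon_n)$ with $\varepsilon_n\to0$ slowly. Since $Y_{n-K_n:n}/u_n\to1$ in probability by standard intermediate-order-statistic theory, \eqref{eq:ymeasurable-threshold} holds. Regular variation gives $n\bar F(\overline w_n)\sim K_n\to\infty$ and $\sqrt{n\bar F(\underline w_n)}\,\underline w_n^{-\underline\beta}\sim\sqrt{K_n}u_n^{-\underline\beta}\to0$. The pairwise-distance invariance \eqref{eq:orderstat-distance-cancel} shows that the test at $Y_{n-K_n:n}$ coincides with the deterministic-threshold test computed on the same selected set.
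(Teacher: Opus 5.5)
Your Step~1 coupling is sound, but the argument built on it has two genuine gaps.

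\textbf{The key estimate in Step~2 is only asserted.} The Lipschitz bound you derive gives $N|\widehat{\Delta}^{\pi}-\tilde{\Delta}^{\pi}|=O_{p}(Nw^{-\underline{\beta}})$. However, \eqref{eq:uniform-threshold-rate} controls only $\sqrt{N}w^{-\underline{\beta}}$. Everything therefore rests on the unproved ``cross-term'' gain of a factor $\sqrt{N}$. That gain must also hold for the permutation quantile, not just for the observed pairing.

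That Hoeffding-type argument, the uniform weighted-$\chi^{2}$ limit in Step~3, and its anti-concentration all require moments of $\|X\|$ under the tail design. These are not hypotheses of the proposition. They also do not follow from $\Ee[\kappa(X)]<\infty$: with $\kappa\equiv1$ and Cauchy $X$, $f_{X}^{0}=f_{X}$ has no mean, yet the proposition still holds.

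\textbf{Step~4 fails as sketched.} Size control for each deterministic $w$ does not transfer to an outcome-dependent $\widetilde{w}_{n}$ through a supremum over a grid. The threshold $\widetilde{w}_{n}$ is correlated with the data, and $\mathcal{W}_{n}$ contains up to $N_{n}(\underline{w}_{n})$ distinct selected sets. You need the rejection probability conditional on $(Y_{1},\ldots,Y_{n})$ to be close to $\eta$ uniformly in $w$. Your quantile coupling of $Z$ given $X$ is unavailable there, because once the outcomes are fixed only the covariates are random.

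\textbf{The missing idea is to compare experiments in total variation instead of perturbing the statistic.} Conditional on $(Y_{1},\ldots,Y_{n})$, the selected covariates are independent with densities $f_{X\mid Y=Y_{i}}$. Bayes' formula together with the Hall expansion, including its derivative remainder, gives $f_{X\mid Y=y}/f_{X}^{0}=1+O(y^{-\underline{\beta}})$ uniformly in $x$. Hence $\chi^{2}(f_{X\mid Y=y}\|f_{X}^{0})\le Cy^{-2\underline{\beta}}$.

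Because $1+\chi^{2}$ multiplies over independent coordinates, the divergence of the selected design from an iid $f_{X}^{0}$ design is at most $\exp\{CN_{n}(\underline{w}_{n})\underline{w}_{n}^{-2\underline{\beta}}\}-1$. This bound holds for every $w\in\mathcal{W}_{n}$ simultaneously and tends to zero in probability under \eqref{eq:uniform-threshold-rate}. This additivity is exactly where the $\sqrt{N}w^{-\underline{\beta}}$ rate comes from.

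In the reference experiment the covariates are iid and independent of the outcomes, so the permutation test is conditionally exact. Total variation then bounds the change in probability of any rejection event. This covers every $Y$-measurable threshold at once and needs no moments, limit law, or anti-concentration.

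For deterministic $w$ alone, you could repair Steps~2--3 the same way. Use a uniform bound $\chi^{2}\le Cw^{-2\underline{\beta}}$ between the conditional law of the marks given $X=x$ and $\Exp(\alpha_{0})$, which is what the paper does in the locally balanced case.
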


The lower endpoint controls the worst-case Hall bias, while the upper
endpoint must retain a diverging number of exceedances. The order-statistic
proof uses a deterministic range around $u_{n}$. See Appendix~\ref{app:proofs}.

\subsection{Locally balanced permutation validity}

We now consider the two-group locally balanced construction of Section~\ref{sec:test},
including the dependence created by its two random group-specific
order-statistic thresholds.

\begin{proposition}[Locally balanced permutation validity] \label{prop:local-window-permutation}
Suppose $H_{0}$ and Assumption~\ref{ass:hall} hold, the two groups
are covariate-measurable, and the conditional distribution of $Y$
given $X=x$ is continuous in the upper tail uniformly over $x$.
If 
\begin{equation}
\min_{j=1,2}K_{n,j}\longrightarrow\infty,\qquad\max_{j=1,2}\frac{K_{n,j}}{N_{n,j}}\longrightarrow_{p}0,\label{eq:local-window-intermediate}
\end{equation}
and 
\begin{equation}
\sum_{j=1}^{2}K_{n,j}\widehat{w}_{n,j}^{-2\underline{\beta}}\longrightarrow_{p}0,\label{eq:local-window-bias-rate}
\end{equation}
then the global permutation test based on the pooled locally normalized
exceedances in \eqref{eq:local-window-log-exceedance} has asymptotic
level $\eta$. \end{proposition}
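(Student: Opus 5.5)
The plan is to condition on the covariates and reduce each group to the setting of Proposition~\ref{prop:permutation}, then treat the pooled sample as a coupling with an exactly exchangeable reference sample. Condition on $\mathcal{X}_{n}$. The partition $\mathcal{I}_{n,1},\mathcal{I}_{n,2}$ and the allocation $K_{n,1},K_{n,2}$ are then fixed, and within each group the $Y_{i}$ are independent with conditional laws $\bar F(\cdot\mid X_i)$. The first step is a representation of the retained upper order statistics. With continuity in the upper tail, the largest $K_{n,j}$ observations in group $j$, normalized by the $(K_{n,j}+1)$-th largest value $\widehat w_{n,j}$, behave as follows. Conditionally on $\widehat w_{n,j}=w$ and on which indices exceed it, the retained $Y_i$ are independent with laws $\Pr(Y_i\in\cdot\mid X_i,Y_i>w)$. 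This is the standard order-statistic/threshold-conditioning argument for independent non-identical data. By \eqref{eq:hall-exceedance} under $H_0$, each retained $Z^{\mathrm{loc}}_{i,n}$ then has conditional survival function $e^{-\alpha_0 z}[1+O(\widehat w_{n,j}^{-\underline\beta})]$, uniformly in $x$ and $z$. Condition \eqref{eq:local-window-intermediate} guarantees $\widehat w_{n,j}\to_p\infty$, since $K_{n,j}/N_{n,j}\to_p 0$ forces the threshold into the tail, so these approximations are uniform.

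The second step is the coupling. On the conditioning event, construct via quantile coupling iid $\operatorname{Exp}(\alpha_0)$ variables $E_i$, independent of the retained covariates, such that $|Z^{\mathrm{loc}}_{i,n}-E_i|$ is controlled by the Hall drift. More precisely, the total variation between the product law of the retained $Z^{\mathrm{loc}}$ and the product of exponentials should be bounded by a Hellinger-type quantity. That quantity is of order $\bigl(\sum_j K_{n,j}\widehat w_{n,j}^{-2\underline\beta}\bigr)^{1/2}$, because each factor has squared-Hellinger distance $O(\widehat w_{n,j}^{-2\underline\beta})$ from the exponential density, using the density version of the Hall expansion supplied by the derivative bound in \eqref{eq:hall-remainder}. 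By \eqref{eq:local-window-bias-rate} this total variation tends to zero in probability.

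For the reference sample $\{(E_i,X_i)\}$, the $E_i$ are iid and independent of the pooled covariates given $\mathcal{X}_n$ and the selection. Global permutation of the $E_i$ is therefore exactly invariant, and the permutation test has conditional size exactly $\eta$, given almost surely distinct permutation statistics. The rejection indicator is a measurable function of the pooled $(Z,X)$ vector. Its conditional probability therefore differs from $\eta$ by at most the total variation above. Taking expectations and using dominated convergence gives asymptotic level $\eta$.

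The main obstacle is the first step. Group-specific order-statistic thresholds are random, they depend on the outcomes, and they are not iid across $i$. In addition, the $(K_{n,j}+1)$-th order statistic itself is excluded, so I must verify the conditional-independence representation for non-identically distributed continuous data. I would prove it by writing the joint density of the order statistics of independent continuous variables. I would then check that, given the value and identity of the $(K+1)$-th largest, the remaining top $K$ are independent truncated draws. This is the place where uniform upper-tail continuity is used. The two groups are independent given $\mathcal{X}_n$, so they combine by a product argument.
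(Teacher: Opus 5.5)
Your proposal is correct and follows essentially the paper's argument. You condition on the covariates, the two local thresholds, and the boundary and selected identities, so the retained outcomes are independent truncated draws; you bound each retained mark's divergence from $\operatorname{Exp}(\alpha_{0})$ by $C\widehat{w}_{n,j}^{-2\underline{\beta}}$; and you transfer exact permutation validity from the exchangeable reference sample via total variation. The only difference is that you tensorize squared Hellinger where the paper tensorizes $1+\chi^{2}$, and either works.

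One small correction concerns where $\widehat{w}_{n,j}\to_{p}\infty$ comes from. You attribute it to $K_{n,j}/N_{n,j}\to_{p}0$, but that condition alone need not push the threshold into the tail. It can fail for a general covariate-measurable group whose members have very small tail frequency $\kappa(X_{i})$. Instead, derive it from \eqref{eq:local-window-bias-rate} together with $\min_{j}K_{n,j}\to\infty$, as the paper does.
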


Condition \eqref{eq:local-window-bias-rate} implies that all local
thresholds diverge: since $\min_{j=1,2}K_{n,j}\to\infty$, 
\begin{equation}
\min_{j=1,2}\widehat{w}_{n,j}\longrightarrow_{p}\infty.\label{eq:local-window-diverging-thresholds}
\end{equation}

If additionally $N_{n,j}/n\to_{p}p_{j}\in(0,1)$ for $j=1,2$ and
the two local intermediate order statistics admit deterministic equivalents
$u_{n,j}$ satisfying $\widehat{w}_{n,j}/u_{n,j}\to_{p}1$ simultaneously,
then 
\[
\sqrt{K_{n}}\max_{j=1,2}u_{n,j}^{-\underline{\beta}}\longrightarrow0
\]
is sufficient for \eqref{eq:local-window-bias-rate}.

Unlike the single-threshold identity in \eqref{eq:orderstat-distance-cancel},
different local thresholds do not cancel from pairwise log-exceedance
distances across the two groups. The proof of Proposition~\ref{prop:local-window-permutation}
therefore conditions on the two group-specific boundaries and selected
identities.

The same validity result applies to the sample-split adaptive construction
in Section~\ref{sec:adaptive-localization}. Conditional on the random
split and the training subsample, the learned score is fixed, while
the inference observations remain iid. The two score groups are measurable
functions of the inference covariates, so Proposition~\ref{prop:local-window-permutation}
applies conditionally whenever its conditions hold with probability
tending to one. The conditional rejection probability therefore converges
to $\eta$, and bounded convergence gives the same unconditional limit.
Correct specification of the training probit is not required for this
null-validity argument.

\section{Power and Tail-Design Concentration}

\label{sec:power}

\begin{assumption}[Tail-sample moments]\label{ass:dcov-process-moments}
For some $\delta>0$, along any sequence for which we invoke a sample
dCov limit, 
\begin{equation}
\sup_{n}\Ee\!\left[(\|X\|+Z^{*})^{2+\delta}\mid Y>w_{n}\right]<\infty.\label{eq:process-moment}
\end{equation}
\end{assumption}

Under an alternative, write 
\[
\Delta_{n}^{A}:=\dCov^{2}(X,Z^{*}\mid Y>w_{n}).
\]
For the locally balanced test, let $\mathcal{G}_{n}$ contain the
retained covariates, group memberships, thresholds, and selected identities,
and let $U_{n}$ be uniform on $\{1,\ldots,K_{n}\}$ conditional on
$\mathcal{G}_{n}$. Define 
\begin{equation}
\Delta_{n}^{\mathrm{LB},A}:=\dCov^{2}(X_{U_{n},n}^{\mathrm{LB}},Z_{U_{n},n}^{\mathrm{LB}}\mid\mathcal{G}_{n}).\label{eq:local-balanced-dcov-signal}
\end{equation}
We first study how global tail selection changes the covariate design
and then give consistency and local-power results.

\subsection{Tail concentration and the effect of local balancing}

\label{sec:power-design}

At a fixed alternative, Bayes' rule gives the covariate density among
observations exceeding $w$ as 
\[
f_{X\mid Y>w}(x)\propto\kappa(x)w^{-\alpha(x)}L(w;x)f_{X}(x).
\]
By Assumption~\ref{ass:hall}, $L(w;x)\to1$ uniformly in $x$, so
for large $w$ this density is approximately proportional to $\kappa(x)w^{-\alpha(x)}f_{X}(x)$.
Thus increasing $w$ favors regions with smaller $\alpha(x)$. For
the two-group construction, let $\mathcal{C}_{1}$ and $\mathcal{C}_{2}$
be fixed groups of positive probability, let $\mathcal{X}_{j}$ be
the support of $X\mid X\in\mathcal{C}_{j}$, and define 
\[
\alpha_{\min}:=\inf_{x\in\mathcal{X}}\alpha(x),\qquad\alpha_{j}^{*}:=\inf_{x\in\mathcal{X}_{j}}\alpha(x),\quad j=1,2,
\]
and write the corresponding minimizer sets as 
\[
\mathcal{M}:=\{x\in\mathcal{X}:\alpha(x)=\alpha_{\min}\},\qquad\mathcal{M}_{j}:=\{x\in\mathcal{X}_{j}:\alpha(x)=\alpha_{j}^{*}\},\quad j=1,2.
\]
Thus $\alpha_{j}^{*}$ is the smallest tail exponent accessible in
group $j$.

\begin{assumption}[Fixed alternatives]\label{ass:fixed-alt} $\alpha$
and $\kappa$ are continuous on $\mathcal{X}$, $\alpha$ is bounded
above and away from zero, $\kappa>0$, and $0<\Ee[\kappa(X)]<\infty$.
Assumptions~\ref{ass:hall} and~\ref{ass:dcov-moments} hold uniformly,
also after conditioning on either group. \end{assumption}

\begin{theorem}[Tail concentration and the effect of local balancing]\label{thm:degeneracy}
Under Assumption~\ref{ass:fixed-alt}, as $w_{n}\to\infty$, $\alpha(X)\to\alpha_{\min}$
conditional on $Y>w_{n}$ and 
\begin{equation}
\Delta_{n}^{A}\longrightarrow0.\label{eq:global-signal-vanishes}
\end{equation}
For the two fixed groups, suppose the distribution of $X$ given $X\in\mathcal{C}_{j},Y>w$
converges to that of $X_{j}^{\infty}$, the groupwise moment bound
holds uniformly, and \eqref{eq:local-window-intermediate} and \eqref{eq:local-window-diverging-thresholds}
hold. If $K_{n,j}/K_{n}\to\rho_{j}>0$, $\rho_{1}+\rho_{2}=1$, define
$(X_{\infty},Z_{\infty})$ by choosing group $j$ with probability
$\rho_{j}$, drawing $X_{\infty}\sim X_{j}^{\infty}$, and then drawing
$Z_{\infty}\sim\Exp(\alpha_{j}^{*})$ independently conditional on
$j$. Then 
\begin{equation}
\Delta_{n}^{\mathrm{LB},A}\longrightarrow_{p}\Delta_{\infty}^{\mathrm{LB}}:=\dCov^{2}(X_{\infty},Z_{\infty}).\label{eq:local-nondegenerate-limit}
\end{equation}
Moreover, $X_{j}^{\infty}$ is supported on $\mathcal{M}_{j}$ for
$j=1,2$. Hence, if $\alpha_{1}^{*}\neq\alpha_{2}^{*}$, then $\mathcal{M}_{1}\cap\mathcal{M}_{2}=\varnothing$
and the limit is nonzero. \end{theorem}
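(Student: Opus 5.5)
The proof has three parts: tail concentration, vanishing of the global signal, and the locally balanced limit.

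\emph{Tail concentration.} By Bayes' rule and Assumption~\ref{ass:hall}, the tail covariate density is
\[
f_{X\mid Y>w_n}(x)=\frac{\kappa(x)w_n^{-\alpha(x)}L(w_n;x)f_X(x)}{\int\kappa w_n^{-\alpha}Lf_X},
\]
with $L(w_n;x)\to1$ uniformly. Fix $\epsilon>0$ and let $A_\epsilon:=\{x:\alpha(x)>\alpha_{\min}+\epsilon\}$.
\begin{itemize}
\item The numerator mass on $A_\epsilon$ is at most $w_n^{-\alpha_{\min}-\epsilon}\Ee[\kappa(X)]\{1+o(1)\}$.
\item By continuity of $\alpha$, the set $B_{\epsilon/2}=\{x:\alpha(x)<\alpha_{\min}+\epsilon/2\}$ is open and meets $\mathcal{X}$. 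Since it lies near the support, it has positive $P_X$-mass. So the denominator is at least $c\,w_n^{-\alpha_{\min}-\epsilon/2}$.
\end{itemize}
The ratio is $O(w_n^{-\epsilon/2})\to0$, which gives $\alpha(X)\to\alpha_{\min}$ in conditional probability. Applied within $\mathcal{C}_j$, the same argument shows that any weak limit $X_j^\infty$ charges only $\{x\in\mathcal{X}_j:\alpha(x)\le\alpha_j^*+\epsilon\}$ for every $\epsilon$. The support of $X_j^\infty$ is contained in the closed support $\mathcal{X}_j$, and $\alpha$ is continuous, so $X_j^\infty$ is supported on $\mathcal{M}_j$.

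\emph{Global signal.} Write $\Delta^A_n$ in its characteristic-function form \eqref{eq:population-dcov}. I would couple $(X,Z^*)$ given $Y>w_n$ with $(X,\tilde Z)$, where $\tilde Z\sim\Exp(\alpha_{\min})$ is drawn independently of $X$. Conditional on $X=x$, the Hall expansion gives the survival function $e^{-\alpha(x)z}\{1+o(1)\}$ uniformly, and $\alpha(x)$ is close to $\alpha_{\min}$ with probability tending to one. Hence the conditional law of $Z^*$ given $X$ converges to $\Exp(\alpha_{\min})$ in Wasserstein-1 distance, in probability over the tail design; for $x\in A_\epsilon$ the distance is still bounded, since $\alpha$ is bounded away from zero. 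Distance covariance is controlled through the $L^1$ distance in the $Z$-coordinate: using $||z-z'|-|\tilde z-\tilde z'||\le|z-\tilde z|+|z'-\tilde z'|$ in the $V$-statistic representation, together with the moment bound of Assumption~\ref{ass:dcov-moments} for the $X$-factor, we get
\[
|\dCov^2(X,Z^*)-\dCov^2(X,\tilde Z)|\lesssim \Ee|Z^*-\tilde Z|^{\delta'}\to0
\]
via Hölder. Since $\dCov^2(X,\tilde Z)=0$, this yields \eqref{eq:global-signal-vanishes}.

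\emph{Locally balanced limit.} Condition on $\mathcal{G}_n$. Within group $j$, the retained $(X,Z^{\mathrm{loc}})$ are, conditionally on the boundary $\widehat w_{n,j}$, iid draws from the tail law of group $j$ at threshold $\widehat w_{n,j}$. This is the same conditioning device as in Proposition~\ref{prop:local-window-permutation}. By \eqref{eq:local-window-diverging-thresholds} the threshold diverges, so the previous step applied groupwise shows:
\begin{itemize}
\item $X$ given group $j$ converges weakly to $X_j^\infty$;
\item $Z^{\mathrm{loc}}$ given $X$ in group $j$ converges to $\Exp(\alpha_j^*)$, independent of $X$.
\end{itemize}
The group mixture weights $K_{n,j}/K_n$ tend to $\rho_j$. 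Under the uniform $2+\delta$ moments, distance covariance is continuous under weak convergence with uniformly integrable first moments. Treating $\Delta^{\mathrm{LB},A}_n$ as the empirical dCov of the pooled sample, so that the law of large numbers for $V$-statistics applies groupwise, this gives \eqref{eq:local-nondegenerate-limit}.

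\emph{Nonzero limit.} If $\alpha_1^*\ne\alpha_2^*$, then $\mathcal{M}_1\cap\mathcal{M}_2=\varnothing$, because a common point would have $\alpha$ equal to both values. The laws of $X_1^\infty$ and $X_2^\infty$ are then mutually singular, so the group label $j$ is a measurable function of $X_\infty$ almost surely. The conditional law of $Z_\infty$ given $X_\infty$ is therefore $\Exp(\alpha_j^*)$, and these two laws differ. So $X_\infty$ and $Z_\infty$ are dependent, and by the characterization of \citet{SzekelyRizzoBakirov2007} (first moments are finite), $\Delta_\infty^{\mathrm{LB}}>0$.

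The main obstacle is the conditioning step with the random local thresholds. Transferring uniform Hall control to random thresholds $\widehat w_{n,j}\to_p\infty$, and obtaining uniform integrability for the dCov continuity, requires care. I would handle it as in Proposition~\ref{prop:local-window-permutation}: condition on the order statistic, work on the event $\{\widehat w_{n,j}>M\}$, and use uniformity in $w\ge M$.
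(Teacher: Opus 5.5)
Your concentration step, the closed-set argument placing $X_j^{\infty}$ on $\mathcal{M}_j$, and the nonzero-limit argument coincide with the paper's Steps 1 and 5. The two dCov limits, however, are reached by a genuinely different route.

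For $\Delta_n^A\to0$, the paper uses tightness of the selected design to extract subsequences. It then shows, via characteristic functions, that every subsequential limit of $(X,Z^*)$ is a product law, and concludes by dominated convergence in the weighted integral plus a contradiction argument. You instead quantile-couple $Z^*$ with an $\Exp(\alpha_{\min})$ mark that is independent of the same $X$, and use that the expectation representation of $\dCov^2$ is Lipschitz in the mark. This needs no convergence of the selected design at all, and it yields an explicit, if crude, bound. Note that H\"older against $\|X\|^{1+\delta}$ produces $\|Z^*-\tilde Z\|_{L^q}$ with $q=(1+\delta)/\delta>1$, not a small power. The uniform exponential envelope of the marks is what upgrades your $W_1$ convergence to $L^q$.

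For the locally balanced limit, you condition only on memberships, the group-specific order statistics and the selected identities. This makes the retained pairs of group $j$ iid from the law of $(X,Y)$ given $X\in\mathcal{C}_j$, $Y>w$, at $w=\widehat{w}_{n,j}$. That is exactly the law whose convergence the theorem assumes. It bypasses Lemma~\ref{lem:group-tail-design}, where the paper conditions on all outcomes, compares $Q_{j,y}$ with $Q_{j,>y}$ in total variation, and invokes von Bahr--Esseen.

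One step should be tightened. Because $\mathcal{G}_n$ contains the retained covariates, $\Delta_n^{\mathrm{LB},A}$ is not the empirical dCov of the pooled sample. It is the dCov of the law that puts mass $1/K_n$ on each retained $X_r$, with the mark drawn from its conditional law given $X_r$ and $\widehat{w}_{n,j}$. For the same reason, you cannot condition on $\mathcal{G}_n$ and simultaneously treat the retained covariates as iid draws. There are two ways to fix this:
\begin{itemize}
\item apply your stratified $V$-statistic law of large numbers to the retained covariates with mark-integrated kernels, conditioning only on thresholds, identities and memberships; or
\item prove the limit for $\widehat{\Delta}_n^{\mathrm{LB}}$ and transfer it via $K_n\|\widehat{\mathscr{C}}_n^{\mathrm{LB}}-\mathscr{C}_n^{\mathrm{LB},A}\|_{\mathrm{dCov}}^2=O_p(1)$ from the proof of Theorem~\ref{thm:global-power}.
\end{itemize}
With either fix the argument is complete.
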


\begin{remark}[Global signal decay and consistency]\label{rem:degeneracy-rate}
There is no universal decay rate. Appendix~\ref{app:degeneracy-rate}
illustrates the logarithmic rates $(\log w_{n})^{-3}$ and $(\log w_{n})^{-5/2}$
in regular boundary and interior benchmarks. In such cases, $K_{n}$
growing faster than the corresponding power of $\log w_{n}$ satisfies
the signal condition for global consistency. \end{remark}

\begin{remark}[Dependence on the two-group split]\label{rem:balancing-limits}
If $\alpha_{1}^{*}=\alpha_{2}^{*}$, then $\Delta_{\infty}^{\mathrm{LB}}=0$.
For the adaptive version, the same limit applies when the learned
groups converge in symmetric difference to two fixed groups satisfying
the conditions of Theorem~\ref{thm:degeneracy}. Hence the split
matters for power, but not for null validity. \end{remark}

\subsection{Consistency}

For the global test impose Assumption~\ref{ass:dcov-process-moments}.
For the locally balanced test also assume upper-tail continuity, Assumption~\ref{ass:hall},
\eqref{eq:local-window-intermediate}, \eqref{eq:local-window-diverging-thresholds},
and, for some $\delta>0$, 
\begin{equation}
\max_{j=1,2}\frac{1}{K_{n,j}}\sum_{i\in S_{n,j}}\Ee_{A}\!\left[\bigl(\|X_{i}\|+Z_{i,n}^{\mathrm{loc}}\bigr)^{2+\delta}\mid\mathcal{G}_{n}\right]=O_{p}(1),\label{eq:local-balanced-process-moment}
\end{equation}
where $S_{n,j}$ is the retained set in group $j$.

\begin{theorem}[Consistency of both tests]\label{thm:global-power}
Under the respective conditions above, 
\begin{align}
K_{n}\Delta_{n}^{A}\to\infty & \quad\Longrightarrow\quad\Pr_{A}\{\widehat{\Delta}_{n}>c_{n,1-\eta}^{\pi}\}\to1,\label{eq:fixed-consistency}\\
K_{n}\Delta_{n}^{\mathrm{LB},A}\to_{p}\infty & \quad\Longrightarrow\quad\Pr_{A}\{\widehat{\Delta}_{n}^{\mathrm{LB}}>c_{n,1-\eta}^{\mathrm{LB},\pi}\}\to1.\label{eq:local-balanced-consistency}
\end{align}
\end{theorem}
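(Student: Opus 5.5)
The plan is to sandwich the test statistic between a growing signal and a bounded permutation critical value. On the event $\{K_{n}\widehat{\Delta}_{n}\to_{p}\infty\}$ with $K_{n}c_{n,1-\eta}^{\pi}=O_{p}(1)$, rejection has probability tending to one. So the proof reduces to two separate claims, each shown for the global statistic and then for the locally balanced one:
\[
\text{(a)}\quad \widehat{\Delta}_{n}-\Delta_{n}^{A}=O_{p}(K_{n}^{-1/2}\sqrt{\Delta_{n}^{A}}+K_{n}^{-1}),\qquad \text{(b)}\quad K_{n}c_{n,1-\eta}^{\pi}=O_{p}(1).
\]
Given (a), $K_{n}\widehat{\Delta}_{n}\geq K_{n}\Delta_{n}^{A}\{1-O_{p}((K_{n}\Delta_{n}^{A})^{-1/2})\}-O_{p}(1)\to_{p}\infty$. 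It then remains to compare this with (b).

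\textbf{Step 1 (signal).} For (a), I would use the V-statistic representation of $\widehat{\Delta}_{n}$ as a degree-four V-statistic with kernel $h$ built from $|z-z'|\,\|x-x''\|$-type products; the kernel's mean is $\Delta_{n}^{A}$. The Hoeffding decomposition gives a linear term, a degenerate remainder and the diagonal terms.
\begin{itemize}
\item The linear term has variance of order $K_{n}^{-1}\operatorname{Var}(h_{1})$. Because the first projection $h_{1}$ vanishes under independence, one expects $\operatorname{Var}(h_{1})=O(\Delta_{n}^{A})$. If that is hard to verify, a cruder $O(1)$ bound still suffices in the form $K_{n}\Delta_{n}^{A}\gg\sqrt{K_{n}}$. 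Since the statement only assumes $K_{n}\Delta_{n}^{A}\to\infty$, I would try to prove $\operatorname{Var}(h_{1})\lesssim\Delta_{n}^{A}$ via the Cauchy--Schwarz identity for the characteristic-function representation \eqref{eq:population-dcov}.
\item The degenerate and diagonal parts are $O_{p}(K_{n}^{-1})$. This uses Assumption~\ref{ass:dcov-process-moments}, which gives uniformly bounded second moments of the kernel along the sequence.
\end{itemize}
For the order-statistic threshold, \eqref{eq:orderstat-distance-cancel} lets one work with $\log Y_{i}$ directly. I would then condition on $N_{n}$ as in Section~\ref{sec:test} and transfer to $\widehat{w}_{n}$ using the deterministic range argument of Proposition~\ref{prop:permutation}.

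\textbf{Step 2 (critical value).} For (b), the permutation distribution of $K_{n}\widehat{\Delta}_{n}^{\pi}$ is that of a dCov between two samples made independent by random relabeling. Its conditional mean is exactly computable:
\[
\Ee_{\pi}[\widehat{\Delta}_{n}^{\pi}]=\frac{\bar{a}\,\bar{b}}{K_{n}-1}\lesssim K_{n}^{-1}\Bigl(\frac{1}{K_{n}^{2}}\sum_{j,\ell}D_{j\ell}^{Z}\Bigr)\Bigl(\frac{1}{K_{n}^{2}}\sum_{j,\ell}D_{j\ell}^{X}\Bigr).
\]
Here $\bar{a}$ and $\bar{b}$ are averaged squared-centered-distance quantities, and both are bounded by products of mean pairwise distances. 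By the moment condition, those mean pairwise distances are $O_{p}(1)$. Conditional Markov then gives $c_{n,1-\eta}^{\pi}\leq\eta^{-1}\Ee_{\pi}[\widehat{\Delta}_{n}^{\pi}]=O_{p}(K_{n}^{-1})$. This step needs no null structure, which is why it serves under alternatives.

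\textbf{Step 3 (locally balanced version).} I would condition on $\mathcal{G}_{n}$, as in the proof of Proposition~\ref{prop:local-window-permutation}. Given $\mathcal{G}_{n}$, the retained $Z^{\mathrm{loc}}$ values within each group are conditionally independent draws from the truncated conditional laws above the group boundaries. Steps 1 and 2 are then repeated conditionally: the target becomes $\Delta_{n}^{\mathrm{LB},A}$ and the moment condition becomes \eqref{eq:local-balanced-process-moment}. The hypothesis $K_{n}\Delta_{n}^{\mathrm{LB},A}\to_{p}\infty$ together with conditional $O_{p}$ bounds gives conditional rejection probability tending to one in probability. Bounded convergence then yields \eqref{eq:local-balanced-consistency}.

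\textbf{Main obstacle.} I expect the main obstacle in Step 3 to be that the retained observations are not iid given $\mathcal{G}_{n}$. They are independent but non-identically distributed, since each has its own conditioning covariate and a group-specific truncation. So the Hoeffding variance bound in Step 1 must be redone for independent non-identical summands, with the conditional moment condition used termwise. A second obstacle is justifying the refined bound $\operatorname{Var}(h_{1})\lesssim\Delta$. If it fails, I would fall back to requiring $\sqrt{K_{n}}\Delta\to\infty$ as an intermediate result, and then sharpen the bound.
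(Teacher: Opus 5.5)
Your proposal is correct and follows the same skeleton as the paper's proof. The scaled statistic diverges at rate $K_n\Delta$, and $K_nc^{\pi}_{n,1-\eta}=O_p(1)$ follows from the exact permutation mean and a conditional Markov bound. Your Step~2 is the paper's calculation \eqref{eq:perm-mean-dcov}.

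You differ in how you handle the signal step. The paper does not use a Hoeffding decomposition of the degree-four V-statistic. It writes $\widehat\Delta_n=\|\widehat{\mathscr{C}}_n\|^2_{\mathrm{dCov}}$ and uses Lemma~\ref{lem:dcov-triangular} to get $N_n\|\widehat{\mathscr{C}}_n-\mathscr{C}_n^A\|^2_{\mathrm{dCov}}=O_p(1)$. It then applies the triangle inequality in the weighted $L^2$ space, giving $\sqrt{N_n\widehat\Delta_n}\geq\sqrt{N_n\Delta_n^A}-O_p(1)$; squaring this yields exactly your (a).

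This route buys two things:
\begin{itemize}
\item Your refined bound $\operatorname{Var}(h_1)\lesssim\Delta_n^A$ becomes automatic; it is the Cauchy--Schwarz step you anticipated. So the weaker fallback $\sqrt{K_n}\Delta_n^A\to\infty$ is never needed. This matters because $\Delta_n^A\to0$ at fixed alternatives by Theorem~\ref{thm:degeneracy}.
\item It needs fewer moments. The expected squared dCov norm of the empirical error is controlled by mixed first moments such as $\Ee[\|X\|Z^*]$. Second moments of the dCov kernel instead involve $\Ee[(Z^*)^2\|X\|^2]$. The $(2+\delta)$ condition alone does not give this; you would have to invoke the conditional exponential envelope of the marks.
\end{itemize}

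For the locally balanced test, conditioning on $\mathcal{G}_n$ fixes the covariates. The error is then exactly $K_n^{-1}\sum_r a_{r,n}\varepsilon_{r,n}$ with independent centered marks, as in \eqref{eq:local-balanced-process-decomp}. Its conditional second moment is a sum of termwise bounds under \eqref{eq:local-balanced-process-moment}. So the non-identical Hoeffding problem you flag as the main obstacle never arises.

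Your order-statistic aside is not needed here, since the theorem uses a deterministic $w_n$. If you pursue it, localizing $\widehat w_n$ in $[u_n/2,2u_n]$ does carry over. However, the uniform control in Proposition~\ref{prop:permutation} comes from a chi-square coupling that holds only under $H_0$. Under alternatives you instead need Lemma~\ref{lem:dcov-triangular} uniformly over that range, plus a signal condition evaluated at $\widehat w_n$, as in Corollary~\ref{cor:global-orderstat-power}.
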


Thus global consistency can coexist with tail-design concentration,
while the two-group test is consistent whenever the limit in Theorem~\ref{thm:degeneracy}
is positive. Corollary~\ref{cor:global-orderstat-power} in Appendix~\ref{app:global-orderstat-power}
covers the order-statistic threshold used in practice.

\subsection{Local-alternative limits}

For local alternatives, suppose 
\begin{equation}
\alpha_{n}(x)=\alpha_{0}+\delta_{n}g(x),\qquad\delta_{n}=c/\sqrt{K_{n}},\label{eq:local-alt}
\end{equation}
where $g$ is bounded, $c\in\R$ is fixed, and $\kappa$ and $L$
are unchanged. Let $\Pr_{n,c}$ denote probability under the sequence
in \eqref{eq:local-alt}. Under $H_{0}$, the limiting covariate density
in the global tail sample is 
\[
f_{X}^{0}(x):=\frac{\kappa(x)f_{X}(x)}{\Ee[\kappa(X)]}.
\]
For the locally balanced test, let $f_{X}^{\mathrm{LB}}$ denote the
limiting density of the retained covariates under the conditions in
Appendix~\ref{app:local-power}. For a covariate density $f$, let
$G_{f}$ be the centered Gaussian characteristic-function process
obtained under independent $X\sim f$ and $Z\sim\Exp(\alpha_{0})$,
and let $H_{g,f}$ be the deterministic first-order shift defined
in Appendix~\ref{app:local-power}. Write 
\[
T_{f}(c):=\|G_{f}+cH_{g,f}\|_{\mathrm{dCov}}^{2},\qquad T_{G}(c):=T_{f_{X}^{0}}(c),\qquad T_{\mathrm{LB}}(c):=T_{f_{X}^{\mathrm{LB}}}(c).
\]
For $a\in\{G,\mathrm{LB}\}$, let $q_{a,1-\eta}$ denote the $(1-\eta)$
quantile of $T_{a}(0)$.

\begin{theorem}[Local-alternative limits]\label{thm:local-power}
Under \eqref{eq:local-alt} and the respective conditions in Appendix~\ref{app:local-power},
\begin{align}
K_{n}\widehat{\Delta}_{n} & \Longrightarrow T_{G}(c),\label{eq:sample-local-power}\\
K_{n}\widehat{\Delta}_{n}^{\mathrm{LB}} & \Longrightarrow T_{\mathrm{LB}}(c).\label{eq:local-balanced-local-power-stat}
\end{align}
If $q_{a,1-\eta}$ is unique and is a continuity point of the distributions
of both $T_{a}(0)$ and $T_{a}(c)$, then the corresponding permutation
test satisfies 
\begin{equation}
\Pr_{n,c}(\mathrm{reject})\longrightarrow\Pr\{T_{a}(c)>q_{a,1-\eta}\},\qquad a\in\{G,\mathrm{LB}\}.\label{eq:local-power-function}
\end{equation}
\end{theorem}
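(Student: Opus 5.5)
The plan is to represent the sample dCov as a squared weighted $L^{2}$ norm of an empirical characteristic-function process and then run a triangular-array functional CLT with a deterministic drift. Write $K_{n}\widehat{\Delta}_{n}=\|\sqrt{K_{n}}\,\xi_{n}\|_{\mathrm{dCov}}^{2}$, where
\[
\xi_{n}(s,t):=\widehat{\phi}_{n}(s,t)-\widehat{\phi}_{X,n}(s)\widehat{\phi}_{Z,n}(t)
\]
is built from the selected pairs. The weighted norm $\|\cdot\|_{\mathrm{dCov}}$ uses the weight $\{c_{d}c_{1}\|s\|^{d+1}|t|^{2}\}^{-1}$ from \eqref{eq:population-dcov}. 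I would first treat the deterministic threshold $w_{n}$, conditioning on $N_{n}$ so the selected pairs are iid from $f_{n}$. The order-statistic threshold then follows by the same deterministic-range sandwich used in Proposition~\ref{prop:permutation}, together with identity \eqref{eq:orderstat-distance-cancel}, which makes the global statistic threshold-free after selection.

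\textbf{Step 1: mean shift.} Under \eqref{eq:local-alt}, compute the conditional law of $Z^{*}$ given $X=x$ and $Y>w_{n}$. It is
\[
e^{-\alpha_{n}(x)z}\frac{L(w_{n}e^{z};x)}{L(w_{n};x)}=e^{-\alpha_{0}z}\bigl\{1-\delta_{n}g(x)z+O(\delta_{n}^{2})+O(a_{n})\bigr\},
\]
uniformly in $x$ by Assumption~\ref{ass:hall}. The covariate tail density is
\[
f_{X,n}\propto\kappa f_{X}w_{n}^{-\alpha_{n}(x)}L(w_{n};x)\propto\kappa f_{X}\{1-\delta_{n}g(x)\log w_{n}+\cdots\}.
\]
This marginal tilt affects only the $X$-marginal and so cancels in the independence functional to first order. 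I would need $\delta_{n}\log w_{n}\to0$, or handle it by normalizing, and I expect the appendix conditions to impose this. Hence
\[
\sqrt{K_{n}}\{\phi_{n}-\phi_{X,n}\phi_{Z,n}\}(s,t)\to cH_{g,f_{X}^{0}}(s,t),
\]
where
\[
H_{g,f}(s,t)=-\int\!\!\int\bigl(g(x)-\bar{g}_{f}\bigr)e^{isx}\,\partial_{\alpha}\bigl[\alpha e^{-\alpha z}\bigr]_{\alpha_{0}}e^{itz}\,dz\,f(x)\,dx,
\]
which matches the definition of $H_{g,f}$ in Appendix~\ref{app:local-power}. The Hall term contributes $\sqrt{K_{n}}a_{n}\to0$ by \eqref{eq:bias-rate}.

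\textbf{Step 2: fluctuation.} Decompose $\sqrt{K_{n}}\xi_{n}$ into the centered V-statistic process plus the mean shift from Step 1 plus $o_{p}(1)$. Next, show $\sqrt{K_{n}}(\xi_{n}-\Ee\xi_{n})\Rightarrow G_{f_{X}^{0}}$ in $L^{2}$ with respect to the dCov weight. Finite-dimensional convergence comes from Lindeberg, using Assumption~\ref{ass:dcov-process-moments}. Tightness in the weighted $L^{2}$ space needs separate control near the origin and at infinity. Here I would follow the standard Sz\'ekely--Rizzo--Bakirov truncation: restrict to $\{\epsilon\le\|s\|,|t|\le1/\epsilon\}$, and bound the complement uniformly in $n$ by $C\epsilon^{\delta'}$ using the $2+\delta$ moments via $|1-e^{iu}|\le2|u|^{\delta'}$. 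The continuous mapping theorem then gives \eqref{eq:sample-local-power}. For \eqref{eq:local-balanced-local-power-stat}, I would condition on $\mathcal{G}_{n}$ (group boundaries and identities) as in Proposition~\ref{prop:local-window-permutation}. Within each group the retained log exceedances are conditionally iid. The same two steps apply groupwise, with mixture weights $\rho_{j}$ giving $f_{X}^{\mathrm{LB}}$, and the mean shift includes the local thresholds' $\alpha$-dependence at order $\delta_{n}$.

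\textbf{Step 3: permutation critical value.} Show that the permutation distribution of $K_{n}\widehat{\Delta}_{n}^{\pi}$ converges in probability to the law of $T_{a}(0)$ even under $\Pr_{n,c}$. Permuting destroys the dependence, so the permutation process behaves like the process for independent pairs drawn from the empirical marginals. Those marginals converge to $f_{X}^{0}$ and $\Exp(\alpha_{0})$, since the contiguous drift is $O(K_{n}^{-1/2})$. I would prove this with a Hoeffding two-permutation argument: for two independent permutations $\pi,\pi'$, the pair $(K_{n}\widehat{\Delta}_{n}^{\pi},K_{n}\widehat{\Delta}_{n}^{\pi'})$ converges to independent copies of $T_{a}(0)$. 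This is the approach of \citet{ChungRomano2013}. The continuity point and uniqueness of $q_{a,1-\eta}$ then give $c^{\pi}_{n,1-\eta}\to_{p}q_{a,1-\eta}$, and Slutsky yields \eqref{eq:local-power-function}.

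I expect the main obstacle to be tightness in the weighted $L^{2}$ norm for the triangular array. It is compounded in the locally balanced case by the random, dependent group thresholds, which make the pooled sample non-iid across groups. To handle this, I would first establish everything conditionally on $\mathcal{G}_{n}$ and then uncondition via bounded convergence.
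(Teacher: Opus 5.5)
Your proposal is correct and follows essentially the paper's route. You represent the statistic as a weighted-$L^{2}$ norm of the empirical characteristic-function process and expand the conditional mark characteristic function to first order in $\alpha$, so that the common $\phi_{0}$ cancels; the Hall bias is removed by $\sqrt{K_{n}}w_{n}^{-\underline{\beta}}\to0$ and the design tilt by $\log w_{n}/\sqrt{K_{n}}\to0$. You then prove a truncated weighted-$L^{2}$ CLT, apply continuous mapping, and combine with the permutation limit via Slutsky, where your Hoeffding two-permutation argument is an acceptable substitute for the paper's conditional sampling-without-replacement CLT.

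Two small fixes. First, your $H_{g,f}$ has the opposite sign to the paper's; this is harmless, since $G_{f}\stackrel{d}{=}-G_{f}$. Second, in the locally balanced case the paper conditions on the retained covariates as well, so the marks are independent but not identically distributed, and one uses Lindeberg--Feller together with \eqref{eq:local-balanced-design-limit}. This covers learned median-split groups, where within-group iid sampling of the pairs fails. It also shows there is no extra order-$\delta_{n}$ term from the local thresholds: their only effect is the design tilt $\widehat{w}_{n,j}^{-\delta_{n}g(x)}=1+o_{p}(1)$.
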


At $c=0$, $T_{f}(0)=\|G_{f}\|_{\mathrm{dCov}}^{2}$, so Theorem~\ref{thm:local-power}
gives the sampling null limit of the scaled dCov statistic. Under
the continuity condition above, \eqref{eq:local-power-function} reduces
to the nominal rejection probability $\eta$. For fixed $c\neq0$,
the same theorem gives the local asymptotic power against $K_{n}^{-1/2}$
departures. These distributional limits complement the size results
in Section~\ref{sec:null-validity}. Implementation still uses permutation
critical values and does not require estimating $q_{a,1-\eta}$.

\section{Simulations}

\label{sec:simulations}

\subsection{Data-generating processes and threshold choice}

The five designs examine size and different forms of tail-index heterogeneity.
DGP~1 checks size in the presence of finite-threshold second-order
bias. DGP~2 studies scalar local alternatives. DGP~3a has smooth
bivariate heterogeneity. DGP~3b is a pure interaction. DGP~3c
adds nonlinear curvature that is not contained in the working score
model. Throughout, $U\sim\operatorname{Unif}(0,1)$ is independent
of the covariates.

The proposed procedure is the adaptive two-group construction in Sections~\ref{sec:adaptive-localization}
and~\ref{sec:implementation-tuning}. We randomly use one fifth of
the sample for training and the remaining four fifths for inference.
In the training sample, $M_{1}=\lfloor0.30m\rfloor$ and $M_{2}=\lfloor0.25M_{1}\rfloor$.
The inference observations are divided at the median fitted score,
and the total $k$ retained tail observations are allocated approximately
equally across the two groups. For scalar covariates the probit score
uses $(X,X^{2})$. For bivariate covariates it uses the same generic
second-order basis in every design, 
\[
(X_{1},X_{2},X_{1}^{2},X_{2}^{2},X_{1}X_{2}).
\]
Thus the score specification is not changed across DGPs~3a--3c.

We use sample sizes $n\in\{2000,5000\}$ and two tail sample sizes,
\begin{equation}
\mathcal{K}_{2000}:=\{100,50\},\qquad\mathcal{K}_{5000}:=\{200,100\}.\label{eq:simulation-k-grid}
\end{equation}
We write $k_{L}$ and $k_{S}$ for the larger and smaller values,
respectively.

\paragraph{DGP 1: Hall-type null with $\beta=1$.}

Let $X\sim\operatorname{Unif}[-1,1]$ and $s(x):=\exp(x/2)$. Generate
\[
Y:=s(X)\left(U^{-1/2}-1\right).
\]
Thus the tail exponent is constant at $\alpha_{0}=2$, while the leading
tail-frequency term is $\kappa(x)=s^{2}(x)=e^{x}$. The Hall second-order
exponent is $\beta=1$, with coefficient $D(x)=-2s(x)$.

\paragraph{DGP 2: Linear scalar local alternatives.}

Let $X\sim\operatorname{Unif}[-1,1]$, define $\kappa(x):=\exp(x/2)$
and $g(x):=\sqrt{3}\,x$. For each $(n,k)$ in \eqref{eq:simulation-k-grid},
set 
\begin{equation}
\alpha_{n,c,k}(x):=2+\frac{c}{\sqrt{k}}g(x)=2+\frac{c\sqrt{3}}{\sqrt{k}}x,\label{eq:dgp2-alpha}
\end{equation}
and generate 
\[
Y:=\left\{ \frac{\kappa(X)}{U}\right\} ^{1/\alpha_{n,c,k}(X)}.
\]
We use $c\in\{0,1,2,4\}$. The case $c=0$ is an exact-Pareto null
with covariate-dependent tail frequency. The case $c>0$ gives a $k^{-1/2}$
local departure. This design is deliberately favorable to regression-based
competitors.

\paragraph{DGP 3a: Smooth bivariate heterogeneity.}

Let $X_{1},X_{2}\stackrel{\mathrm{iid}}{\sim}\operatorname{Unif}[-1,1]$
and 
\[
\alpha_{\theta}(X):=2\exp\{\theta X_{1}\},\qquad\kappa(X):=\exp(X_{2}/2),
\]
with 
\[
Y:=\left\{ \frac{\kappa(X)}{U}\right\} ^{1/\alpha_{\theta}(X)}.
\]
Here $X_{1}$ affects tail thickness and $X_{2}$ affects tail frequency
only. We use $\theta\in\{0,0.15,0.30,0.45\}$.

\paragraph{DGP 3b: Pure interaction.}

Let $X_{1},X_{2}\stackrel{\mathrm{iid}}{\sim}\operatorname{Unif}[-1,1]$
and 
\[
\alpha_{\theta}(X):=2\exp\{\theta X_{1}X_{2}\},\qquad\kappa(X):=1,\qquad Y:=U^{-1/\alpha_{\theta}(X)}.
\]
We use $\theta\in\{0,0.5,1,1.5\}$. Neither $X_{1}$ nor $X_{2}$
alone has a first-order linear projection on the interaction. The
generic second-order adaptive score does contain $X_{1}X_{2}$, but
the same score basis is used in all bivariate designs rather than
chosen for this DGP.

\paragraph{DGP 3c: Cubic-perturbed nonlinear heterogeneity.}

Again let $X_{1},X_{2}\stackrel{\mathrm{iid}}{\sim}\operatorname{Unif}[-1,1]$.
Define 
\[
h_{0}(x):=x_{1}x_{2}+\frac{1}{2}(x_{2}^{2}-x_{1}^{2}),
\]
and, with $\delta=0.5$, 
\[
g_{\delta}(x):=\frac{h_{0}(x)+\frac{\delta}{2}(x_{1}^{3}+x_{2}^{3})}{1+\delta}.
\]
Set 
\[
\alpha_{\theta}(X):=2\exp\{\theta g_{\delta}(X)\},\qquad\kappa(X):=1,\qquad Y:=U^{-1/\alpha_{\theta}(X)},
\]
for $\theta\in\{0,0.5,1,1.5\}$. The quadratic part rotates the interaction
structure, while the cubic terms are deliberately omitted from the
generic second-order probit score. DGP~3c therefore checks the adaptive
procedure under working-model misspecification. At the same time,
the cubic terms have nonzero linear projections under the uniform
design, so the linear regression competitors are not made mechanically
powerless.

\subsection{Competing procedures and implementation}

We report four procedures: the adaptive two-group dCov test, the global
top-$k$ dCov benchmark, the Kinsvater--Fried L-test, and the Wang--Li
test. The adaptive test uses the sample split and tuning described
above. The global test uses the same dCov statistic and the top $k$
observations from the full sample. The two regression competitors
use the same nominal $k$ but otherwise follow their own constructions.
For all bivariate designs, their working regressions contain the linear
main effects $X_{1}$ and $X_{2}$. We do not add design-specific
interactions or polynomial terms.

\paragraph{Wang--Li three-stage test.}

We implement the constancy test of \citet{WangLi2013} with $\lambda=0$
(the log transformation). Transformed linear quantile regressions
produce covariate-dependent EVI estimates $\widehat{\gamma}(X_{i})$,
and 
\[
T_{n}^{WL}:=\frac{1}{n}\sum_{i=1}^{n}\{\widehat{\gamma}(X_{i})-\widehat{\gamma}_{P}\}^{2},\qquad\widehat{\gamma}_{P}:=\frac{1}{n}\sum_{i=1}^{n}\widehat{\gamma}(X_{i}).
\]
With the nonconstant regressors centered, we use $S_{n}^{WL}=kT_{n}^{WL}/\widehat{\gamma}_{P}^{2}$
and the $\chi_{d}^{2}$ reference distribution from Corollary~3.1
of \citet{WangLi2013}: one degree of freedom for scalar designs and
two for bivariate designs.

\paragraph{Kinsvater--Fried L-test.}

We use the L-estimator-based test of \citet{KinsvaterFried2017}.
For a given $k$, their method estimates a high conditional quantile,
forms approximately $k$ relative excesses, and estimates a linear
conditional EVI model. Following their recommendation, we use 20 probabilities
equally spaced over $[1/2,1-1/40]$ and their optimal L-weights. Constancy
is tested by the joint Wald test that the slope coefficients are zero.

All procedures are evaluated at the 5\% nominal level with 1,000 Monte
Carlo replications. Both dCov tests use 299 random permutations with
the plus-one correction. The adaptive and global dCov procedures use
the same nominal total tail count $k$.

\subsection{Simulation results}

Tables~\ref{tab:sim-dgp12}--\ref{tab:sim-dgp3c} in Appendix~\ref{app:simulation-additional}
report the rejection frequencies. Figures~\ref{fig:sim-size}--\ref{fig:sim-dgp3c}
summarize them with pointwise 95\% Wilson intervals.

\paragraph{Size and scalar local power.}

Under DGP~1, adaptive dCov, Kinsvater--Fried, and Wang--Li are
close to nominal size. Global dCov shows mild over-rejection in one
design.

\begin{figure}[H]
\centering \includegraphics[width=0.96\textwidth]{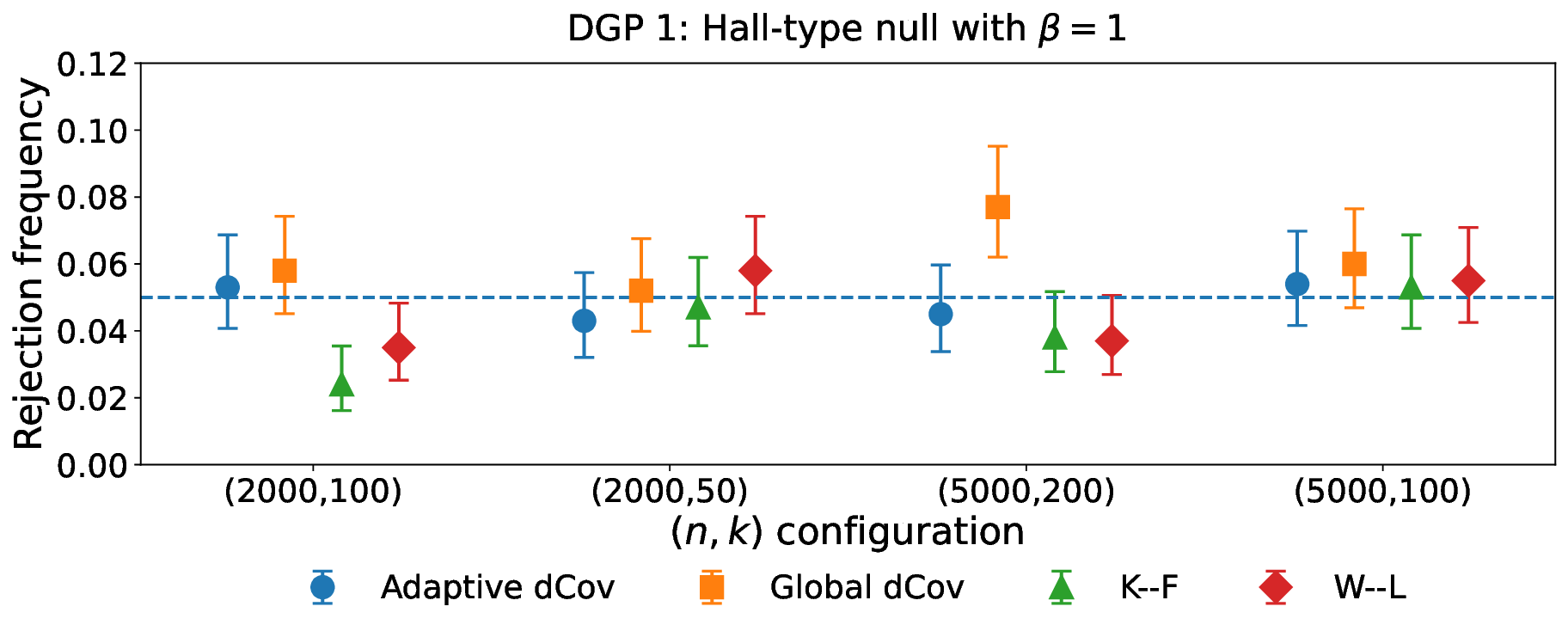}

\begin{singlespace}
\caption{Size under the Hall-type null (DGP~1).}
\label{fig:sim-size} 
\end{singlespace}
\end{figure}

Under DGP~2, the four procedures have similar power for moderate
local departures, while the regression tests are somewhat stronger
at the largest $c$, as expected from the locally linear design.

\begin{figure}[H]
\centering \includegraphics[width=0.96\textwidth]{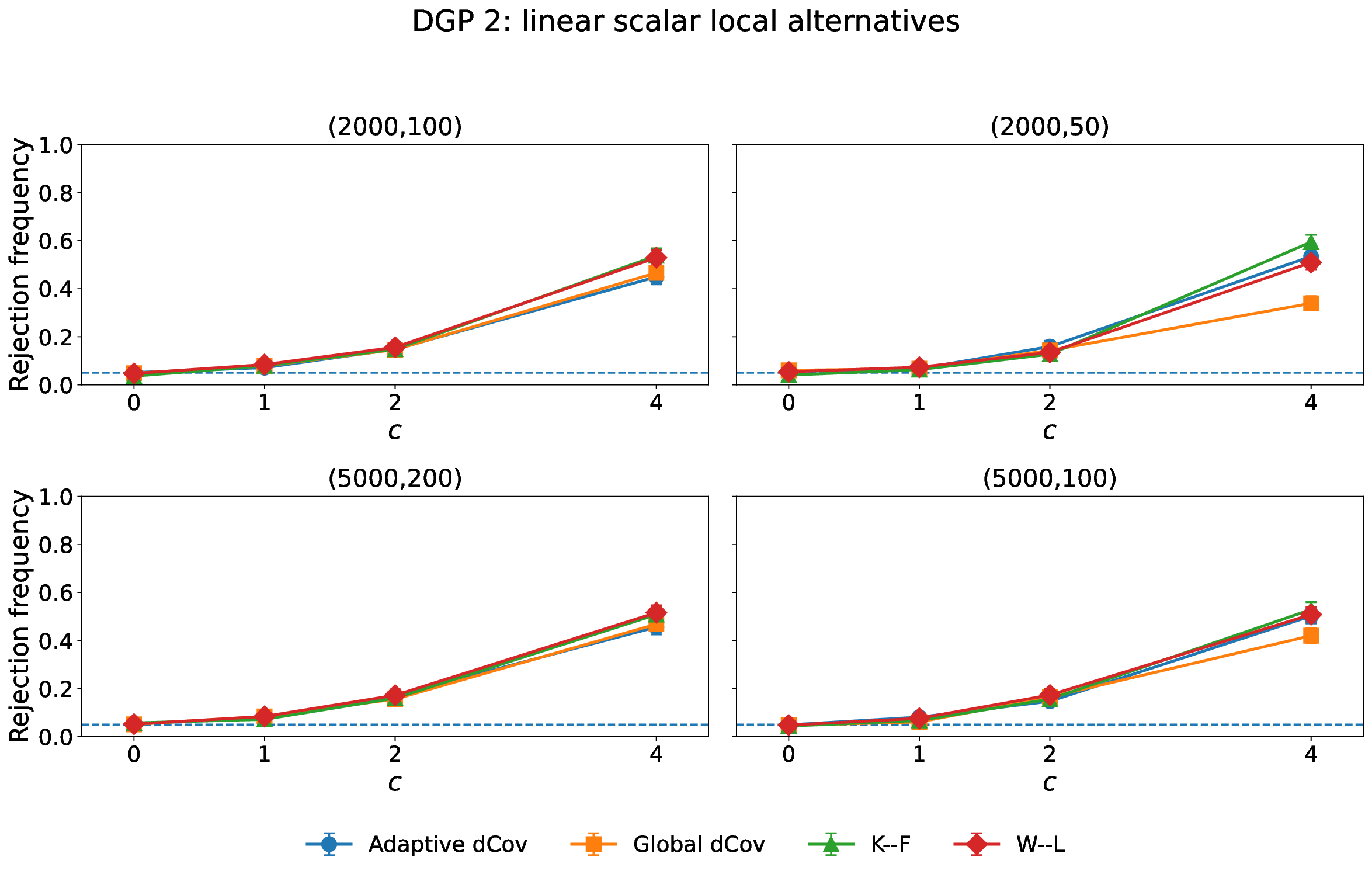}

\begin{singlespace}
\caption{Power under linear scalar local alternatives (DGP~2).}
\label{fig:sim-dgp2} 
\end{singlespace}
\end{figure}

\paragraph{Smooth and interaction alternatives.}

Under DGP~3a, the regression procedures benefit from the smooth main
effect, although adaptive dCov gains substantially over global dCov
as heterogeneity strengthens.

\begin{figure}[H]
\centering \includegraphics[width=0.96\textwidth]{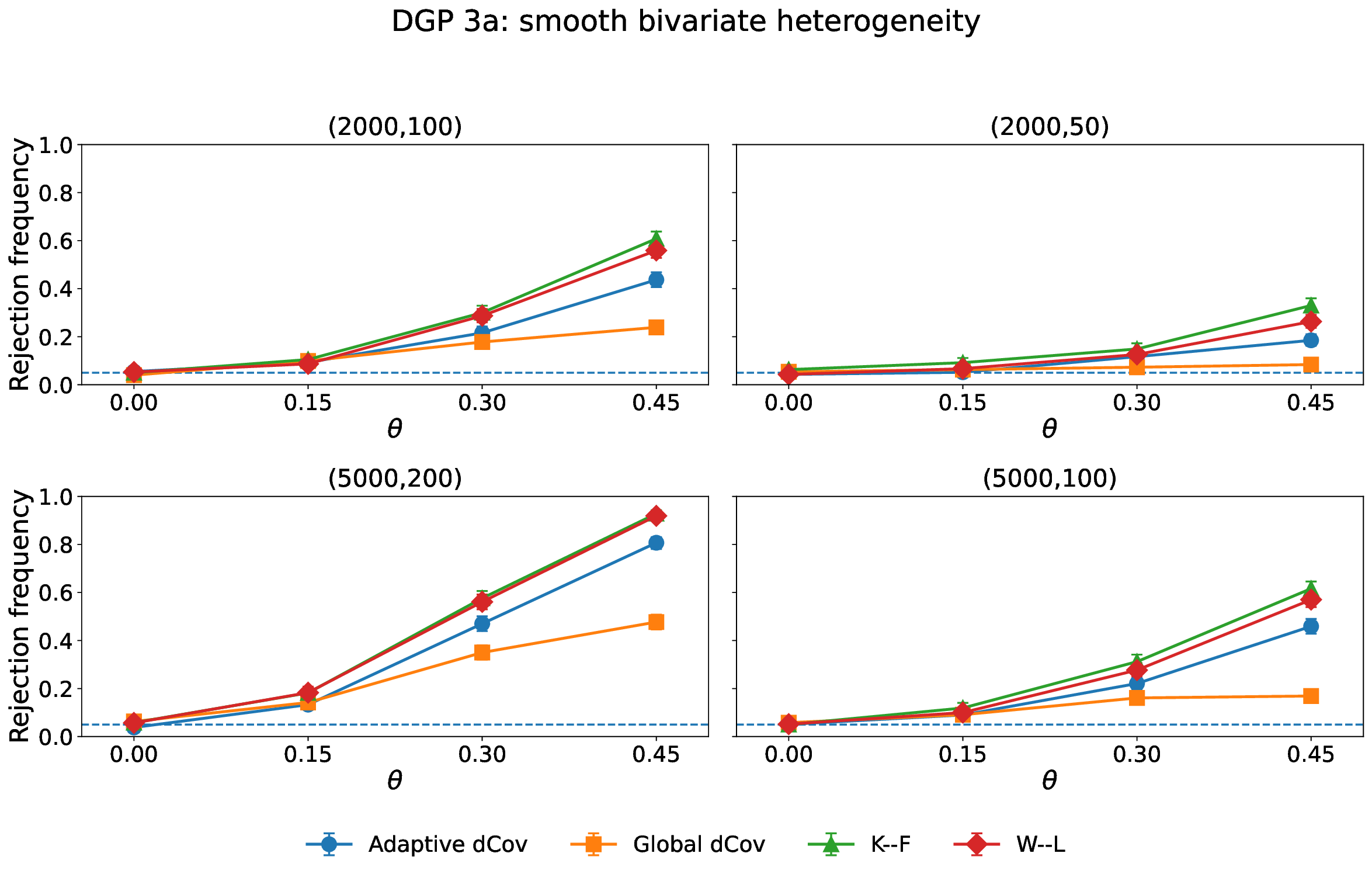}

\begin{singlespace}
\caption{Power under smooth bivariate heterogeneity (DGP~3a).}
\label{fig:sim-dgp3a} 
\end{singlespace}
\end{figure}

Under the pure interaction in DGP~3b, adaptive dCov becomes the most
powerful as the departure strengthens, while the main-effect Kinsvater--Fried
regression has little power. This illustrates the value of learning
a useful localization when the relevant functional form is not specified.

\begin{figure}[H]
\centering \includegraphics[width=0.96\textwidth]{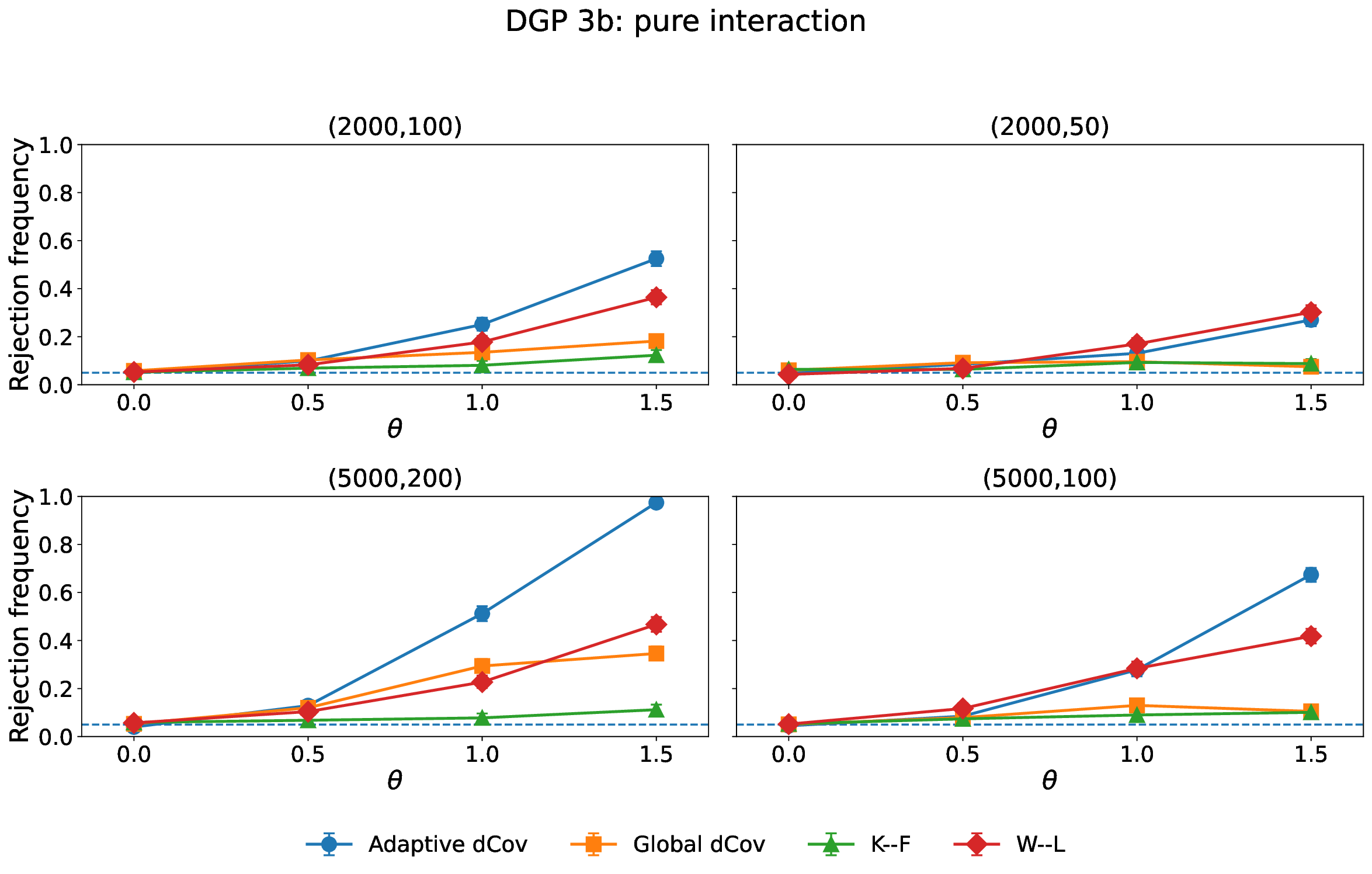}

\begin{singlespace}
\caption{Power under a pure nonlinear interaction (DGP~3b).}
\label{fig:sim-dgp3b} 
\end{singlespace}
\end{figure}

\paragraph{Misspecified nonlinear alternative.}

In DGP~3c, where the adaptive score omits the cubic terms, all procedures
remain correctly sized and adaptive dCov remains competitive at the
stronger alternatives, without uniformly dominating Wang--Li.

\begin{figure}[H]
\centering \includegraphics[width=0.96\textwidth]{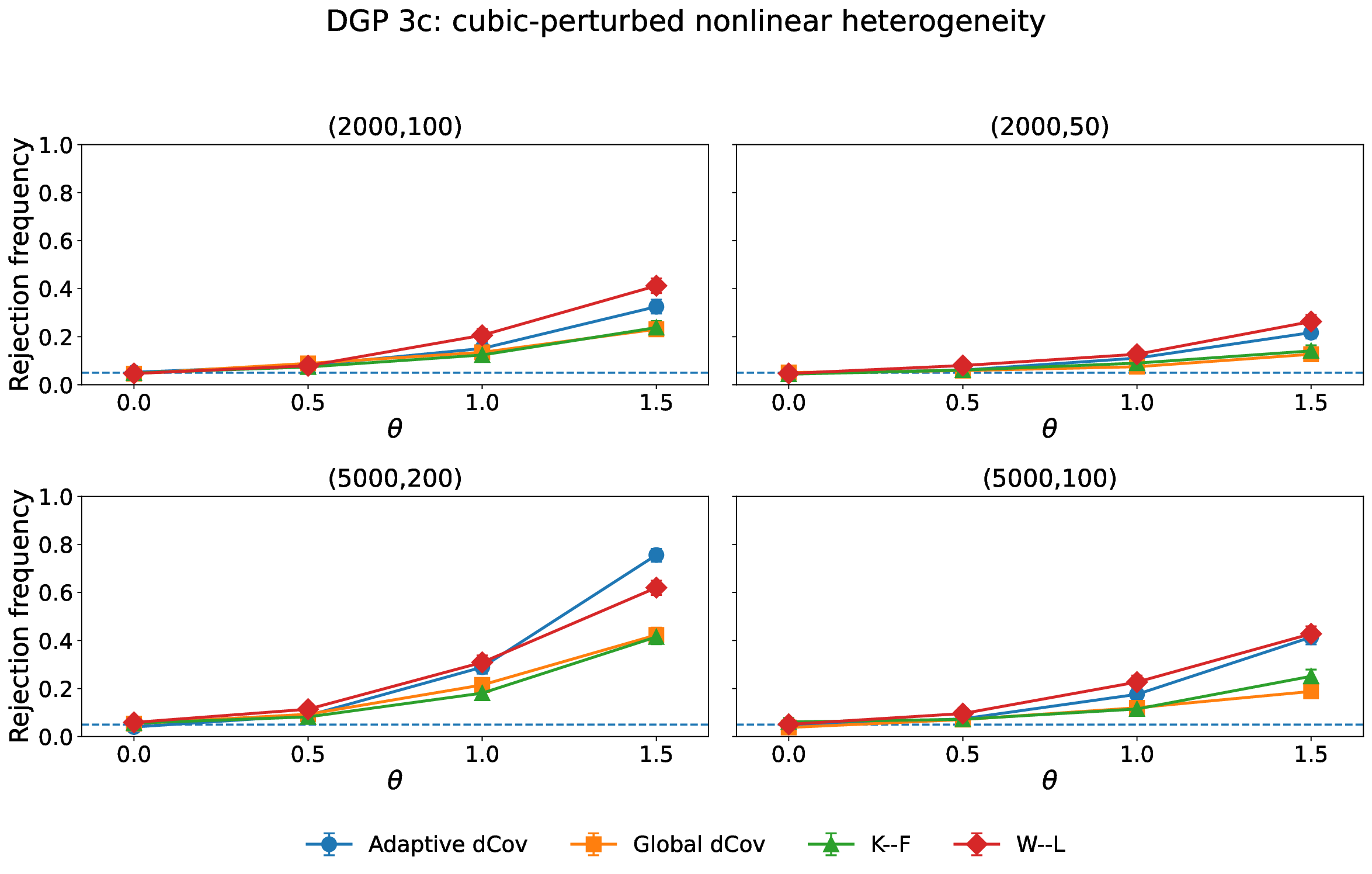}

\begin{singlespace}
\caption{Power under cubic-perturbed nonlinear heterogeneity (DGP~3c).}
\label{fig:sim-dgp3c} 
\end{singlespace}
\end{figure}

\paragraph{Effect of local balancing.}

Table~\ref{tab:global-local} shows that global and adaptive dCov
are similar near the null, but adaptive balancing can be substantially
more powerful under stronger interaction and nonlinear alternatives.
This matches the tail-design concentration mechanism in Section~\ref{sec:power}.

\FloatBarrier

\section{Empirical Illustration}

\label{sec:application}

We use the French motor third-party liability data \texttt{freMTPL2}
from the \texttt{CASdatasets} package.\footnote{See \url{https://dutangc.github.io/CASdatasets/reference/freMTPL.html}.}
After aggregating claims within policy and retaining positive claim
amounts, the sample contains 24,944 observations. We study standardized
\texttt{BonusMalus} and \texttt{VehPower}.

\subsection{Tail-index homogeneity}

We use $k\in\{1247,624\}$, approximately 5\% and 2.5\% of the usable
sample. Adaptive dCov uses the same 20\% training split, generic second-order
score, and two-group tuning as in the simulations, with 999 permutations.
We repeat the random split ten times and summarize the resulting $p$-values.
The global benchmark uses the top $k$ observations, and the Kinsvater--Fried
and Wang--Li tests use the same nominal tail counts. Table~\ref{tab:application-tail-tests}
reports the results.

\begin{table}[!htbp]
\centering \caption{Tail-index homogeneity tests for French motor claim severity}
\label{tab:application-tail-tests} {\small{}{}}{\small{}%
\begin{tabular}{rlrl}
\hline 
{\small$k$  } & {\small Method  } & {\small Statistic  } & {\small$p$-value or 10-split summary }\tabularnewline
\hline 
{\small 1247  } & {\small Adaptive dCov  } & {\small ---  } & {\small min 0.129, max 0.710, median 0.458, mean 0.455 }\tabularnewline
 & {\small Global dCov  } & {\small 0.000941  } & {\small 0.720 }\tabularnewline
 & {\small K--F  } & {\small 2.051  } & {\small 0.359 }\tabularnewline
 & {\small W--L  } & {\small 3.940  } & {\small 0.139 }\tabularnewline
{\small 624  } & {\small Adaptive dCov  } & {\small ---  } & {\small min 0.596, max 0.996, median 0.825, mean 0.832 }\tabularnewline
 & {\small Global dCov  } & {\small 0.001932  } & {\small 0.730 }\tabularnewline
 & {\small K--F  } & {\small 1.603  } & {\small 0.449 }\tabularnewline
 & {\small W--L  } & {\small 2.895  } & {\small 0.235 }\tabularnewline
\hline 
\end{tabular}}{\small\par}

{\small\smallskip{}
 \raggedright}{\footnotesize\textit{Notes:}}{\footnotesize{} The
covariates are standardized BonusMalus and VehPower. Adaptive dCov
uses a 20\% training subsample, the two-group score split described
in Section~\ref{sec:adaptive-localization}, and 999 permutations.
Global dCov uses the top $k$ observations from the full usable sample
and 999 permutations. For K--F the reported statistic is the Wald
statistic. For W--L it is the $\chi^{2}$ statistic. K--F denotes
the Kinsvater--Fried L-test and W--L denotes the Wang--Li test. }{\footnotesize\par}
\end{table}

None of the four procedures rejects tail-index homogeneity. Across
the ten adaptive splits, the $p$-values range from 0.129 to 0.710
for $k=1247$ and from 0.596 to 0.996 for $k=624$. The global and
regression-based tests lead to the same conclusion. The common EVI
estimates from Wang--Li and Kinsvater--Fried are also similar (about
0.81--0.85). Thus the data provide no evidence against a common tail
index for these covariates, although non-rejection does not establish
homogeneity.

\subsection{High-quantile regression}

For comparison, we estimate linear quantile regressions at $\tau\in\{0.95,0.975\}$,
\[
Q_{\tau}(Y\mid X)=\beta_{0,\tau}+\beta_{1,\tau}\text{BonusMalus}+\beta_{2,\tau}\text{VehPower},
\]
with 199 pairs-bootstrap replications. See Table~\ref{tab:application-qr}.

\begin{table}[!htbp]
\centering \caption{Upper conditional quantile regressions for claim severity}
\label{tab:application-qr} {\small{}{}{}{}{}{}{}{}{}{}{}{}{}{}{}{}{}{}{}{}{}{}{}{}{}{}{}}{\small{}%
\begin{tabular}{lrrrr}
\hline 
 & \multicolumn{2}{c}{{\small$\tau=0.95$}} & \multicolumn{2}{c}{{\small$\tau=0.975$}}\tabularnewline
{\small Variable  } & {\small Estimate  } & {\small 95\% CI  } & {\small Estimate  } & {\small 95\% CI }\tabularnewline
\hline 
{\small Intercept  } & {\small 5060.98  } & {\small$[4988.10,\ 5379.59]$  } & {\small 8868.06  } & {\small$[8307.30,\ 9232.81]$ }\tabularnewline
{\small BonusMalus  } & {\small 306.75  } & {\small$[193.58,\ 643.46]$  } & {\small 775.90  } & {\small$[61.61,\ 1442.23]$ }\tabularnewline
{\small VehPower  } & {\small 239.43  } & {\small$[7.74,\ 590.75]$  } & {\small 339.83  } & {\small$[-343.85,\ 847.52]$ }\tabularnewline
\hline 
\end{tabular}}{\small\par}

{\small\smallskip{}
 }{\small\par}

\raggedright{}{\footnotesize\textit{Notes:}}{\footnotesize{} The
dependent variable is policy-level claim severity. BonusMalus and
VehPower are standardized before estimation. Confidence intervals
are percentile intervals from 199 pairs-bootstrap replications. }{\footnotesize\par}

\end{table}

BonusMalus is positively associated with both upper conditional quantiles,
while vehicle power is significant only at $\tau=0.95$. Thus upper
quantiles vary with the covariates even though tail-index homogeneity
is not rejected. This supports considering common-tail-index extremal
quantile methods such as \citet{Chernozhukov2005} and \citet{WangLiHe2012},
subject to their remaining assumptions.

\section{Conclusion}

We propose a permutation distance-covariance test of tail-index homogeneity
without estimating tail indices. We establish validity and power results
for global and locally balanced versions. Global consistency can coexist
with tail concentration, while the adaptive two-group construction
can preserve covariate variation in the selected tail sample. Simulations
show competitive performance across linear, interaction, and misspecified
nonlinear alternatives. In French motor insurance, the tests do not
reject homogeneity, while risk characteristics affect upper conditional
quantiles.

The practical tradeoff is between reducing finite-threshold bias,
learning a useful two-group localization, and retaining enough tail
observations for inference. Null validity does not require correct
specification of the training score. The local-power analysis for
learned groups currently uses a stability condition. Relaxing that
condition is a natural direction for future research. \medskip{}

 \bibliographystyle{ecta}
\bibliography{references}

\newpage{}

\appendix
\setcounter{page}{1} 
\global\long\def\thepage{A\arabic{page}}%

\begin{center}
{\Large\textbf{Appendix}}{\Large\par}
\par\end{center}

\begin{center}
 
\par\end{center}

\section{Additional Simulation and Application Results}

\label{app:simulation-additional}

This appendix gives further results for the simulations and the insurance
application.

\subsection{Numerical simulation results}

Tables~\ref{tab:sim-dgp12}--\ref{tab:sim-dgp3c} give the design-specific
rejection frequencies used in Section~\ref{sec:simulations}.

\begin{table}[htbp]
\begin{singlespace}
\centering \caption{Empirical rejection frequencies: size and scalar local alternatives}
\label{tab:sim-dgp12} {\small{}{}}{\small{}%
\begin{tabular}{llcccc}
\hline 
 &  & \multicolumn{4}{c}{{\small$(n,k)$}}\tabularnewline
{\small Parameter  } & {\small Method  } & {\small$(2000,100)$  } & {\small$(2000,50)$  } & {\small$(5000,200)$  } & {\small$(5000,100)$ }\tabularnewline
\hline 
\multicolumn{6}{l}{{\small\textit{Panel A: DGP 1 (Hall-type null, $\beta=1$)}}}\tabularnewline
\hline 
{\small Null  } & {\small Adaptive dCov  } & {\small .053  } & {\small .043  } & {\small .045  } & {\small .054 }\tabularnewline
 & {\small Global dCov  } & {\small .058  } & {\small .052  } & {\small .077  } & {\small .060 }\tabularnewline
 & {\small K--F  } & {\small .024  } & {\small .047  } & {\small .038  } & {\small .053 }\tabularnewline
 & {\small W--L  } & {\small .035  } & {\small .058  } & {\small .037  } & {\small .055 }\tabularnewline
\multicolumn{6}{l}{{\small\textit{Panel B: DGP 2 (linear scalar local alternatives)}}}\tabularnewline
\hline 
{\small$c=0$ (Null)  } & {\small Adaptive dCov  } & {\small .051  } & {\small .055  } & {\small .052  } & {\small .049 }\tabularnewline
 & {\small Global dCov  } & {\small .048  } & {\small .060  } & {\small .051  } & {\small .047 }\tabularnewline
 & {\small K--F  } & {\small .036  } & {\small .040  } & {\small .057  } & {\small .044 }\tabularnewline
 & {\small W--L  } & {\small .047  } & {\small .054  } & {\small .052  } & {\small .048 }\tabularnewline
{\small$c=1$  } & {\small Adaptive dCov  } & {\small .071  } & {\small .067  } & {\small .080  } & {\small .081 }\tabularnewline
 & {\small Global dCov  } & {\small .078  } & {\small .067  } & {\small .084  } & {\small .061 }\tabularnewline
 & {\small K--F  } & {\small .079  } & {\small .063  } & {\small .072  } & {\small .067 }\tabularnewline
 & {\small W--L  } & {\small .084  } & {\small .073  } & {\small .084  } & {\small .076 }\tabularnewline
{\small$c=2$  } & {\small Adaptive dCov  } & {\small .148  } & {\small .159  } & {\small .171  } & {\small .147 }\tabularnewline
 & {\small Global dCov  } & {\small .146  } & {\small .143  } & {\small .156  } & {\small .165 }\tabularnewline
 & {\small K--F  } & {\small .148  } & {\small .127  } & {\small .160  } & {\small .156 }\tabularnewline
 & {\small W--L  } & {\small .156  } & {\small .135  } & {\small .172  } & {\small .173 }\tabularnewline
{\small$c=4$  } & {\small Adaptive dCov  } & {\small .449  } & {\small .534  } & {\small .456  } & {\small .502 }\tabularnewline
 & {\small Global dCov  } & {\small .466  } & {\small .339  } & {\small .468  } & {\small .420 }\tabularnewline
 & {\small K--F  } & {\small .537  } & {\small .594  } & {\small .508  } & {\small .529 }\tabularnewline
 & {\small W--L  } & {\small .529  } & {\small .509  } & {\small .516  } & {\small .508 }\tabularnewline
\hline 
\end{tabular}}{\small\par}

{\small\smallskip{}
 \raggedright}{\footnotesize\textit{Notes:}}{\footnotesize{} Rejection
frequencies are based on 1,000 Monte Carlo replications at the 5\%
level. Adaptive dCov uses a 20\% training sample, the two-group score
split in Section~\ref{sec:adaptive-localization}, and 299 permutations.
Global dCov uses the top $k$ observations and 299 permutations. K--F
denotes Kinsvater--Fried and W--L denotes Wang--Li. }{\footnotesize\par}
\end{singlespace}
\end{table}

\begin{table}[htbp]
\begin{singlespace}
\centering \caption{Empirical rejection frequencies: smooth and interaction alternatives}
\label{tab:sim-dgp3ab} {\small{}{}}{\small{}%
\begin{tabular}{llcccc}
\hline 
 &  & \multicolumn{4}{c}{{\small$(n,k)$}}\tabularnewline
{\small Parameter  } & {\small Method  } & {\small$(2000,100)$  } & {\small$(2000,50)$  } & {\small$(5000,200)$  } & {\small$(5000,100)$ }\tabularnewline
\hline 
\multicolumn{6}{l}{{\small\textit{Panel A: DGP 3a (smooth bivariate heterogeneity)}}}\tabularnewline
\hline 
{\small$\theta=0$ (Null)  } & {\small Adaptive dCov  } & {\small .055  } & {\small .043  } & {\small .038  } & {\small .051 }\tabularnewline
 & {\small Global dCov  } & {\small .040  } & {\small .054  } & {\small .063  } & {\small .058 }\tabularnewline
 & {\small K--F  } & {\small .050  } & {\small .063  } & {\small .058  } & {\small .050 }\tabularnewline
 & {\small W--L  } & {\small .053  } & {\small .043  } & {\small .057  } & {\small .052 }\tabularnewline
{\small$\theta=0.15$  } & {\small Adaptive dCov  } & {\small .093  } & {\small .052  } & {\small .133  } & {\small .090 }\tabularnewline
 & {\small Global dCov  } & {\small .099  } & {\small .063  } & {\small .142  } & {\small .091 }\tabularnewline
 & {\small K--F  } & {\small .105  } & {\small .092  } & {\small .182  } & {\small .119 }\tabularnewline
 & {\small W--L  } & {\small .087  } & {\small .067  } & {\small .183  } & {\small .100 }\tabularnewline
{\small$\theta=0.3$  } & {\small Adaptive dCov  } & {\small .216  } & {\small .117  } & {\small .470  } & {\small .221 }\tabularnewline
 & {\small Global dCov  } & {\small .178  } & {\small .073  } & {\small .350  } & {\small .161 }\tabularnewline
 & {\small K--F  } & {\small .300  } & {\small .149  } & {\small .576  } & {\small .312 }\tabularnewline
 & {\small W--L  } & {\small .287  } & {\small .126  } & {\small .561  } & {\small .277 }\tabularnewline
{\small$\theta=0.45$  } & {\small Adaptive dCov  } & {\small .437  } & {\small .185  } & {\small .807  } & {\small .459 }\tabularnewline
 & {\small Global dCov  } & {\small .239  } & {\small .084  } & {\small .477  } & {\small .169 }\tabularnewline
 & {\small K--F  } & {\small .608  } & {\small .330  } & {\small .928  } & {\small .616 }\tabularnewline
 & {\small W--L  } & {\small .559  } & {\small .263  } & {\small .919  } & {\small .570 }\tabularnewline
\multicolumn{6}{l}{{\small\textit{Panel B: DGP 3b (pure interaction)}}}\tabularnewline
\hline 
{\small$\theta=0$ (Null)  } & {\small Adaptive dCov  } & {\small .052  } & {\small .049  } & {\small .040  } & {\small .045 }\tabularnewline
 & {\small Global dCov  } & {\small .058  } & {\small .060  } & {\small .053  } & {\small .051 }\tabularnewline
 & {\small K--F  } & {\small .052  } & {\small .064  } & {\small .059  } & {\small .053 }\tabularnewline
 & {\small W--L  } & {\small .053  } & {\small .043  } & {\small .057  } & {\small .052 }\tabularnewline
{\small$\theta=0.5$  } & {\small Adaptive dCov  } & {\small .099  } & {\small .085  } & {\small .129  } & {\small .085 }\tabularnewline
 & {\small Global dCov  } & {\small .103  } & {\small .092  } & {\small .120  } & {\small .079 }\tabularnewline
 & {\small K--F  } & {\small .069  } & {\small .064  } & {\small .068  } & {\small .074 }\tabularnewline
 & {\small W--L  } & {\small .083  } & {\small .068  } & {\small .104  } & {\small .117 }\tabularnewline
{\small$\theta=1$  } & {\small Adaptive dCov  } & {\small .251  } & {\small .131  } & {\small .512  } & {\small .278 }\tabularnewline
 & {\small Global dCov  } & {\small .135  } & {\small .096  } & {\small .294  } & {\small .130 }\tabularnewline
 & {\small K--F  } & {\small .081  } & {\small .093  } & {\small .078  } & {\small .090 }\tabularnewline
 & {\small W--L  } & {\small .178  } & {\small .170  } & {\small .227  } & {\small .284 }\tabularnewline
{\small$\theta=1.5$  } & {\small Adaptive dCov  } & {\small .525  } & {\small .270  } & {\small .974  } & {\small .674 }\tabularnewline
 & {\small Global dCov  } & {\small .182  } & {\small .075  } & {\small .346  } & {\small .105 }\tabularnewline
 & {\small K--F  } & {\small .123  } & {\small .088  } & {\small .112  } & {\small .101 }\tabularnewline
 & {\small W--L  } & {\small .364  } & {\small .302  } & {\small .467  } & {\small .418 }\tabularnewline
\hline 
\end{tabular}}{\small\par}

{\small\smallskip{}
 \raggedright}{\footnotesize\textit{Notes:}}{\footnotesize{} Rejection
frequencies are based on 1,000 Monte Carlo replications at the 5\%
level. The adaptive score uses the generic second-order basis $(X_{1},X_{2},X_{1}^{2},X_{2}^{2},X_{1}X_{2})$
in both designs. The K--F and W--L working regressions use the linear
main effects $X_{1}$ and $X_{2}$ only. Both dCov procedures use
299 permutations. }{\footnotesize\par}
\end{singlespace}
\end{table}

\begin{table}[htbp]
\begin{singlespace}
\centering \caption{Empirical rejection frequencies: cubic-perturbed nonlinear heterogeneity
(DGP 3c)}
\label{tab:sim-dgp3c} {\small{}{}}{\small{}%
\begin{tabular}{llcccc}
\hline 
 &  & \multicolumn{4}{c}{{\small$(n,k)$}}\tabularnewline
{\small Parameter  } & {\small Method  } & {\small$(2000,100)$  } & {\small$(2000,50)$  } & {\small$(5000,200)$  } & {\small$(5000,100)$ }\tabularnewline
\hline 
\multicolumn{6}{l}{{\small\textit{DGP 3c}}}\tabularnewline
\hline 
{\small$\theta=0$ (Null)  } & {\small Adaptive dCov  } & {\small .052  } & {\small .049  } & {\small .040  } & {\small .045 }\tabularnewline
 & {\small Global dCov  } & {\small .046  } & {\small .051  } & {\small .054  } & {\small .038 }\tabularnewline
 & {\small K--F  } & {\small .049  } & {\small .044  } & {\small .056  } & {\small .061 }\tabularnewline
 & {\small W--L  } & {\small .047  } & {\small .047  } & {\small .059  } & {\small .051 }\tabularnewline
{\small$\theta=0.5$  } & {\small Adaptive dCov  } & {\small .081  } & {\small .061  } & {\small .088  } & {\small .074 }\tabularnewline
 & {\small Global dCov  } & {\small .089  } & {\small .059  } & {\small .093  } & {\small .071 }\tabularnewline
 & {\small K--F  } & {\small .074  } & {\small .061  } & {\small .082  } & {\small .072 }\tabularnewline
 & {\small W--L  } & {\small .079  } & {\small .080  } & {\small .114  } & {\small .096 }\tabularnewline
{\small$\theta=1$  } & {\small Adaptive dCov  } & {\small .151  } & {\small .111  } & {\small .289  } & {\small .176 }\tabularnewline
 & {\small Global dCov  } & {\small .135  } & {\small .075  } & {\small .215  } & {\small .120 }\tabularnewline
 & {\small K--F  } & {\small .124  } & {\small .090  } & {\small .181  } & {\small .115 }\tabularnewline
 & {\small W--L  } & {\small .206  } & {\small .127  } & {\small .309  } & {\small .227 }\tabularnewline
{\small$\theta=1.5$  } & {\small Adaptive dCov  } & {\small .325  } & {\small .217  } & {\small .756  } & {\small .414 }\tabularnewline
 & {\small Global dCov  } & {\small .231  } & {\small .127  } & {\small .423  } & {\small .188 }\tabularnewline
 & {\small K--F  } & {\small .238  } & {\small .141  } & {\small .415  } & {\small .251 }\tabularnewline
 & {\small W--L  } & {\small .412  } & {\small .263  } & {\small .620  } & {\small .428 }\tabularnewline
\hline 
\end{tabular}}{\small\par}

{\small\smallskip{}
 \raggedright}{\footnotesize\textit{Notes:}}{\footnotesize{} DGP~3c
uses $\delta=0.5$ and the same generic second-order adaptive score
as DGPs~3a--3b. The cubic terms in $g_{\delta}$ are not included
in that score. K--F and W--L use the linear main effects $X_{1}$
and $X_{2}$ only. K--F had at most two numerical failures among
1,000 replications in a design. Its reported rejection frequency uses
the successful replications. Both dCov procedures use 299 permutations. }{\footnotesize\par}
\end{singlespace}
\end{table}

\clearpage{}

\subsection{Global dCov benchmark}

Table~\ref{tab:global-local} reports the simple average across the
four $(n,k)$ configurations. Near the null, global and adaptive dCov
have similar rejection frequencies. The difference becomes larger
when fixed heterogeneity induces stronger concentration of the global
tail design.

\begin{table}[htbp]
\begin{singlespace}
\centering \caption{Comparison of global and adaptive two-group dCov}
\label{tab:global-local} {\small{}{}}{\small{}%
\begin{tabular}{llcc}
\hline 
{\small Design  } & {\small Parameter  } & {\small Global dCov  } & {\small Adaptive dCov }\tabularnewline
\hline 
{\small DGP 1  } & {\small Null  } & {\small .062  } & {\small .049 }\tabularnewline
{\small DGP 2  } & {\small$c=0$ (Null)  } & {\small .052  } & {\small .052 }\tabularnewline
{\small DGP 2  } & {\small$c=1$  } & {\small .073  } & {\small .075 }\tabularnewline
{\small DGP 2  } & {\small$c=2$  } & {\small .152  } & {\small .156 }\tabularnewline
{\small DGP 2  } & {\small$c=4$  } & {\small .423  } & {\small .485 }\tabularnewline
{\small DGP 3a  } & {\small$\theta=0$ (Null)  } & {\small .054  } & {\small .047 }\tabularnewline
{\small DGP 3a  } & {\small$\theta=0.15$  } & {\small .099  } & {\small .092 }\tabularnewline
{\small DGP 3a  } & {\small$\theta=0.3$  } & {\small .191  } & {\small .256 }\tabularnewline
{\small DGP 3a  } & {\small$\theta=0.45$  } & {\small .242  } & {\small .472 }\tabularnewline
{\small DGP 3b  } & {\small$\theta=0$ (Null)  } & {\small .056  } & {\small .047 }\tabularnewline
{\small DGP 3b  } & {\small$\theta=0.5$  } & {\small .099  } & {\small .100 }\tabularnewline
{\small DGP 3b  } & {\small$\theta=1$  } & {\small .164  } & {\small .293 }\tabularnewline
{\small DGP 3b  } & {\small$\theta=1.5$  } & {\small .177  } & {\small .611 }\tabularnewline
{\small DGP 3c  } & {\small$\theta=0$ (Null)  } & {\small .047  } & {\small .047 }\tabularnewline
{\small DGP 3c  } & {\small$\theta=0.5$  } & {\small .078  } & {\small .076 }\tabularnewline
{\small DGP 3c  } & {\small$\theta=1$  } & {\small .136  } & {\small .182 }\tabularnewline
{\small DGP 3c  } & {\small$\theta=1.5$  } & {\small .242  } & {\small .428 }\tabularnewline
\hline 
\end{tabular}}{\small\par}

{\small\smallskip{}
 \raggedright}{\footnotesize\textit{Notes:}}{\footnotesize{} Each
entry is the simple average of the four rejection frequencies for
$(n,k)=(2000,100),(2000,50),(5000,200),(5000,100)$. Both procedures
use the same nominal total tail count $k$. Adaptive dCov uses a 20\%
training split and two inference groups. }{\footnotesize\par}
\end{singlespace}
\end{table}

\clearpage{}

\subsection{Application details for global and adaptive dCov}

\label{app:application-dcov-details}

The adaptive test uses one fixed sample split for both tail counts.
With 24,944 usable observations, the training subsample contains 4,988
observations and the inference subsample 19,956. The median score
split therefore gives two inference groups of 9,978 observations each.
The training tail counts are $M_{1}=1,496$ and $M_{2}=374$.

Table~\ref{tab:application-dcov-details} reports the global threshold
and the two group-specific adaptive thresholds. For $k=1247$, the
adaptive allocation is 624 observations in the low-score group and
623 in the high-score group. For $k=624$, both groups contribute
312 observations. The lower local thresholds relative to the global
threshold reflect that the adaptive test selects the same total number
of tail observations from the 80\% inference subsample while balancing
them across the two learned groups.

\begin{table}[htbp]
\centering \caption{Application details for global and adaptive dCov}
\label{tab:application-dcov-details} {\small{}{}}{\small{}%
\begin{tabular}{rrrrrrrr}
\hline 
{\small$k$  } & {\small Global $\widehat{w}$  } & {\small$n_{\mathrm{low}}$  } & {\small$k_{\mathrm{low}}$  } & {\small$\widehat{w}_{\mathrm{low}}$  } & {\small$n_{\mathrm{high}}$  } & {\small$k_{\mathrm{high}}$  } & {\small$\widehat{w}_{\mathrm{high}}$ }\tabularnewline
\hline 
{\small 1247  } & {\small 5026.38  } & {\small 9978  } & {\small 624  } & {\small 4011.00  } & {\small 9978  } & {\small 623  } & {\small 4429.32 }\tabularnewline
{\small 624  } & {\small 8889.74  } & {\small 9978  } & {\small 312  } & {\small 7000.00  } & {\small 9978  } & {\small 312  } & {\small 8134.61 }\tabularnewline
\hline 
\end{tabular}}{\small\par}

{\small\smallskip{}
 \raggedright}{\footnotesize\textit{Notes:}}{\footnotesize{} ``Low''
and ``high'' refer to the fitted exceedance-persistence score. Higher
score corresponds approximately to smaller tail exponent. The score
is learned only on the training subsample and is held fixed when the
two tail counts are evaluated. }{\footnotesize\par}
\end{table}

\clearpage{}

\section{Proofs and Technical Results}

\label{app:proofs}

This appendix follows the order of the main results. Subsection~\ref{app:hall-null}
derives the Hall drift and null approximation, and Subsection~\ref{app:permutation-validity}
proves global and locally balanced permutation validity. Subsections~\ref{app:tail-design}
and~\ref{app:degeneracy-rate} study tail-design concentration and
the relation between global signal decay and consistency. Subsection~\ref{app:power-analysis}
contains the dCov sample calculation and the consistency proofs. Appendix~\ref{app:local-power}
gives the technical conditions and proof for the local-power theorem.

For any joint distribution $P$ of $(X,Z)$, write 
\[
\mathscr{C}_{P}(s,t):=\phi_{P}(s,t)-\phi_{X,P}(s)\phi_{Z,P}(t),
\]
and for any set $B\subset X,$
\[
P_{X,n}(B):=\int_{B}f_{X,n}(x)\,dx.
\]

\subsection{Hall expansion and null approximation}

\label{app:hall-null}

Differentiating \eqref{eq:hall-exceedance}, using \eqref{eq:hall-remainder},
yields the conditional density expansion 
\begin{equation}
\frac{f_{n}(z\mid x)}{f_{0}(z)}=1+a_{n}q_{n}(x,z)+o(a_{n}),\qquad z\geq0,\label{eq:density-expansion}
\end{equation}
uniformly in $x$ in the weighted $L^{2}$ sense used below, where
\begin{align*}
f_{0}(z) & :=\alpha_{0}e^{-\alpha_{0}z},\\
q_{n}(x,z) & :=\lambda_{n}(x)D(x)\psi_{\beta(x)}(z),\\
\psi_{b}(z) & :=\frac{\alpha_{0}+b}{\alpha_{0}}e^{-bz}-1,\qquad b>0.
\end{align*}
For every $b>0$, $\Ee_{f_{0}}[\psi_{b}(Z)]=0$.

The same expansion can be expressed through characteristic functions.
For a nonnegative random variable $V$, $\Ee(e^{itV})=1+it\int_{0}^{\infty}e^{itz}\Pp(V>z)\,dz$.
Hence \eqref{eq:hall-exceedance} gives 
\begin{equation}
\phi_{Z\mid X=x,n}(t)=\phi_{0}(t)+a_{n}g_{n}(x,t)+o(a_{n}),\label{eq:conditional-cf}
\end{equation}
where 
\begin{align*}
\phi_{0}(t) & :=\frac{\alpha_{0}}{\alpha_{0}-it},\\
g_{n}(x,t) & :=\lambda_{n}(x)D(x)b_{\beta(x)}(t),\\
b_{b}(t) & :=it\left\{ \frac{1}{\alpha_{0}+b-it}-\frac{1}{\alpha_{0}-it}\right\} ,\qquad b>0.
\end{align*}
The remainder in \eqref{eq:conditional-cf} is understood in the weighted
$L^{2}$ sense induced by \eqref{eq:population-dcov}. Assumption~\ref{ass:hall},
including the derivative bound in \eqref{eq:hall-remainder}, is sufficient
for this step.

Define 
\[
\mathcal{K}_{n}:=\frac{1}{c_{d}c_{1}}\int_{\R^{d}}\int_{\R}\frac{\left|\operatorname{Cov}_{P_{X,n}}\left(e^{is'X},g_{n}(X,t)\right)\right|^{2}}{\|s\|^{d+1}|t|^{2}}\,dt\,ds.
\]
\textbf{Proof of Proposition~\ref{prop:dcov-drift}.}

\noindent\begin{proof}Using iterated expectations and \eqref{eq:conditional-cf},
\begin{align}
 & \phi_{n}(s,t)-\phi_{X,n}(s)\phi_{Z,n}(t)\nonumber \\
 & \quad=a_{n}\operatorname{Cov}_{P_{X,n}}\left(e^{is'X},g_{n}(X,t)\right)+o(a_{n})\label{eq:cf-factorization}
\end{align}
in the weighted $L^{2}$ norm defining distance covariance. Substituting
\eqref{eq:cf-factorization} into \eqref{eq:population-dcov} gives
the sharper expansion 
\begin{equation}
\Delta_{n}=a_{n}^{2}\mathcal{K}_{n}+o(a_{n}^{2}).\label{eq:dcov-sharp-expansion}
\end{equation}

We now show that $\mathcal{K}_{n}$ is uniformly bounded. Since $|D(x)|$
is uniformly bounded, $0<\lambda_{n}(x)\leq1$, and 
\[
\frac{|b_{b}(t)|^{2}}{|t|^{2}}=\frac{b^{2}}{\{(\alpha_{0}+b)^{2}+t^{2}\}(\alpha_{0}^{2}+t^{2})}\leq\frac{1}{\alpha_{0}^{2}+t^{2}},\qquad b>0,
\]
the $t$-part is uniformly integrable over all possible values of
$\beta(x)$. Also, $|e^{is'X}-1|\leq\min\{2,\|s\|\|X\|\}$, and the
uniform tail-design moment condition controls the $s$-integral both
near zero and at infinity. Hence $\sup_{n}\mathcal{K}_{n}<\infty$.
Equation \eqref{eq:dcov-sharp-expansion} then implies both bounds
in \eqref{eq:dcov-equivalence}. \end{proof}

\subsection{Permutation validity}

\label{app:permutation-validity}

For densities $p,q$, define their total-variation distance and chi-square
divergence by 
\[
\|p-q\|_{\mathrm{TV}}:=\frac{1}{2}\int|p(u)-q(u)|\,du,\qquad\chi^{2}(p\|q):=\int\left(\frac{p(u)}{q(u)}-1\right)^{2}q(u)\,du.
\]
The divergence is $+\infty$ if $q$ vanishes on a set with positive
probability under $p$. For discrete observations, replace the integrals
by sums. We also write $\chi^{2}(P\|Q)$ for the corresponding distributions.
The bound 
\[
\|p-q\|_{\mathrm{TV}}\leq\sqrt{\chi^{2}(p\|q)/2}
\]
will be used below. Total variation bounds the difference in probabilities
of any rejection event, so approximation by an independent reference
experiment transfers its permutation validity to the actual sample.

Define the leading tail-design density for a constant exponent by
\begin{equation}
f_{X}^{0}(x):=\frac{\kappa(x)f_{X}(x)}{\int\kappa(v)f_{X}(v)\,dv},\qquad0<\int\kappa(v)f_{X}(v)\,dv<\infty.\label{eq:limiting-tail-design}
\end{equation}

\noindent\textbf{Proof of Proposition~\ref{prop:permutation}.}

\begin{proof}Let 
\[
\mu_{\kappa}:=\Ee[\kappa(X)]
\]
and use the limiting density $f_{X}^{0}$ in \eqref{eq:limiting-tail-design}.
Under $H_{0}$, differentiating the conditional survival function
in the upper tail gives 
\[
f_{Y\mid X}(y\mid x)=\kappa(x)y^{-\alpha_{0}-1}B(y,x),\qquad B(y,x):=\alpha_{0}L(y;x)-y\partial_{y}L(y;x).
\]
Assumption~\ref{ass:hall}, including its derivative remainder condition,
implies uniformly in $x$ that 
\[
B(y,x)=\alpha_{0}\{1+O(y^{-\underline{\beta}})\}.
\]
The common envelope $y^{-\underline{\beta}}$ also covers variation
of $\beta(x)$. Bayes' formula now gives an ordinary density ratio:
\begin{equation}
\frac{f_{X\mid Y=y}(x)}{f_{X}^{0}(x)}=\frac{B(y,x)}{\int B(y,v)f_{X}^{0}(v)\,dv}=1+\rho_{y}(x),\label{eq:uniform-threshold-mark-rn}
\end{equation}
Here $\rho_{y}(x):=f_{X\mid Y=y}(x)/f_{X}^{0}(x)-1$, and 
\[
\sup_{x}|\rho_{y}(x)|\leq Cy^{-\underline{\beta}},\qquad\int\rho_{y}(x)f_{X}^{0}(x)\,dx=0
\]
uniformly for sufficiently large $y$. Thus, 
\begin{equation}
\chi^{2}\!\left(f_{X\mid Y=y}\|f_{X}^{0}\right)\leq Cy^{-2\underline{\beta}}\label{eq:uniform-threshold-mark-chi2}
\end{equation}
uniformly for all sufficiently large $y$.

We first give a bound that holds for all $w\in\mathcal{W}_{n}$. Write
$I_{n}(w):=\{i:Y_{i}>w\}$ and $N_{n}(w):=|I_{n}(w)|=\sum_{i=1}^{n}\ind\{Y_{i}>w\}$.
Conditional on $(Y_{1},\ldots,Y_{n})$, the selected covariates are
independent. Their actual and independent-reference joint densities
are 
\[
p_{n,w}(x_{i}:i\in I_{n}(w)):=\prod_{i\in I_{n}(w)}f_{X\mid Y=Y_{i}}(x_{i}),\quad p_{n,w}^{0}(x_{i}:i\in I_{n}(w)):=\prod_{i\in I_{n}(w)}f_{X}^{0}(x_{i}).
\]
The reference density $p_{n,w}^{0}$ keeps the observed $Y$-values
and replaces the selected covariates by independent draws from $f_{X}^{0}$.
By multiplying these independent density ratios and \eqref{eq:uniform-threshold-mark-chi2},
for every $w\in\mathcal{W}_{n}$, 
\[
1+\chi^{2}(p_{n,w}\|p_{n,w}^{0})=\prod_{i\in I_{n}(w)}\{1+\chi^{2}(f_{X\mid Y=Y_{i}}\|f_{X}^{0})\}\leq\exp\{CN_{n}(\underline{w}_{n})\underline{w}_{n}^{-2\underline{\beta}}\}.
\]
Here each selected $Y_{i}$ exceeds $w\geq\underline{w}_{n}$, and
$N_{n}(w)\leq N_{n}(\underline{w}_{n})$. The bound does not depend
on $w$, so, conditional on $(Y_{1},\ldots,Y_{n})$, it holds simultaneously:
\[
\sup_{w\in\mathcal{W}_{n}}\chi^{2}(p_{n,w}\|p_{n,w}^{0})\leq\exp\left\{ CN_{n}(\underline{w}_{n})\underline{w}_{n}^{-2\underline{\beta}}\right\} -1.
\]

Because $n\Pp(Y>\overline{w}_{n})\to\infty$ and $\Pp(Y>\underline{w}_{n})\geq\Pp(Y>\overline{w}_{n})$,
\[
\frac{N_{n}(\underline{w}_{n})}{n\Pp(Y>\underline{w}_{n})}\xrightarrow{p}1.
\]
Condition \eqref{eq:uniform-threshold-rate} therefore implies 
\[
N_{n}(\underline{w}_{n})\underline{w}_{n}^{-2\underline{\beta}}\xrightarrow{p}0.
\]
The chi-square bound on total variation now gives 
\[
\sup_{w\in\mathcal{W}_{n}}\|p_{n,w}-p_{n,w}^{0}\|_{\mathrm{TV}}\leq\left[\frac{\exp\{CN_{n}(\underline{w}_{n})\underline{w}_{n}^{-2\underline{\beta}}\}-1}{2}\right]^{1/2}\xrightarrow{p}0.
\]
In particular, if $\widetilde{w}_{n}$ is measurable with respect
to $(Y_{1},\ldots,Y_{n})$ and satisfies \eqref{eq:ymeasurable-threshold},
then on $\{\widetilde{w}_{n}\in\mathcal{W}_{n}\}$ we may substitute
$w=\widetilde{w}_{n}$ in these uniform bounds.

Under the reference experiment, conditional on the observed $Y$-values
and on the threshold rule, the selected covariate marks are iid from
$f_{X}^{0}$ and are independent of the selected exceedance severities.
Thus their labels are exchangeable relative to the log exceedances,
so ordinary permutation calibration is conditionally exact up to permutation
discreteness. Also, on $\{\widetilde{w}_{n}\leq\overline{w}_{n}\}$
the selected sample size is at least $N_{n}(\overline{w}_{n})$, and
$n\Pp(Y>\overline{w}_{n})\to\infty$ implies $N_{n}(\overline{w}_{n})\to_{p}\infty$.
The discreteness error therefore vanishes uniformly over the range.

Total variation bounds the difference in rejection probabilities between
the actual and reference experiments. This proves the claim for any
$Y$-measurable threshold satisfying \eqref{eq:ymeasurable-threshold}.
The same uniform bound, applied to every deterministic $w\in\mathcal{W}_{n}$,
yields \eqref{eq:uniform-permutation-size}.

For the order-statistic assertion, the Hall expansion under $H_{0}$
implies $\bar{F}(y)\sim\mu_{\kappa}y^{-\alpha_{0}}$. Thus $u_{n}\to\infty$
and $\bar{F}(cu_{n})/\bar{F}(u_{n})\to c^{-\alpha_{0}}$ for every
fixed $c>0$. Choose the deterministic range $\mathcal{W}_{n}^{*}:=[u_{n}/2,2u_{n}]$.
It satisfies 
\[
n\bar{F}(2u_{n})\sim2^{-\alpha_{0}}K_{n}\to\infty,\qquad\sqrt{n\bar{F}(u_{n}/2)}\,(u_{n}/2)^{-\underline{\beta}}\sim2^{\alpha_{0}/2+\underline{\beta}}\sqrt{K_{n}}\,u_{n}^{-\underline{\beta}}\to0.
\]
Binomial concentration at the two endpoints gives 
\[
\frac{N_{n}(u_{n}/2)}{K_{n}}\to_{p}2^{\alpha_{0}}>1,\qquad\frac{N_{n}(2u_{n})}{K_{n}}\to_{p}2^{-\alpha_{0}}<1.
\]
Thus, $\Pp\{\widehat{w}_{n}\in\mathcal{W}_{n}^{*}\}\to1$. The threshold
$\widehat{w}_{n}$ is measurable with respect to the $Y$-values and
selects exactly $K_{n}$ observations under upper-tail continuity.
The outcome-measurable threshold result proved above therefore establishes
its asymptotic level.

Under the exact-Pareto benchmark $L(y;x)\equiv1$ for $y\geq y_{0}$,
\eqref{eq:uniform-threshold-mark-rn} has $B(y,x)=\alpha_{0}$ and
$f_{X\mid Y=y}=f_{X}^{0}$ for $y\geq y_{0}$. On $\{\widehat{w}_{n}\geq y_{0}\}$,
conditional on the observed $Y$-values, the selected covariates are
iid from $f_{X}^{0}$. The permutation test is then conditionally
finite-sample valid, up to permutation discreteness. \end{proof}

\noindent\textbf{Proof of Proposition~\ref{prop:local-window-permutation}.}

\begin{proof}Let $S_{n,j}$ be the selected indices in group $j$,
$j=1,2$. Condition on the covariates, the local thresholds, and the
boundary and selected identities. Denote this information by $\mathcal{G}_{n}$.
Given $\mathcal{G}_{n}$, the selected outcomes are independent: fixing
the boundaries and identities imposes only separate above- or below-threshold
restrictions. For $i\in S_{n,j}$, the conditional distribution of
$\log(Y_{i}/\widehat{w}_{n,j})$ is $F_{X_{i},\widehat{w}_{n,j}}$,
where $F_{x,w}$ is the distribution of $\log(Y/w)$ given $X=x,Y>w$,
with density $f_{x,w}$.

Let $F_{0}:=\mathrm{Exp}(\alpha_{0})$ have density $f_{0}(z)=\alpha_{0}e^{-\alpha_{0}z}$.
The uniform Hall expansion \eqref{eq:density-expansion} gives $\sup_{x}\chi^{2}(F_{x,w}\|F_{0})\leq Cw^{-2\underline{\beta}}$
for all sufficiently large $w$. Conditional on $\mathcal{G}_{n}$,
set 
\[
p_{n}(z_{i}:i\in\cup_{j=1}^{2}S_{n,j}):=\prod_{j=1}^{2}\prod_{i\in S_{n,j}}f_{X_{i},\widehat{w}_{n,j}}(z_{i}),\qquad p_{n}^{0}(z_{i}:i\in\cup_{j=1}^{2}S_{n,j}):=\prod_{j=1}^{2}\prod_{i\in S_{n,j}}f_{0}(z_{i}).
\]
On the high-probability event in \eqref{eq:local-window-diverging-thresholds},
the product formula from the proof of Proposition~\ref{prop:permutation}
gives 
\[
1+\chi^{2}(p_{n}\|p_{n}^{0})=\prod_{j=1}^{2}\prod_{i\in S_{n,j}}\{1+\chi^{2}(f_{X_{i},\widehat{w}_{n,j}}\|f_{0})\}\leq\exp\!\left\{ C\sum_{j=1}^{2}K_{n,j}\widehat{w}_{n,j}^{-2\underline{\beta}}\right\} \longrightarrow_{p}1.
\]
The chi-square-to-total-variation bound from that proof gives $\|p_{n}-p_{n}^{0}\|_{\mathrm{TV}}\to_{p}0$.
Under $p_{n}^{0}$, the pooled exceedances are iid $F_{0}$ and exchangeable
across the two groups. As $K_{n}\to\infty$, permutation discreteness
vanishes. Total variation then proves the level claim. The sufficient
rate follows from $\sum_{j=1}^{2}K_{n,j}(\widehat{w}_{n,j})^{-2\underline{\beta}}\leq K_{n}(\min_{j=1,2}\widehat{w}_{n,j})^{-2\underline{\beta}}$.
Under exact Pareto tails, $F_{x,w}=F_{0}$ for $w\geq y_{0}$, giving
conditional finite-sample validity. \end{proof}

\subsection{Tail-design concentration}

\label{app:tail-design}

\begin{lemma}[Selected covariates in a fixed group]\label{lem:group-tail-design}
Under the fixed-group conditions of Theorem~\ref{thm:degeneracy},
for every bounded continuous $v$ and each group $j=1,2$, 
\[
\frac{1}{K_{n,j}}\sum_{r\text{ retained in group }j}v(X_{r,n}^{\mathrm{LB}})\longrightarrow_{p}\Ee[v(X_{j}^{\infty})].
\]
For some $0<\delta'<\delta$, the selected covariates also satisfy
\[
\frac{1}{K_{n,j}}\sum_{r\text{ retained in group }j}\|X_{r,n}^{\mathrm{LB}}\|^{1+\delta'}=O_{p}(1).
\]
\end{lemma}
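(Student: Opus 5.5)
The plan is to reduce the random within-group order-statistic selection to a deterministic-threshold selection, and then apply a law of large numbers conditional on the group partition. Fix $j$. The groups are fixed, with $X\in\mathcal{C}_j$ of positive probability, so $N_{n,j}/n\to_p \Pr(X\in\mathcal{C}_j)>0$. Conditional on the group memberships, the outcomes in group $j$ are iid from the law of $(Y,X)$ given $X\in\mathcal{C}_j$.

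Let $\bar F_j(y):=\Pr(Y>y\mid X\in\mathcal{C}_j)$. Because upper-tail continuity holds uniformly in $x$, $\bar F_j$ is continuous. Choose a deterministic $u_{n,j}$ with $N_{n,j}\bar F_j(u_{n,j})\approx K_{n,j}$; this is done on the event where $N_{n,j}$ is close to its mean, and rigorously one conditions on $N_{n,j}$. By \eqref{eq:local-window-intermediate}, $u_{n,j}\to\infty$.

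For a deterministic threshold $w$, the retained covariates $\{X_i: i\in\text{group } j,\ Y_i>w\}$ are, conditional on their number, iid from the law of $X$ given $X\in\mathcal{C}_j, Y>w$. By hypothesis this law converges weakly to $X_j^\infty$ as $w\to\infty$. The weak LLN for triangular arrays, with $v$ bounded, then gives
\[
\frac{1}{N_{n,j}(w)}\sum_{i\in\mathcal{C}_j,\,Y_i>w} v(X_i)-\Ee[v(X)\mid X\in\mathcal{C}_j,Y>w]\to_p 0,
\]
provided the count $N_{n,j}(w)\to\infty$. Letting $w\to\infty$ gives convergence to $\Ee[v(X_j^\infty)]$.

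The main obstacle is the random threshold $\widehat w_{n,j}$, which makes the retained set outcome-dependent. I would handle it by sandwiching. For fixed $\epsilon\in(0,1)$, take the deterministic thresholds $w_n^{\pm}$ with $N_{n,j}\bar F_j(w_n^{\pm})=(1\mp\epsilon)K_{n,j}$. Binomial concentration, as in the proof of Proposition~\ref{prop:permutation}, gives $w_n^-\le \widehat w_{n,j}\le w_n^+$ with probability tending to one. On that event the retained set $S_{n,j}$ contains the exceedances of $w_n^+$ and is contained in those of $w_n^-$. Writing $v=v^+-v^-$ with $v^{\pm}\ge0$ bounded, each sum over $S_{n,j}$ is bracketed by the sums over the two deterministic sets. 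The set differences contain at most about $2\epsilon K_{n,j}$ points, so they contribute at most $2\epsilon\|v\|_\infty$ after normalizing by $K_{n,j}$. Both deterministic averages converge to $\Ee[v(X_j^\infty)]$, because $w_n^\pm\to\infty$ and the exceedance counts diverge. Letting $\epsilon\downarrow0$ gives the first display.

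For the moment bound, use the same inclusion $S_{n,j}\subset\{i\in\mathcal{C}_j: Y_i>w_n^-\}$ on the high-probability event. With $\delta'<\delta$, nonnegativity gives
\[
\frac{1}{K_{n,j}}\sum_{r\in S_{n,j}}\|X_r\|^{1+\delta'}\le \frac{N_{n,j}(w_n^-)}{K_{n,j}}\cdot\frac{1}{N_{n,j}(w_n^-)}\sum_{i\in\mathcal{C}_j,Y_i>w_n^-}\|X_i\|^{1+\delta'}.
\]
The first factor is $O_p(1)$. The second has conditional mean bounded by $\sup_n\Ee[\|X\|^{1+\delta}\mid X\in\mathcal{C}_j,Y>w_n^-]^{(1+\delta')/(1+\delta)}<\infty$, using the groupwise version of Assumption~\ref{ass:dcov-moments} in Assumption~\ref{ass:fixed-alt}. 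Markov's inequality then gives $O_p(1)$. The extra room $\delta-\delta'$ is not needed for this bound; it is kept so that the powers $\|X\|^{1+\delta'}$ are uniformly integrable, which the later dCov limit in Theorem~\ref{thm:degeneracy} uses.
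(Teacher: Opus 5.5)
Your proof is correct, but it handles the random within-group threshold differently from the paper.

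\textbf{The paper's route.} The paper conditions on all outcomes and group memberships. This fixes the retained set and makes the retained covariates independent concomitants with laws $Q_{j,Y_{(r)}}$, the law of $X$ given $X\in\mathcal{C}_j$ and $Y=y$. Differentiating the tail model gives $dQ_{j,y}/dQ_{j,>y}=h_y/\Ee h_y$ with $h_y=\alpha+O(y^{-\underline{\beta}})$. Within-group concentration of $\alpha(X)$ at $\alpha_j^*$ then makes $Q_{j,y}$ total-variation close to $Q_{j,>y}$, so $Q_{j,y}\Rightarrow X_j^\infty$. The paper finishes with a conditional von Bahr--Esseen bound for independent, non-identically distributed summands. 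The gap $\delta'<\delta$ is what supplies the $p>1$ moments needed there. The same argument also yields $\Pr(X_j^\infty\in\mathcal{M}_j)=1$, which Theorem~\ref{thm:degeneracy} uses later.

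\textbf{Your route.} You sandwich $\widehat{w}_{n,j}$ between deterministic thresholds $w_n^\pm$ calibrated by binomial concentration. You then use the exact iid structure of deterministic-threshold exceedances, accept an $O(\epsilon)$ bracketing error, and remove it by letting $\epsilon\downarrow0$.

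\textbf{Comparison.} Your argument relies only on the hypothesized weak convergence and uniform moments of $Q_{j,>w}$, plus upper-tail continuity of $\bar F_j$ to solve for $w_n^\pm$. It needs no Hall derivative condition, density ratio, or concomitant laws. You are also right that Markov's inequality alone gives the $O_p(1)$ claim, even with $\delta'=\delta$. The paper's conditioning avoids the extra $\epsilon$-limit and the threshold calibration. It also delivers the support property of $X_j^\infty$ at the same time, which your version would have to establish separately for Theorem~\ref{thm:degeneracy}.
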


\noindent\textbf{Proof of Lemma~\ref{lem:group-tail-design}.}

\begin{proof}Let $Q_{j,>y}$ and $Q_{j,y}$ be the distributions
of $X$ in group $j$ given $Y>y$ and $Y=y$. Differentiating \eqref{eq:tail-model}
gives 
\[
\frac{dQ_{j,y}}{dQ_{j,>y}}(x)=\frac{h_{y}(x)}{\Ee_{Q_{j,>y}}h_{y}(X)},\qquad h_{y}(x):=\alpha(x)-y\partial_{y}\log L(y;x)=\alpha(x)+O(y^{-\underline{\beta}})
\]
uniformly in $x$. Tail concentration within the group gives $\alpha(X)\to\alpha_{j}^{*}$
under $Q_{j,>y}$. Thus $\|Q_{j,y}-Q_{j,>y}\|_{\mathrm{TV}}\to0$
and $Q_{j,y}\Rightarrow X_{j}^{\infty}$. By continuity of $\alpha$,
\[
\Pr(X_{j}^{\infty}\in\mathcal{M}_{j})=1.
\]
The density ratio is bounded, so $Q_{j,y}$ also has a uniform $(1+\delta)$
moment.

Given all outcomes and fixed group memberships, order observations
in group $j$ from largest to smallest outcome. Their covariates are
independent with distributions $Q_{j,Y_{(r)}}$. Let $\mathcal{H}_{n,j}$
collect these outcomes and group memberships. Choose $0<\delta'<\delta$
and $1<p\leq2$ such that $p(1+\delta')\leq1+\delta$. For either
$W_{r}=v(X_{(r)})$ or $W_{r}=\|X_{(r)}\|^{1+\delta'}$, the conditional
$p$th moments are uniformly bounded once the retained outcomes exceed
a sufficiently large fixed $u$. Since $\widehat{w}_{n,j}\to_{p}\infty$,
this event has probability tending to one. Writing $\xi_{r}:=W_{r}-\Ee(W_{r}\mid\mathcal{H}_{n,j})$,
conditional independence and the von Bahr--Esseen inequality give
\[
\Ee\!\left[\left|\sum_{r=1}^{K_{n,j}}\xi_{r}\right|^{p}\middle|\mathcal{H}_{n,j}\right]\leq CK_{n,j}.
\]
Hence, by Markov's inequality, 
\[
\Pr\!\left(\left|K_{n,j}^{-1}\sum_{r=1}^{K_{n,j}}\xi_{r}\right|>\epsilon\,\middle|\,\mathcal{H}_{n,j}\right)\leq C_{\epsilon}K_{n,j}^{1-p}\longrightarrow0.
\]
The conditional means of bounded $v$ converge to $\Ee[v(X_{j}^{\infty})]$,
while those of $\|X\|^{1+\delta'}$ remain bounded. Both claims follow.
\end{proof}

\noindent\textbf{Proof of Theorem~\ref{thm:degeneracy}.}

\begin{proof} \textbf{Step 1 (concentration and subsequential limits
of the selected design).} Under \eqref{eq:tail-model}, 
\[
f_{X,n}(x)=\frac{\kappa(x)w_{n}^{-\alpha(x)}L(w_{n};x)f_{X}(x)}{\int\kappa(v)w_{n}^{-\alpha(v)}L(w_{n};v)f_{X}(v)\,dv}.
\]
By Assumption~\ref{ass:hall}, $L(w_{n};x)\to1$ uniformly in $x$.
Fix $\epsilon>0$ and let $\mathcal{M}^{\epsilon}:=\{x:\alpha(x)<\alpha_{\min}+\epsilon\}$,
which has positive $P_{X}$-measure. Since $\kappa>0$ and $0<\Ee[\kappa(X)]<\infty$,
\[
\begin{split}P_{X,n}(\mathcal{X}\setminus\mathcal{M}^{\epsilon}) & \leq\frac{\Ee[\kappa(X)]w_{n}^{-(\alpha_{\min}+\epsilon)}\{1+o(1)\}}{\Ee[\kappa(X)\ind\{X\in\mathcal{M}^{\epsilon/2}\}]w_{n}^{-(\alpha_{\min}+\epsilon/2)}\{1+o(1)\}}\\
 & =O(w_{n}^{-\epsilon/2})\longrightarrow0.
\end{split}
\]
Thus $\alpha(X)\to\alpha_{\min}$ in probability under $P_{X,n}$.
Assumption~\ref{ass:dcov-moments} makes $\{P_{X,n}\}$ tight. Any
subsequence therefore has a further subsequence on which the selected
covariate converges in distribution to a random variable $X_{\infty}^{G}$.
Continuity of $\alpha$ on the support implies $\Pr(X_{\infty}^{G}\in\mathcal{M})=1$.
The distribution of $X_{\infty}^{G}$ may differ across subsequences.
For related concentration results, see \citet[Thm.~2.1 and Cor.~2.4]{Yang2026}.

\textbf{Step 2 (conditional distribution of the mark).} Without imposing
$H_{0}$, Assumption~\ref{ass:hall} gives, uniformly in $x$ and
$z\geq0$, 
\[
\Pr(Z^{*}>z\mid X=x,Y>w_{n})=e^{-\alpha(x)z}\frac{L(w_{n}e^{z};x)}{L(w_{n};x)}=e^{-\alpha(x)z}\{1+O(w_{n}^{-\underline{\beta}})\}.
\]
Because $\alpha$ is bounded away from zero, this also provides a
uniform exponential envelope for the marks. Integrating the survival
functions shows, for each fixed $t$, that their conditional characteristic
functions equal $\alpha(x)/(\alpha(x)-it)+o(1)$ uniformly in $x$.
Step 1 and boundedness of $\alpha$ then imply 
\[
\phi_{n}(s,t)-\phi_{X,n}(s)\frac{\alpha_{\min}}{\alpha_{\min}-it}\longrightarrow0\qquad\text{for every }(s,t).
\]
Along the subsequence of Step~1, $(X,Z^{*})$ conditional on $Y>w_{n_{\ell}}$
converges in distribution to $(X_{\infty}^{G},E_{\infty})$, where
$E_{\infty}$ is independent of $X_{\infty}^{G}$ and has exponential
rate $\alpha_{\min}$.

\textbf{Step 3 (continuity of the dCov functional).} Suppose $P_{n}\Rightarrow P_{0}$,
$\sup_{n}\Ee_{P_{n}}\|X\|^{1+\delta}<\infty$, and the nonnegative
mark $Z$ has a uniform exponential envelope. Hölder's inequality
with conjugate exponents $1+\delta$ and $(1+\delta)/\delta$ gives
\[
\sup_{n}\Ee_{P_{n}}[\|X\|Z]\leq\sup_{n}\left(\Ee_{P_{n}}\|X\|^{1+\delta}\right)^{1/(1+\delta)}\left(\Ee_{P_{n}}Z^{(1+\delta)/\delta}\right)^{\delta/(1+\delta)}<\infty.
\]
Write $a_{s}(X):=e^{is'X}-1$ and $b_{t}(Z):=e^{itZ}-1$. Since $\mathscr{C}_{P_{n}}(s,t)=\Ee_{P_{n}}[a_{s}b_{t}]-\Ee_{P_{n}}[a_{s}]\Ee_{P_{n}}[b_{t}]$,
\[
|\mathscr{C}_{P_{n}}(s,t)|\leq\|s\||t|\left\{ \Ee_{P_{n}}[\|X\|Z]+\Ee_{P_{n}}\|X\|\,\Ee_{P_{n}}Z\right\} .
\]
Using $|e^{iu}-1|\leq\min\{2,|u|\}$ also gives the separate bounds
$4\|s\|\Ee_{P_{n}}\|X\|$, $4|t|\Ee_{P_{n}}Z$, and $2$. Combining
the four bounds, for a constant $C$ independent of $n$, 
\[
|\mathscr{C}_{P_{n}}(s,t)|\leq C\min\{1,\|s\|\}\min\{1,|t|\}.
\]
The squared integrand in \eqref{eq:population-dcov} is dominated
by 
\[
C^{2}\frac{\min\{1,\|s\|\}^{2}\min\{1,|t|\}^{2}}{\|s\|^{d+1}|t|^{2}},
\]
which is integrable. In polar coordinates the radial $s$-integral
is 
\[
\int_{0}^{\infty}\min\{1,r\}^{2}r^{-2}\,dr=2<\infty,
\]
and the $t$-integral is finite by the same calculation. Weak convergence
gives pointwise convergence of the characteristic functions. Dominated
convergence therefore shows that the weighted integral of $|\mathscr{C}_{P_{n}}-\mathscr{C}_{P_{0}}|^{2}$
in \eqref{eq:population-dcov} tends to zero.

Apply this result along each convergent subsequence identified in
Steps 1--2. Its limiting coordinates are independent, so the corresponding
subsequence of $\Delta_{n}^{A}$ tends to zero. If \eqref{eq:global-signal-vanishes}
failed, there would be a subsequence bounded away from zero, and tightness
would supply a further subsequence with a product limit, a contradiction.
Thus \eqref{eq:global-signal-vanishes} holds along the full sequence,
without uniqueness of the selected covariate limit.

\textbf{Step 4 (fixed groups and local selection).} Lemma~\ref{lem:group-tail-design}
and Step~2 show that within group $j$ the conditional selected-pair
distribution approaches $X_{j}^{\infty}\otimes\Exp(\alpha_{j}^{*})$.
Since $K_{n,j}/K_{n}\to\rho_{j}$, the pooled distribution approaches
the mixture in the theorem. The lemma's moment bound and Step~3 give
\eqref{eq:local-nondegenerate-limit}.

\textbf{Step 5 (positive local signal).} By the fixed-group concentration
argument above, $\Pr(X_{j}^{\infty}\in\mathcal{M}_{j})=1$ for $j=1,2$.
If $\alpha_{1}^{*}\neq\alpha_{2}^{*}$, then $\mathcal{M}_{1}\cap\mathcal{M}_{2}=\varnothing$
by definition, so the group label is determined by $X_{\infty}$ almost
surely. Since $Z_{\infty}$ has exponential rate $\alpha_{j}^{*}$
conditional on group $j$, the unequal rates imply that $(X_{\infty},Z_{\infty})$
are dependent. Their first moments are finite, so the characterization
of \citet{SzekelyRizzoBakirov2007} yields $\Delta_{\infty}^{\mathrm{LB}}>0$.

\end{proof}

\subsection{Global signal decay and consistency}

\label{app:degeneracy-rate}

Theorem~\ref{thm:degeneracy} establishes that the global population
signal tends to zero under fixed alternatives. Consistency depends
on its rate relative to the number of tail sample observations. The
following regular benchmarks illustrate logarithmic signal decay in
$w_{n}$, with an exponent determined by the local geometry of $\alpha(\cdot)$
at its minimizers. There is no universal logarithmic rate. We begin
with the canonical scalar design of \citet{Yang2026}. There the tail
index itself, $\alpha(X)=\exp(X)$, is taken to be uniform on $[u,\bar{u}]$,
so that $X$ has support $[\log u,\log\bar{u}]$ with density bounded
away from zero and infinity, and the conditional density calculation
underlying Corollary~2.2 in \citet{Yang2026} implies 
\[
(\alpha(X)-u)\log w_{n}\Rightarrow\operatorname{Exp}(1).
\]
Hence $X-\log u=O_{p}((\log w_{n})^{-1})$, consistent with $\operatorname{Var}(X\mid Y>w_{n})\asymp(\log w_{n})^{-2}$.
Let $h_{n}:=(\log w_{n})^{-1}$ and $\widetilde{X}_{n}:=(X-\log u)/h_{n}$.
Squared distance covariance is homogeneous of degree one in its first
argument, so 
\[
\Delta_{n}^{A}=h_{n}\,\operatorname{dCov}^{2}(\widetilde{X}_{n},Z^{*}).
\]
A first-order expansion of the conditional $\operatorname{Exp}(\alpha(X))$
distribution around $u$ makes the dependence between $\widetilde{X}_{n}$
and $Z^{*}$ of order $h_{n}$. If the corresponding first-order dCov
coefficient is nonzero, then 
\[
\Delta_{n}^{A}\asymp h_{n}^{3}=(\log w_{n})^{-3}.
\]
Thus, under the regularity conditions of Theorem~\ref{thm:global-power},
global consistency in this benchmark holds if $K_{n}/(\log w_{n})^{3}\to\infty$.

The exponent depends on the local geometry, and the boundary case
above is the less favorable of the two leading configurations. If
instead $\alpha(\cdot)$ has a smooth interior minimum at $x^{*}$
with $\alpha''(x^{*})>0$, the tilt $e^{-\alpha(x)\log w_{n}}$ is
locally Gaussian, so the concentration scale is $h_{n}:=(\log w_{n})^{-1/2}$
rather than $(\log w_{n})^{-1}$. The induced dependence is then of
order $\alpha(X)-\alpha_{\min}\asymp h_{n}^{2}$, giving $\dCov^{2}(\widetilde{X}_{n},Z^{*})\asymp h_{n}^{4}$
and 
\[
\Delta_{n}^{A}\asymp h_{n}^{5}=(\log w_{n})^{-5/2}.
\]
More generally, logarithmic signal decay of the form $\Delta_{n}^{A}\asymp(\log w_{n})^{-q}$
with finite $q>0$ is accommodated whenever $K_{n}/(\log w_{n})^{q}\to\infty$.
For example, if $K_{n}\asymp n^{b}$ for some $0<b<1$ and $\log w_{n}=O(\log n)$,
this signal condition is satisfied. Thus, in these logarithmic-rate
benchmarks, a shrinking global population signal is compatible with
consistency under the stated regularity conditions. The simulation
comparison in Table~\ref{tab:global-local} shows that local balancing
can nevertheless provide substantial finite-sample gains when fixed
heterogeneity produces strong concentration. Such gains concern relative
power at the reported sample sizes and do not establish inconsistency
of the global test.

\subsection{Sample calculation and consistency proofs}

\label{app:power-analysis}

\subsubsection{dCov sample tightness}

For $r$ draws from $P$, write $\widehat{\mathscr{C}}_{P}$ for the
empirical version of $\mathscr{C}_{P}$. The sample size is understood
from context. For the selected global and locally balanced samples,
denote these empirical differences by $\widehat{\mathscr{C}}_{n}$
and $\widehat{\mathscr{C}}_{n}^{\mathrm{LB}}$. For a function $v(s,t)$,
define the squared dCov norm 
\[
\|v\|_{\mathrm{dCov}}^{2}:=\frac{1}{c_{d}c_{1}}\int_{\R^{d}}\int_{\R}\frac{|v(s,t)|^{2}}{\|s\|^{d+1}|t|^{2}}\,dt\,ds.
\]

For the locally balanced sample, define 
\begin{align}
\bar{\phi}_{X,n}^{\mathrm{LB}}(s) & :=K_{n}^{-1}\sum_{r=1}^{K_{n}}e^{is'X_{r,n}^{\mathrm{LB}}},\nonumber \\
\mathscr{C}_{n}^{\mathrm{LB},A}(s,t) & :=K_{n}^{-1}\sum_{r=1}^{K_{n}}\{e^{is'X_{r,n}^{\mathrm{LB}}}-\bar{\phi}_{X,n}^{\mathrm{LB}}(s)\}\varphi_{r,n}^{A}(t).\label{eq:local-balanced-signal}
\end{align}
Here $\varphi_{r,n}^{A}(t):=\Ee_{A}[e^{itZ_{r,n}^{\mathrm{LB}}}\mid\mathcal{G}_{n}]$.
The conditional joint and marginal characteristic functions of the
uniformly chosen pair in \eqref{eq:local-balanced-dcov-signal} give
\[
\Delta_{n}^{\mathrm{LB},A}=\|\mathscr{C}_{n}^{\mathrm{LB},A}\|_{\mathrm{dCov}}^{2}.
\]

\begin{lemma}[dCov sample tightness]\label{lem:dcov-triangular}
Let $P_{n}$ satisfy the moment bound \eqref{eq:process-moment},
and let $r_{n}\to\infty$. For $r_{n}$ independent draws from $P_{n}$,
\begin{equation}
r_{n}\|\widehat{\mathscr{C}}_{P_{n}}-\mathscr{C}_{P_{n}}\|_{\mathrm{dCov}}^{2}=O_{p}(1).\label{eq:dcov-process-tight}
\end{equation}
\end{lemma}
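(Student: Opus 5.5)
The plan is to bound the second moment of the weighted integral directly and conclude by Markov's inequality. Write $\widehat{\mathscr{C}}_{P_n}(s,t)=\widehat\phi(s,t)-\widehat\phi_X(s)\widehat\phi_Z(t)$ and $\mathscr{C}_{P_n}(s,t)=\phi(s,t)-\phi_X(s)\phi_Z(t)$, where $\phi=\phi_{P_n}$ and the hats denote empirical means over the $r_n$ draws. With $a_s(X):=e^{is'X}-1$ and $b_t(Z):=e^{itZ}-1$, as in Step~3 of the proof of Theorem~\ref{thm:degeneracy}, the covariance form is invariant to subtracting $1$. So
\[
\widehat{\mathscr{C}}_{P_n}-\mathscr{C}_{P_n}
=(\widehat{\Ee}-\Ee)[\tilde a_s\tilde b_t]
-(\widehat{\Ee}-\Ee)[\tilde a_s]\,(\widehat{\Ee}-\Ee)[\tilde b_t],
\]
where $\tilde a_s:=a_s-\Ee a_s$ and $\tilde b_t:=b_t-\Ee b_t$ are centered under $P_n$. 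This is the usual V-statistic decomposition into a linear empirical-process term and a product of two centered empirical means.

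I would then bound the two terms separately. For the linear term, independence gives
\[
r_n\Ee|(\widehat{\Ee}-\Ee)[\tilde a_s\tilde b_t]|^2\le \Ee_{P_n}|\tilde a_s\tilde b_t|^2 .
\]
For the product term, Cauchy--Schwarz gives
\[
r_n\Ee|\cdot|^2\le r_n\{\Ee|(\widehat{\Ee}-\Ee)\tilde a_s|^4\}^{1/2}\{\Ee|(\widehat{\Ee}-\Ee)\tilde b_t|^4\}^{1/2}.
\]
Each factor is $O(r_n^{-2})$ times $\Ee|\tilde a_s|^2$ or $\Ee|\tilde b_t|^2$, using the fourth moment of a mean of bounded iid terms: $|\tilde a_s|\le 4$, so fourth moments are at most a constant times second moments. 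Hence the product term is $O(r_n^{-1})$ times $\{\Ee|\tilde a_s|^2\,\Ee|\tilde b_t|^2\}^{1/2}$. By Fubini, $\Ee[r_n\|\widehat{\mathscr{C}}_{P_n}-\mathscr{C}_{P_n}\|_{\mathrm{dCov}}^2]$ is then bounded by a constant times the weighted integrals of $\Ee|\tilde a_s\tilde b_t|^2$ and of $\Ee|\tilde a_s|^2\Ee|\tilde b_t|^2$.

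The main obstacle is showing that these integrals are finite uniformly in $n$, and this is where \eqref{eq:process-moment} enters. Near the origin we use $|a_s(X)|\le\min\{2,\|s\|\|X\|\}$ and $|b_t(Z)|\le\min\{2,|t|Z\}$. The product term is easy: $\Ee|\tilde a_s|^2\le 4\Ee\min\{1,\|s\|^2\|X\|^2\}$, and
\[
\int \frac{\min\{1,\|s\|^2\|x\|^2\}}{\|s\|^{d+1}}\,ds=C\|x\|,
\]
so that integral is at most $C\,\Ee\|X\|\,\Ee Z$.

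The joint term needs more care. The bound $|\tilde a_s\tilde b_t|^2\lesssim \min\{1,\|s\|^2(\|X\|^2+\Ee\|X\|^2)\}\min\{1,t^2(Z^2+\Ee Z^2)\}$ is too crude, because it would require $\Ee\|X\|^2Z^2$. Instead I would expand $|\tilde a_s|^2\le 2|a_s|^2+2|\Ee a_s|^2$, and similarly for $\tilde b_t$. The leading cross term then integrates exactly to
\[
\iint\frac{\Ee|a_s(X)|^2|b_t(Z)|^2}{\|s\|^{d+1}t^2}\,ds\,dt=C\,\Ee[\|X\|\,Z],
\]
since $\int|e^{is'x}-1|^2\|s\|^{-d-1}\,ds=c\|x\|$ (the standard Székely--Rizzo identity). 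The remaining cross terms reduce to integrals of the form $\Ee\|X\|\,\Ee Z$ or $\Ee\|X-X'\|\,Z$ by the same identity, after writing $|\Ee a_s|^2\le\Ee|a_s|^2$ via Jensen.

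All of these quantities are bounded by $\sup_n\Ee(\|X\|+Z)^{2}$, which \eqref{eq:process-moment} controls, with room to spare from the extra $\delta$. This yields $\sup_n\Ee[r_n\|\widehat{\mathscr{C}}_{P_n}-\mathscr{C}_{P_n}\|^2_{\mathrm{dCov}}]<\infty$, and Markov's inequality gives \eqref{eq:dcov-process-tight}.

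The point to watch is the centering interaction. The linear-term variance actually involves $\tilde a_s\tilde b_t$, so I would verify that its integral is bounded by the moments above, not by something requiring $\Ee\|X\|^2$. The identity approach makes this routine.
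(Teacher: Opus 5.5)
Your decomposition (a centered linear term minus the product of the centered marginal empirical characteristic functions) is exactly the paper's. Your treatment of the linear term is correct and more explicit than the paper's: the variance bound, the expansion of $|\tilde a_s|^2$ and $|\tilde b_t|^2$, and the identity $\int|e^{is'x}-1|^2\|s\|^{-d-1}\,ds=2c_d\|x\|$.

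The gap is in the product term. Write $U(s):=\widehat{\phi}_X(s)-\phi_X(s)$ and $V(t):=\widehat{\phi}_Z(t)-\phi_Z(t)$. Pointwise Cauchy--Schwarz combined with your fourth-moment bound $\Ee|U(s)|^4\leq Cr_n^{-2}\,\Ee|\tilde a_s|^2$ yields only
\[
r_n\Ee|U(s)V(t)|^2\leq Cr_n^{-1}\bigl\{\Ee|\tilde a_s|^2\,\Ee|\tilde b_t|^2\bigr\}^{1/2}.
\]
In the next sentence you integrate $\Ee|\tilde a_s|^2\,\Ee|\tilde b_t|^2$ instead of its square root. The square root is not integrable against the dCov weight. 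For nondegenerate $X$ and $Z$, $\Ee|\tilde a_s|^2=1-|\phi_X(s)|^2\asymp\|s\|^2$ and $\Ee|\tilde b_t|^2\asymp t^2$ near the origin. Your bound therefore behaves like $\|s\|\,|t|$ there, and its weighted integral contains the factor $\int_{\|s\|\leq1}\|s\|^{-d}\,ds\cdot\int_{|t|\leq1}|t|^{-1}\,dt=\infty$. As written, the bound you feed into Markov's inequality is infinite.

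The step is easy to repair. The summands $\tilde a_s(X_i)$ and $\tilde b_t(Z_i)$ are centered and independent across $i$, so only paired indices survive in $\Ee[U\overline{U}V\overline{V}]$. An exact count gives
\[
\Ee|U(s)V(t)|^2\leq r_n^{-3}\,\Ee|\tilde a_s\tilde b_t|^2+3r_n^{-2}\,\Ee|\tilde a_s|^2\,\Ee|\tilde b_t|^2,
\]
whose weighted integral is $O(r_n^{-2})$ by exactly the integrals you already bounded.

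The paper avoids pointwise bounds for this term altogether. For each realization the squared norm factorizes:
\[
\|UV\|_{\mathrm{dCov}}^{2}=\Bigl\{\frac{1}{c_d}\int_{\R^d}\frac{|U(s)|^2}{\|s\|^{d+1}}\,ds\Bigr\}\Bigl\{\frac{1}{c_1}\int_{\R}\frac{|V(t)|^2}{t^2}\,dt\Bigr\}.
\]
The Sz\'ekely--Rizzo--Bakirov identity gives the two factors expectations $r_n^{-1}\Ee\|X-X'\|$ and $r_n^{-1}\Ee|Z-Z'|$, where $X',Z'$ are independent copies. Markov's inequality applied to each factor separately then yields $r_n\|UV\|_{\mathrm{dCov}}^2=O_p(r_n^{-1})$, without needing any joint moment of $U$ and $V$. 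With either fix your argument goes through.
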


\begin{proof} Write $r:=r_{n}$, and let $\widehat{\phi}_{X,P_{n}}$
and $\widehat{\phi}_{Z,P_{n}}$ be the empirical marginal characteristic
functions from these $r$ draws. Put $a_{i}(s):=e^{is'X_{i}}-\phi_{X,P_{n}}(s)$
and $b_{i}(t):=e^{itZ_{i}}-\phi_{Z,P_{n}}(t)$. Expanding empirical
characteristic functions gives the exact identity 
\[
\widehat{\mathscr{C}}_{P_{n}}-\mathscr{C}_{P_{n}}=\frac{1}{r}\sum_{i=1}^{r}\{a_{i}(s)b_{i}(t)-\Ee[a_{i}(s)b_{i}(t)]\}-\{\widehat{\phi}_{X,P_{n}}(s)-\phi_{X,P_{n}}(s)\}\{\widehat{\phi}_{Z,P_{n}}(t)-\phi_{Z,P_{n}}(t)\}.
\]
The bound $|e^{iu}-1|\leq\min\{2,|u|\}$ and the moment assumption
give a uniform bound on the expected squared norm of the first term
after multiplication by $r$. For the second term, the squared norm
factorizes into two ordinary integrals. The distance identity of \citet{SzekelyRizzoBakirov2007}
shows its expected contribution after multiplication by $r$ tends
to zero. This proves \eqref{eq:dcov-process-tight}. \end{proof}

\subsubsection{Consistency proof}

\noindent\textbf{Proof of Theorem~\ref{thm:global-power}.}

\begin{proof}For the global test, the selected pairs are iid conditional
on $N_{n}$. Lemma~\ref{lem:dcov-triangular} and $N_{n}/K_{n}\to_{p}1$
give 
\[
N_{n}\|\widehat{\mathscr{C}}_{n}-\mathscr{C}_{n}^{A}\|_{\mathrm{dCov}}^{2}=O_{p}(1),
\]
where $\mathscr{C}_{n}^{A}(s,t):=\phi_{n}^{A}(s,t)-\phi_{X,n}^{A}(s)\phi_{Z,n}^{A}(t)$.
the $\phi^{A}$ symbols are the joint and marginal characteristic
functions of the selected pair under the alternative. Applying the
usual triangle inequality to the square roots of the squared norms
shows 
\[
\sqrt{N_{n}\widehat{\Delta}_{n}}\geq\sqrt{N_{n}\Delta_{n}^{A}}-O_{p}(1)\to_{p}\infty.
\]

The permutation critical value remains bounded on this scale. If $\bar{d}_{X,n}$
and $\bar{d}_{Z,n}$ are the average pairwise distances within the
selected covariates and log exceedances, a sampling-without-replacement
calculation gives 
\begin{equation}
\Ee_{\pi}[N_{n}\widehat{\Delta}_{n}^{\pi}\mid\text{selected sample}]=\frac{N_{n}}{N_{n}-1}\bar{d}_{X,n}\bar{d}_{Z,n}=O_{p}(1).\label{eq:perm-mean-dcov}
\end{equation}
Markov's inequality gives $N_{n}c_{n,1-\eta}^{\pi}=O_{p}(1)$, proving
\eqref{eq:fixed-consistency}.

For the locally balanced test, condition on $\mathcal{G}_{n}$ and
let $a_{r,n}(s):=e^{is'X_{r,n}^{\mathrm{LB}}}-\bar{\phi}_{X,n}^{\mathrm{LB}}(s)$
and $\varepsilon_{r,n}(t):=e^{itZ_{r,n}^{\mathrm{LB}}}-\varphi_{r,n}^{A}(t)$.
The marks $\varepsilon_{r,n}$ are independent and centered given
$\mathcal{G}_{n}$, and 
\begin{equation}
\widehat{\mathscr{C}}_{n}^{\mathrm{LB}}-\mathscr{C}_{n}^{\mathrm{LB},A}=K_{n}^{-1}\sum_{r=1}^{K_{n}}a_{r,n}\varepsilon_{r,n}.\label{eq:local-balanced-process-decomp}
\end{equation}
The groupwise moment condition \eqref{eq:local-balanced-process-moment}
and $|e^{iu}-1|\leq\min\{2,|u|\}$ bound the conditional expectation
of the integrated squared sum. Hence 
\[
K_{n}\|\widehat{\mathscr{C}}_{n}^{\mathrm{LB}}-\mathscr{C}_{n}^{\mathrm{LB},A}\|_{\mathrm{dCov}}^{2}=O_{p}(1).
\]
The triangle inequality now gives $\sqrt{K_{n}\widehat{\Delta}_{n}^{\mathrm{LB}}}\geq\sqrt{K_{n}\Delta_{n}^{\mathrm{LB},A}}-O_{p}(1)\to_{p}\infty$.
Condition \eqref{eq:local-balanced-process-moment} also bounds the
average pairwise distances in probability. The calculation in \eqref{eq:perm-mean-dcov},
with $K_{n}$ replacing $N_{n}$, gives $\Ee_{\pi}[K_{n}\widehat{\Delta}_{n}^{\mathrm{LB},\pi}\mid\text{pooled sample}]=O_{p}(1)$.
Markov's inequality yields $K_{n}c_{n,1-\eta}^{\mathrm{LB},\pi}=O_{p}(1)$
and proves \eqref{eq:local-balanced-consistency}. \end{proof}

\subsubsection{Global power with an order-statistic threshold}

\label{app:global-orderstat-power}

\begin{corollary}[Global power with an order-statistic threshold]\label{cor:global-orderstat-power}
Suppose \eqref{eq:tail-model} and Assumption~\ref{ass:hall} hold
under a fixed alternative or the local sequence \eqref{eq:local-alt},
$0<\Ee[\kappa(X)]<\infty$, and $Y$ has a continuous upper tail.
Let $K_{n}\to\infty$, $K_{n}/n\to0$, $\widehat{w}_{n}:=Y_{n-K_{n}:n}$,
and choose $u_{n}$ so that $\Pr_{A}(Y>u_{n})=K_{n}/n$. Put $\mathcal{W}_{n}=[u_{n}/2,2u_{n}]$
and assume, for some $\delta>0$, 
\begin{equation}
\sup_{n}\sup_{w\in\mathcal{W}_{n}}\Ee_{A}\!\left[(\|X\|+\log(Y/w))^{2+\delta}\mid Y>w\right]<\infty.\label{eq:orderstat-power-moments}
\end{equation}
Let $\Delta_{n}^{A}(w):=\dCov^{2}(X,\log(Y/w)\mid Y>w)$ and compute
the test from the $K_{n}$ observations above $\widehat{w}_{n}$. 
\begin{enumerate}
\item[(i)] Under a fixed alternative, 
\begin{equation}
K_{n}\Delta_{n}^{A}(\widehat{w}_{n})\to_{p}\infty\quad\Longrightarrow\quad\Pr_{A}\{\widehat{\Delta}_{n}>c_{n,1-\eta}^{\pi}\}\to1.\label{eq:orderstat-fixed-power}
\end{equation}
It is sufficient that $K_{n}\inf_{w\in\mathcal{W}_{n}}\Delta_{n}^{A}(w)\to\infty$.
Alternatively, $K_{n}\Delta_{n}^{A}(u_{n})\to\infty$ suffices if
$\Delta_{n}^{A}(\widehat{w}_{n})/\Delta_{n}^{A}(u_{n})\to_{p}1$. 
\item[(ii)] Under \eqref{eq:local-alt}, if 
\begin{equation}
\sqrt{K_{n}}\,u_{n}^{-\underline{\beta}}\to0,\qquad\log u_{n}/\sqrt{K_{n}}\to0,\label{eq:orderstat-local-power-rates}
\end{equation}
then 
\begin{equation}
K_{n}\widehat{\Delta}_{n}\Longrightarrow T_{G}(c).\label{eq:orderstat-local-power}
\end{equation}
If $q_{G,1-\eta}$ is unique and is a continuity point of the distributions
of both $T_{G}(0)$ and $T_{G}(c)$, then $K_{n}c_{n,1-\eta}^{\pi}\to_{p}q_{G,1-\eta}$
and the rejection probability converges to $\Pr\{T_{G}(c)>q_{G,1-\eta}\}$. 
\end{enumerate}
\end{corollary}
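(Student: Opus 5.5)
The approach is to reduce the order-statistic threshold to deterministic thresholds in the window $\mathcal{W}_{n}=[u_{n}/2,2u_{n}]$, as in the proof of Proposition~\ref{prop:permutation}. Under the alternative, or the local sequence, the regular variation of $\bar{F}_{A}$ near $u_{n}$ gives $n\Pr_{A}(Y>2u_{n})/K_{n}$ bounded below by a constant smaller than one and $n\Pr_{A}(Y>u_{n}/2)/K_{n}$ bounded above by a constant larger than one. Binomial concentration at the two endpoints then yields $\Pr_{A}\{\widehat{w}_{n}\in\mathcal{W}_{n}\}\to1$. For a fixed alternative the tail is a mixture $\Ee[\kappa(X)w^{-\alpha(X)}L]$, which is no longer exactly regularly varying. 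Even so, $\bar F(2u)/\bar F(u)\le 2^{-\inf\alpha}$ and $\bar F(u/2)/\bar F(u)\ge 2^{\inf\alpha}$, and that suffices.

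The key structural fact is this. Conditional on the order statistic $Y_{(n-K_{n})}=w$, the $K_{n}$ observations above it are iid from the law of $(X,Y)$ given $Y>w$. This follows from upper-tail continuity and the standard order-statistic conditioning argument, since for iid data the top $K_{n}$ given the $(K_n+1)$th largest value are iid truncated. Hence, conditional on $\widehat{w}_{n}=w\in\mathcal{W}_{n}$, the selected pairs $(X_{i},\log(Y_{i}/w))$ are $K_{n}$ iid draws from $P_{n,w}$, the law with squared dCov $\Delta_{n}^{A}(w)$. The moment bound \eqref{eq:orderstat-power-moments} holds uniformly over $w\in\mathcal{W}_{n}$.

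\textbf{Part (i).} I will apply Lemma~\ref{lem:dcov-triangular} conditionally. Its constants depend only on the moment bound, so it gives $K_{n}\|\widehat{\mathscr{C}}_{n}-\mathscr{C}_{P_{n,w}}\|_{\mathrm{dCov}}^{2}=O_{p}(1)$ uniformly over $w\in\mathcal{W}_{n}$. The reverse triangle inequality then gives $\sqrt{K_{n}\widehat{\Delta}_{n}}\ge\sqrt{K_{n}\Delta_{n}^{A}(\widehat{w}_{n})}-O_{p}(1)\to_{p}\infty$. The permutation critical value is handled exactly as in \eqref{eq:perm-mean-dcov}: $\Ee_{\pi}[K_{n}\widehat{\Delta}_{n}^{\pi}]=\frac{K_{n}}{K_{n}-1}\bar d_{X,n}\bar d_{Z,n}$. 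The average distances are $O_{p}(1)$ by the uniform moment bound (note that $\bar d_{Z,n}$ does not depend on $\widehat w_n$ by \eqref{eq:orderstat-distance-cancel}). Markov's inequality then gives $K_{n}c^{\pi}_{n,1-\eta}=O_{p}(1)$. The two sufficient conditions are immediate. The first holds because $\Delta_{n}^{A}(\widehat{w}_{n})\ge\inf_{\mathcal{W}_{n}}\Delta^{A}_{n}$ on the high-probability event. The second holds by the ratio assumption.

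\textbf{Part (ii).} Here I will invoke Theorem~\ref{thm:local-power}, equation \eqref{eq:sample-local-power}, for deterministic thresholds, uniformly in $w\in\mathcal{W}_{n}$, and then pass to $\widehat{w}_{n}$ by conditioning. I expect this to be the main obstacle. Under \eqref{eq:local-alt} the conditional mark law is $e^{-\alpha_{n}(x)z}$ with $\alpha_n-\alpha_0=cg/\sqrt{K_n}$. The first-order shift $H_{g,f}$ must arise identically for every $w$ in the window. The selected design density is $\propto\kappa(x)w^{-\delta_n g(x)}f_X(x)$, so the design tilt is $w^{-\delta_{n}g(x)}=1-\delta_{n}g(x)\log w+\dots$. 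The condition $\log u_{n}/\sqrt{K_{n}}\to0$ makes this tilt $o(1)$ uniformly over the window (since $\log w=\log u_n+O(1)$), so the design converges to $f_{X}^{0}$ uniformly. Separately, $\sqrt{K_{n}}u_{n}^{-\underline\beta}\to0$ kills the Hall drift uniformly, via the $\chi^2$ bound of Proposition~\ref{prop:permutation} with $(u_n/2)^{-\underline\beta}$. The remaining mark shift is of order $K_{n}^{-1/2}$ and is $w$-free to first order, so $\sqrt{K_{n}}\mathscr{C}_{P_{n,w}}\to cH_{g,f_X^0}$ in the dCov norm, uniformly over $w$. Combined with a conditional CLT for $\sqrt{K_n}(\widehat{\mathscr C}_n-\mathscr C_{P_{n,w}})$ (Lindeberg via the $2+\delta$ moments, uniform in $w$), this gives a conditional limit $T_G(c)$ that does not depend on $w$. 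Integrating over $\widehat w_n$ yields \eqref{eq:orderstat-local-power}.

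For the critical value, the same conditioning shows that the permutation distribution of $K_n\widehat\Delta_n^\pi$ converges to the law of $T_G(0)$. This follows by the standard permutation CLT for dCov with iid selected pairs whose marginals converge (Exp$(\alpha_0)$ marks and $f_X^0$ design). The assumptions that $q_{G,1-\eta}$ is unique and is a continuity point then give $K_nc^\pi_{n,1-\eta}\to_p q_{G,1-\eta}$, and Slutsky's lemma gives the rejection probability $\Pr\{T_G(c)>q_{G,1-\eta}\}$.
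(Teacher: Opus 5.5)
Your proposal is correct and follows the paper's proof essentially step for step. The shared steps are: containment of $\widehat{w}_{n}$ in $\mathcal{W}_{n}=[u_{n}/2,2u_{n}]$ by binomial concentration; the conditional iid structure of the retained pairs given $\widehat{w}_{n}=w$; Lemma~\ref{lem:dcov-triangular} applied uniformly over the window, with the permutation-mean bound and Markov's inequality, for part (i); and uniform local-shift and permutation limits followed by conditioning on $\widehat{w}_{n}$ for part (ii).

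Your monotonicity bound $\bar{F}(2u)/\bar{F}(u)\leq2^{-\inf\alpha}\{1+o(1)\}$ is a slightly more elementary substitute for the paper's limit $\Pr_{A}(Y>tu_{n})/\Pr_{A}(Y>u_{n})\to t^{-\inf\alpha}$. However, your opening sentence states the endpoint bounds in the wrong direction. You need $n\Pr_{A}(Y>2u_{n})/K_{n}$ bounded \emph{above} by a constant below one, and $n\Pr_{A}(Y>u_{n}/2)/K_{n}$ bounded \emph{below} by a constant above one. The inequalities you give next do supply exactly this.
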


\noindent\textbf{Proof of Corollary~\ref{cor:global-orderstat-power}.}

\begin{proof} Under a fixed alternative, if $a:=\inf\alpha(X)$,
the Hall expansion gives $\Pr_{A}(Y>tu_{n})/\Pr_{A}(Y>u_{n})\to t^{-a}$.
Under \eqref{eq:local-alt}, the same ratio tends to $t^{-\alpha_{0}}$.
Hence binomial concentration implies $\widehat{w}_{n}/u_{n}\to_{p}1$
and $\Pr_{A}(\widehat{w}_{n}\in\mathcal{W}_{n})\to1$.

Conditional on $\widehat{w}_{n}=w$ and on the selected identities,
the retained pairs are iid from 
\[
P_{n}(w):=\mathcal{L}\{(X,\log(Y/w))\mid Y>w\}.
\]
The moment bound makes Lemma~\ref{lem:dcov-triangular} uniform over
$\mathcal{W}_{n}$. This gives 
\[
K_{n}\|\widehat{\mathscr{C}}_{n}-\mathscr{C}_{n,\widehat{w}_{n}}^{A}\|_{\mathrm{dCov}}^{2}=O_{p}(1),\qquad K_{n}c_{n,1-\eta}^{\pi}=O_{p}(1),
\]
so the triangle inequality proves part (i).

For part (ii), \eqref{eq:orderstat-local-power-rates} and the Hall
expansion hold uniformly over $w\in\mathcal{W}_{n}$. The characteristic-function
calculation in the proof of Theorem~\ref{thm:local-power} therefore
yields 
\[
\sup_{w\in\mathcal{W}_{n}}\|\sqrt{K_{n}}\mathscr{C}_{n,w}^{A}-cH_{g,f_{X}^{0}}\|_{\mathrm{dCov}}^{2}\to0.
\]
The same uniform moment bound gives the centered-process and permutation
limits of Lemma~\ref{lem:dcov-local-shift}. Conditioning on $\widehat{w}_{n}$
proves part (ii). \end{proof}

\section{Local-Power Details and Proof}

\label{app:local-power}

This appendix gives the additional rate and design conditions, process
definitions, and proof for Theorem~\ref{thm:local-power}. The argument
also covers $c=0$, which yields the explicit null limits stated in
the main text.

\subsection{Setup and limit}

\label{app:local-power-setup}

Assume the Hall and relevant moment conditions hold uniformly along
\eqref{eq:local-alt}. For the global test, require 
\[
\sqrt{K_{n}}w_{n}^{-\underline{\beta}}\to0,\qquad\log w_{n}/\sqrt{K_{n}}\to0.
\]
For the locally balanced test, impose \eqref{eq:local-balanced-process-moment}
uniformly, assume $g$ is continuous outside a set of zero limiting
probability, and require 
\begin{equation}
\sqrt{K_{n}}\max_{j=1,2}\widehat{w}_{n,j}^{-\underline{\beta}}\to_{p}0,\qquad\max_{j=1,2}\log\widehat{w}_{n,j}/\sqrt{K_{n}}\to_{p}0.\label{eq:local-balanced-power-rates}
\end{equation}
Suppose $K_{n,j}/K_{n}\to\rho_{j}>0$, $\rho_{1}+\rho_{2}=1$, and
\begin{equation}
K_{n}^{-1}\sum_{r=1}^{K_{n}}v(X_{r,n}^{\mathrm{LB}})\to_{p}\int v(x)f_{X}^{\mathrm{LB}}(x)\,dx\label{eq:local-balanced-design-limit}
\end{equation}
for every bounded continuous $v$, with uniform integrability of order
$2+\delta$. If the learned groups converge to fixed limits $\mathcal{C}_{1},\mathcal{C}_{2}$,
then 
\[
f_{X}^{\mathrm{LB}}(x)=\sum_{j=1}^{2}\rho_{j}\frac{\kappa(x)f_{X}(x)\ind\{x\in\mathcal{C}_{j}\}}{\int_{\mathcal{C}_{j}}\kappa(v)f_{X}(v)\,dv},
\]
equal allocation gives $\rho_{1}=\rho_{2}=1/2$.

For a covariate density $f$, let $\phi_{f}(s)=\int e^{is'x}f(x)\,dx$,
$\phi_{0}(t)=\alpha_{0}/(\alpha_{0}-it)$, and 
\[
H_{g,f}(s,t):=-\frac{it}{(\alpha_{0}-it)^{2}}\left\{ \int e^{is'x}g(x)f(x)\,dx-\phi_{f}(s)\int g(x)f(x)\,dx\right\} .
\]
Let $G_{f}$ denote the centered Gaussian limit of the empirical characteristic-function
process for independent $X\sim f$ and $Z\sim\Exp(\alpha_{0})$, as
in \citet[Theorem~5]{SzekelyRizzoBakirov2007}. With the dCov norm
introduced in Appendix~\ref{app:power-analysis}, these objects give
the main-text limits 
\[
T_{f}(c)=\|G_{f}+cH_{g,f}\|_{\mathrm{dCov}}^{2},\qquad T_{G}(c)=T_{f_{X}^{0}}(c),\qquad T_{\mathrm{LB}}(c)=T_{f_{X}^{\mathrm{LB}}}(c).
\]

\begin{lemma}[Local-shift and permutation limits]\label{lem:dcov-local-shift}
Suppose the conditions of Lemma~\ref{lem:dcov-triangular} hold,
$P_{n}$ converges to independent $X\sim f$ and $Z\sim\Exp(\alpha_{0})$,
and $\|\sqrt{r_{n}}\mathscr{C}_{P_{n}}-h\|_{\mathrm{dCov}}^{2}\to0$.
Then 
\begin{equation}
r_{n}\|\widehat{\mathscr{C}}_{P_{n}}\|_{\mathrm{dCov}}^{2}\Longrightarrow\|G_{f}+h\|_{\mathrm{dCov}}^{2}.\label{eq:dcov-shift-lemma}
\end{equation}
If the empirical marginals converge with the same moment bound, the
permutation distribution converges to that of $\|G_{f}\|_{\mathrm{dCov}}^{2}$.
\end{lemma}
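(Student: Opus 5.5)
The plan is to decompose the empirical process into its mean and a centered fluctuation, and treat the two parts separately:
\[
\sqrt{r_n}\,\widehat{\mathscr{C}}_{P_n}=\sqrt{r_n}\,\mathscr{C}_{P_n}+\sqrt{r_n}\,(\widehat{\mathscr{C}}_{P_n}-\mathscr{C}_{P_n}).
\]
By hypothesis the first term converges to $h$ in the dCov norm. Once the second term converges weakly to $G_f$ in the Hilbert space $L^2(w)$, with weight $w(s,t)=(c_dc_1)^{-1}\|s\|^{-d-1}|t|^{-2}$, the continuous mapping theorem for the squared norm gives \eqref{eq:dcov-shift-lemma}. Slutsky's lemma in the Hilbert space absorbs the deterministic shift.

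For the fluctuation, I use the exact identity from the proof of Lemma~\ref{lem:dcov-triangular}. After multiplying by $\sqrt{r_n}$, the product-of-marginal-errors term is $o_p(1)$ in norm, as that proof already shows. It therefore remains to show that the triangular-array average
\[
r_n^{-1/2}\sum_i\{a_i(s)b_i(t)-\Ee[a_ib_i]\}
\]
of iid mean-zero $L^2(w)$-valued elements converges to $G_f$. I would proceed in three steps.
\begin{enumerate}
\item[(i)] \emph{Finite-dimensional convergence.} Use the Lindeberg--Feller CLT for triangular arrays. The summands are bounded by $4$ pointwise, and their covariance kernels converge to those of $G_f$ because $P_n\Rightarrow f\otimes\Exp(\alpha_0)$ and the covariance kernel is a bounded continuous functional of the characteristic functions.
\item[(ii)] \emph{Tightness.} Truncate to $D_\epsilon=\{\epsilon\le\|s\|\le1/\epsilon,\ \epsilon\le|t|\le1/\epsilon\}$. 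On the complement, the bound $|a_ib_i|\le C\min\{1,\|s\|\|X_i\|\}\min\{1,|t|Z_i\}$ (up to centering) together with the $(2+\delta)$ moment bound \eqref{eq:process-moment} gives
\[
\sup_n\Ee\|\cdot\ind_{D_\epsilon^c}\|^2_{\mathrm{dCov}}\to0\quad\text{as }\epsilon\to0.
\]
\item[(iii)] \emph{Convergence on the compact region.} On $D_\epsilon$ the weight is bounded. The summands are equicontinuous, because their gradients in $(s,t)$ are controlled by $\|X\|+Z$, which has bounded moments. Hence the process converges weakly in $C(D_\epsilon)$, and so in $L^2(D_\epsilon,w)$.
\end{enumerate}
A standard approximation lemma (Billingsley, Thm.~3.2) then combines the truncated limits with the vanishing tails. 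This is the same route as Theorem~5 of \citet{SzekelyRizzoBakirov2007}, adapted to a triangular array.

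For the permutation statement, the first observation is that permuting the $Z$'s makes the permuted pairs behave like draws from the product of the two empirical marginals. The process $\sqrt{r_n}\,\widehat{\mathscr{C}}^\pi$ is then a sampling-without-replacement (rank/permutation) process built on the fixed marginal samples. Conditionally on the data, I would apply the combinatorial CLT of Hoeffding type to each finite-dimensional projection. Its conditional covariance is
\[
\bigl(\widehat\phi_X(s-s')-\widehat\phi_X(s)\overline{\widehat\phi_X(s')}\bigr)\bigl(\widehat\phi_Z(t-t')-\widehat\phi_Z(t)\overline{\widehat\phi_Z(t')}\bigr),
\]
which converges in probability to the $G_f$ kernel because the empirical marginals converge. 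The tail truncation in (ii) is repeated conditionally using \eqref{eq:perm-mean-dcov}-type second-moment calculations: the conditional expected squared norm over $D_\epsilon^c$ is bounded by empirical averages of $\min\{1,\|s\|\|X\|\}^2$-type quantities, which are uniformly small by the assumed moment bound. This yields conditional weak convergence in probability of $r_n\|\widehat{\mathscr{C}}^\pi\|^2_{\mathrm{dCov}}$ to $\|G_f\|^2_{\mathrm{dCov}}$, which is the claim.

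The main obstacle is the uniform control of the singular weight near $s=0$, $t=0$ and at infinity in the triangular-array setting, for both the sampling and the permutation processes. I would handle this by the truncation in (ii) with explicit moment-based bounds, so that everything reduces to bounded-region arguments.
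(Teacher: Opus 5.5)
Your proposal is correct and follows essentially the paper's route, with more detail than the paper's sketch:
\begin{itemize}
\item a CLT for the centered empirical characteristic-function process on bounded frequency sets;
\item moment control of the neighborhoods of zero and infinity, using the identity and bounds from Lemma~\ref{lem:dcov-triangular};
\item the deterministic shift, handled by continuity of the squared norm;
\item a sampling-without-replacement (Hoeffding-type) CLT for the permuted process.
\end{itemize}

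One step in (iii) needs a small repair. Lipschitz summands alone do not make the $\sqrt{r_n}$-normalized sum tight in $C(D_\epsilon)$. You only need tightness in $L^2(D_\epsilon,w)$, however. For the normalized centered sum $S_n$, this follows from the mean-square modulus bound $\Ee\|S_n(\cdot+u)-S_n(\cdot)\|_{L^2(D_\epsilon)}^2\le C\|u\|^2$. If you do want convergence in $C(D_\epsilon)$, use a bracketing argument for the Lipschitz-in-parameter class instead.
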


\begin{proof} Apply the multivariate CLT to the centered empirical
characteristic-function process on bounded frequency sets. Lemma~\ref{lem:dcov-triangular}
and the same moment bounds control the neighborhoods of zero and the
frequency tails, while the deterministic shift gives \eqref{eq:dcov-shift-lemma}.
Conditional sampling-without-replacement gives the same Gaussian limit
for the permuted process. \end{proof}

\noindent\textbf{Proof of Theorem~\ref{thm:local-power}.}

\begin{proof} For the global test, let $f_{X,n}^{0}$ and $f_{X,n}^{c}$
be the densities of $X\mid Y>w_{n}$ under the null and under \eqref{eq:local-alt},
respectively. Bayes' formula gives 
\[
\frac{f_{X,n}^{c}(x)}{f_{X,n}^{0}(x)}=\frac{e^{-\delta_{n}g(x)\log w_{n}}}{\int e^{-\delta_{n}g(v)\log w_{n}}f_{X,n}^{0}(v)\,dv}.
\]
Because $g$ is bounded and $\delta_{n}\log w_{n}\to0$, the ratio
converges uniformly to one. The Hall expansion also gives $f_{X,n}^{0}\to f_{X}^{0}$
in $L^{1}$, and hence $f_{X,n}^{c}\to f_{X}^{0}$ in $L^{1}$.

Put $\phi_{\alpha}(t):=\alpha/(\alpha-it)$ and $\dot{\phi}_{0}(t):=\partial_{\alpha}\phi_{\alpha}(t)|_{\alpha=\alpha_{0}}=-it/(\alpha_{0}-it)^{2}$.
Let 
\[
\psi_{n}(x,t):=\Ee_{n,c}[e^{itZ^{*}}\mid X=x,Y>w_{n}].
\]
The Hall expansion, a first-order Taylor expansion of $\phi_{\alpha_{n}(x)}(t)$
around $\alpha_{0}$, and $\sqrt{K_{n}}w_{n}^{-\underline{\beta}}\to0$
give 
\[
\psi_{n}(x,t)=\phi_{0}(t)+\delta_{n}g(x)\dot{\phi}_{0}(t)+r_{n}(x,t),
\]
where the contribution of $r_{n}$ to the centered characteristic-function
process is $o(K_{n}^{-1/2})$ in the dCov norm. Since 
\[
\mathscr{C}_{n}^{A}(s,t)=\Ee_{n,c}\!\left[\{e^{is'X}-\phi_{X,n}^{c}(s)\}\psi_{n}(X,t)\,\middle|\,Y>w_{n}\right],
\]
the common term $\phi_{0}(t)$ cancels. Therefore 
\begin{align*}
\sqrt{K_{n}}\mathscr{C}_{n}^{A}(s,t)={} & c\,\dot{\phi}_{0}(t)\left\{ \int e^{is'x}g(x)f_{X,n}^{c}(x)\,dx-\phi_{X,n}^{c}(s)\int g(x)f_{X,n}^{c}(x)\,dx\right\} +R_{n}(s,t),
\end{align*}
where $\|R_{n}\|_{\mathrm{dCov}}\to0$. The $L^{1}$ convergence of
$f_{X,n}^{c}$, boundedness of $g$, and the moment condition then
imply 
\[
\|\sqrt{K_{n}}\mathscr{C}_{n}^{A}-cH_{g,f_{X}^{0}}\|_{\mathrm{dCov}}^{2}\to0.
\]
Also, the selected-pair distribution converges to the product of $X\sim f_{X}^{0}$
and $Z\sim\Exp(\alpha_{0})$. Conditional on $N_{n}$ the selected
pairs are iid, and $N_{n}/K_{n}\to_{p}1$. Lemma~\ref{lem:dcov-local-shift},
followed by Slutsky's theorem, therefore gives \eqref{eq:sample-local-power}.
Its permutation part gives convergence of the conditional permutation
distribution to that of $T_{G}(0)$. When $q_{G,1-\eta}$ is unique
and is a continuity point of that distribution, 
\[
K_{n}c_{n,1-\eta}^{\pi}\to_{p}q_{G,1-\eta}.
\]

For the locally balanced test, condition on $\mathcal{G}_{n}$. The
retained covariates are then fixed and the retained exceedance marks
are independent. The Hall expansion and \eqref{eq:local-balanced-power-rates}
give, uniformly in the weighted integrals defining the dCov norm,
\[
\varphi_{r,n}^{A}(t)=\phi_{0}(t)+\frac{c}{\sqrt{K_{n}}}g(X_{r,n}^{\mathrm{LB}})\dot{\phi}_{0}(t)+r_{r,n}(t),
\]
where the pooled contribution of the remainders is $o_{p}(K_{n}^{-1/2})$
in the dCov norm. Because $\sum_{r=1}^{K_{n}}\{e^{is'X_{r,n}^{\mathrm{LB}}}-\bar{\phi}_{X,n}^{\mathrm{LB}}(s)\}=0$,
the common $\phi_{0}$ term cancels exactly in \eqref{eq:local-balanced-signal}.
Hence 
\begin{align*}
\sqrt{K_{n}}\mathscr{C}_{n}^{\mathrm{LB},A}(s,t)={} & c\,\dot{\phi}_{0}(t)\bigg[K_{n}^{-1}\sum_{r=1}^{K_{n}}e^{is'X_{r,n}^{\mathrm{LB}}}g(X_{r,n}^{\mathrm{LB}})\\
 & \hspace{32mm}-\bar{\phi}_{X,n}^{\mathrm{LB}}(s)K_{n}^{-1}\sum_{r=1}^{K_{n}}g(X_{r,n}^{\mathrm{LB}})\bigg]+R_{n}^{\mathrm{LB}}(s,t),
\end{align*}
where $\|R_{n}^{\mathrm{LB}}\|_{\mathrm{dCov}}\to_{p}0$. By \eqref{eq:local-balanced-design-limit},
boundedness of $g$, and its assumed continuity outside a set of zero
limiting probability, 
\[
\|\sqrt{K_{n}}\mathscr{C}_{n}^{\mathrm{LB},A}-cH_{g,f_{X}^{\mathrm{LB}}}\|_{\mathrm{dCov}}^{2}\to_{p}0.
\]

Next use the decomposition in \eqref{eq:local-balanced-process-decomp}.
Conditional on $\mathcal{G}_{n}$, the Lindeberg--Feller CLT at each
finite set of frequencies, together with \eqref{eq:local-balanced-design-limit},
gives the centered Gaussian limit $G_{f_{X}^{\mathrm{LB}}}$. The
moment bound \eqref{eq:local-balanced-process-moment} controls the
neighborhoods of zero and the frequency tails exactly as in Lemma~\ref{lem:dcov-local-shift}.
The continuous mapping theorem therefore yields \eqref{eq:local-balanced-local-power-stat}.
The same sampling-without-replacement argument used for the permutation
part of Lemma~\ref{lem:dcov-local-shift} gives the conditional permutation
limit $T_{\mathrm{LB}}(0)$ and, under uniqueness and continuity,
\[
K_{n}c_{n,1-\eta}^{\mathrm{LB},\pi}\to_{p}q_{\mathrm{LB},1-\eta}.
\]

For either $a\in\{G,\mathrm{LB}\}$, combine the statistic limit with
the corresponding critical-value limit. If $T_{a}(c)$ is continuous
at $q_{a,1-\eta}$, Slutsky's theorem gives \eqref{eq:local-power-function}.
For $c=0$, this probability equals $\eta$. \end{proof} 
\end{document}